%%%%%%%%%%%%%%%%%%%%%%%%%%%%%%%%%%%%%%%%%%%%%%%%%%%%%%%%%
\documentclass[keeplastbox]{style/vldb}
%%%%%%%%%%%%%%%%%%%%%%%%%%%%%%%%%%%%%%%%%%%%%%%%%%%%%%%%%
\usepackage{amssymb}
\usepackage{amsmath}
\usepackage{blkarray}
\usepackage{color}
\usepackage{times}
\usepackage{cite}
\usepackage{enumitem}
\usepackage{arydshln}
%%%%%%%%%%%%%%%%%%%%%%%%%%%%%%%%%%%%%%%%%%%%%%%%%%%%%%%%%
\usepackage{url}
\usepackage{algorithm}
\usepackage{algorithmic}

\usepackage{flushend}
%%%%%%%%%%%%%%%%%%%%%%%%%%%%%%%%%%%%%%%%%%%%%%%%%%%%%%%%%
\usepackage{subcaption}
\usepackage{balance}
\usepackage{enumitem}

\newcommand{\eat}[1]{}

\newtheorem{theorem}{Theorem} 

\newtheorem{definition}{Definition}
\newtheorem{example}{Example}
\newtheorem{problem}{Problem}

\usepackage{xspace}
\newcommand{\getnext}{{\sc get-next}\xspace}
\newcommand{\getnexttd}{{\sc get-next$_{2D}$}\xspace}
\newcommand{\getnextr}{{\sc get-next$_r$}\xspace}

\newcommand{\raysweeping}{{\sc raysweeping}\xspace}
\newcommand{\getnexttb}{{\sc get-next$_{md}$}\xspace}
\newcommand{\svtd}{{\sc sv$_{2D}$}\xspace}
\newcommand{\sv}{{\sc sv}\xspace}

\newcommand{\stitle}[1]{\vspace{1ex}\noindent{\bf #1}}

%% Useful packages
\usepackage{graphicx}
\usepackage{comment}
\usepackage[colorinlistoftodos]{todonotes}
\usepackage[colorlinks=true, allcolors=blue]{hyperref}

%\begin{comment}
\newcommand\jag[1]{\textcolor{red}{(Jag) #1}}
\newcommand\julia[1]{\textcolor{magenta}{(Julia) #1}} 
\newcommand\abol[1]{\textcolor{orange}{(Abol) #1}}

\newcommand\new[1]{#1}
\newcommand\techrep[1]{#1}%\textcolor{green}{#1}}%\textcolor{orange}{(techrep) #1}}
\newcommand\submit[1]{}%\textcolor{red}{(submit) #1}}
\newcommand\revision[1]{#1}
% \end{comment}

\begin{comment}
\newcommand\jag[1]{#1}
\newcommand\julia[1]{#1} 
\newcommand\abol[1]{#1}

\newcommand\techrep[1]{}
\newcommand\submit[1]{#1}
\end{comment}

\techrep{
\makeatletter
\def\@copyrightspace{\relax}
\makeatother
}

%\newcommand\techrep[1]{\textcolor{gray}{(tech.rep.) #1}}

% Include information below and uncomment for camera ready
\vldbTitle{On Obtaining Stable Rankings}
\vldbAuthors{Abolfazl Asudeh, H. V. Jagadish, Gerome Miklau, Julia Stoyanovich}
\vldbDOI{https://doi.org/10.14778/3291264.3291269}
\vldbVolume{12}
\vldbNumber{3}
\vldbYear{2018}

\title{On Obtaining Stable Rankings\titlenote{\small{This work was supported in part by NSF Grants No. 1741022, 1741254, 1741047, and 1250880.
\techrep{\\ This is the technical report.}
}}}
\author{
	\alignauthor
	Abolfazl Asudeh$^\dag$, H. V. Jagadish$^\dag$, Gerome Miklau$^{\dag\dag}$, Julia Stoyanovich$^\ddag$ \\
	\affaddr {$^\dag$University of Michigan, $^{\dag\dag}$University of Massachusetts Amherst, $^\ddag$New York University}
	{\email{
		$^\dag\{$asudeh, jag$\}$@umich.edu, $^{\dag\dag}$miklau@cs.umass.edu, $^\ddag$stoyanovich@nyu.edu
        }
	}
}

\hypersetup{draft}

\begin{document}
\maketitle
\begin{abstract}
Decision making is challenging when there is more than one criterion to consider.
In such cases, it is common to assign a goodness score to each item as a weighted sum of its attribute values and rank them accordingly.
% It is often necessary to rank items with multiple attributes.
% A typical method to achieve this is to compute a goodness score for each item as a weighted sum of its attribute values, and then to rank by sorting on this score.
Clearly, the ranking obtained depends on the weights used for this summation.
Ideally, one would want the ranked order not to change if the weights are changed slightly. We call this property {\em stability} of the ranking.
A consumer of a ranked list may trust the ranking more if it has high stability.
A producer of a ranked list prefers to choose weights that result in a stable ranking, both to earn the trust of potential consumers and because a stable ranking is intrinsically likely to be more meaningful.

In this paper, we develop a framework that can be used to assess the stability of a provided ranking and to obtain a stable ranking within an ``acceptable'' range of weight values (called ``the region of interest'').
We address the case where the user cares about the rank order of the entire set of items, and also the case where the user cares only about the top-$k$ items.
Using a geometric interpretation, we propose algorithms that produce stable rankings.
% We also develop a randomized algorithm that uses Monte-Carlo estimation. To do so, we first propose an unbiased sampler that draws rankings (or top-$k$ results) uniformly at random from the region of interest.
In addition to theoretical analyses, we conduct extensive experiments on real datasets that validate our proposal.
\end{abstract}

\section{Introduction}
\label{sec:intro}
%\julia{We use, ``ranking'', ``rank order'' and ``ranked order'' to refer to the same concept.  Should we stick to one or two of these?}
It is often useful to rank items in a dataset. It is straightforward to sort on a single attribute, but that is often not enough. When the items have more than one attribute on which they can be compared, it is challenging to place them in ranked order.
Consider, for example, the problem of ranking computer science departments.  Various entities, such as U.S. News and World Report, Times Higher Education, and the National Research Council, produce such rankings.  
These rankings are impactful, yet heavily criticized.
Several of these rankings have deficiencies in the attributes they choose to measure and in their data collection
%\jsref{data collection}{is this what you mean, Jag?} \jag{I meant more, but fixed it now by adding a few words before the jsref} 
methodology, not of relevance to our paper now.
Our concern is that even if these deficiencies were addressed, we are compelled to obtain a single score/rank for a department by combining multiple objective measures, such as publications, citations, funding, and awards.  Different ways of combining values for these attributes can lead to very different rankings.  There are similar problems when we want to rank/seed sports teams, rank order cars or other products, as Malcolm Gladwell has nicely described \cite{Gladwell11Order}.

Differences in rank order can have significant consequences. 
For example, a company may promote high-ranked employees and fire low-ranked employees. %Even in less critical scenarios, such as I
In university rankings, it is well-documented that the ranking formula has a significant effect on policies adopted by universities~\cite{bowman2011anchoring, monks1999impact}.
In other words, it matters how we choose to combine values of multiple attributes into a scoring formula.
%
%\julia{The connection between assigning letter grades and positions in a ranking is very strong: assigning students to letter grades generates a partial order, of which a total order (a ranking) is a special case.}
Even when there is lack of consensus on a specific way to combine attributes, we should make sure that the method we use is robust: it should not be the case that small perturbations, such as small changes in parameter values, can change the rank order. %Imagine you are a professor grading on a curve, with flexibility to choose the threshold scores separating letter grades. Having sorted the students based on score, you will prefer to place each separating threshold where there is a larger gap between successive scores in the approximate score range you consider appropriate.\abol{not sure if this example directly translates to our formulation?}  \julia{I suggest that we remove this example, it exposes us to criticism that we don't handle the problem that this example raises, which, arguably, is more natural.}\abol{agreed} The motivation here is robustness of the letter grade: not to have a student's grade change based on a small difference in score.  While the problem we address in this paper concerns the stability of a ranking and not of a grade, this example illustrates the importance of robustness in cases where decisions are made based on ordered data.

%Our problem in this paper is slightly different--stability of a rank order and not stability of letter grade.  We present the professor grading example just to illustrate the value of stability.

In this paper we address the following problem: Assume that a linear combination of the attribute values is used for assigning a score to each item; then items are sorted to produce a final ranked order.  We want this ranking to be {\em stable} with respect to changes in the weights used in scoring.  Given a particular ranked list of items, one question a consumer will ask is: how robust is the ranking?   If small changes in weights can change the ranked order, then there cannot be much confidence in the correctness of the ranking.  

A given ranking of a set of items can be generated by many weight functions.  Because this set of functions is continuous, we can think of it as forming a region in the space of all possible weight functions.  We call a ranking of items {\em stable} if it is generated by a weight function that corresponds to a large region of this space.  

%\julia{I'd like to change the order of the following two paragraphs: "Note that..." goes after "A given ranking..."}

Note that if some items are very close in score, it is possible that small changes to attribute values can change their relative ordering.  Such effects tend to be local, indicating that the affected items are effectively ``tied'' so that the change in ranking is merely a breaking of the tie.  Past work~\cite{NumberOne} has considered the implications of data uncertainty and sensitivity of rankings to imprecision; it is not our focus here.
Instead, we address a much bigger problem, that of changes in the ranking even without any change to the attribute values, but due to a small change in the weighting function used to compute item scores.  Such global changes can drastically affect the ranked order, with far-reaching economic and societal effects~\cite{Gladwell11Order}. %, as eloquently described in~\cite{Gladwell11Order}.

%It is therefore supported by a large set of functions and \gmref{not affected by small changes to the weights}{this part of the explanation may raise some questions: recall our discussions about weight functions near the boundary.}.

Stability is a natural concern for consumers of a ranked list (i.e. those who use the ranking to prioritize items and make decisions), who should be able to assess the magnitude of the region in the weight space that produces the observed ranking.  If this region is large, then the same ranked order would be obtained for many choices of weights, and the ranking is stable.  But if this region is small, then we know that only a few weight choices can produce the observed ranking.  This may suggest that the ranking was engineered or ``cherry-picked'' by the producer to obtain a specific outcome. 

Data scientists often act as producers of ranked lists (i.e. they design weight functions that result in ranked lists), and desire to produce stable results.  We argued in~\cite{sigmod2018demo} that stability in a ranked output is an important aspect of algorithmic transparency, because it allows the producer to justify their ranking methodology, and to gain the trust of consumers.  % Our goal is to help producers of ranked lists to devise stable rankings.  
Of course, stability cannot be the only criterion in the choice of a ranking function: the result may be weights that seem ``unreasonable'' to the ranking producer.  To support the producer, we allow them to specify a range of reasonable weights, or an {\em acceptable region} in the space of functions, and help the producer find stable rankings within this region. 

%\julia{I'd like to bring up transparency here, which concerns (a) making explicit and (b) justifying the ranking methodology.  Stability is an important component of this justification.}

Our work is motivated by the lack of formal understanding of ranking stability and the consequent lack of tools for consumers and producers to assess this critical property of rankings.  We will show that stability hinges on complex geometric properties of rankings and weight functions.  We will provide a novel technique to identify stable rankings efficiently.

Our technique 
%the \gmref{system}{is system the right word? I'm not sure we're proposing a system. Perhaps instead techniques? tools?}
%\jag{Let's call it "technique"}
does not stop at proposing just the single most stable choice, or even the $h$ most stable choices for some pre-determined fixed value of $h$.  Rather, it will continue to propose weight choices,
%\gmref{weight choices}{this is not really true is it? we don't propose particular weight functions do we?} \jag{Well, we do propose regions in weight space, and each region comprises an infinite number of weight choices. This is technically not a specific weight choice, but we probably don't want to get into weight regions yet??}, 
and the corresponding rank ordering of items, beginning with the most stable in the specified region of interest, and continuing in decreasing order of stability, until the user finds one that is satisfactory.

Alternatively, our technique can provide an overview of all the rankings that occupy a large portion in the acceptable region, and hence are stable, along with an indication of the fraction of the acceptable region occupied by each.  Thereby, the user can see at a glance what the stable options are, and also how dominant these are within the acceptable region.

We now motivate our techniques with an example.
%\gm{The example below should explain clearly what our technologies can do for consumers/producers.}
% Cornell and Toronto swap 10 vs 11 (among universities): Cornell is at 10 in the stable ranking, Toronto is at 11.  Stability is 0.020555004 vs. 0.003229 (close to uniform 0.00297619).
% Consumer is Toronto, asks about the ranking.

\begin{example}
\label{ex:1}
CSMetrics~\cite{csmetrics} ranks computer science research institutions based on the measured ($M$) and predicted ($P$) number of citations. These values are appropriately scaled and used in a weighted scoring formula, with parameter $\alpha \in [0,1]$ that sets their relative importance (see \S~\ref{sec:exp:setup} for details). 
CSMetrics includes a handful of companies in its ranking, but we focus on academic departments in this example.

%The score of each department is computed using the scoring formula $\alpha\log(M + \epsilon) + (1-\alpha)\log(P + \epsilon)$, where $\alpha$ is chosen by the user and  defaults to 0.3.  

As $\alpha$ ranges from 0 to 1, CSMetrics generates 336 distinct rankings of the top-$100$ departments.  Assuming (as a baseline) that each ranking is equally likely, we would expect an arbitrarily chosen ranking to occur 0.3\% of the time, which we take to mean that it occupies 0.3\% of the volume in the space of attributes and weights. We formalize this in \S~\ref{sec:stability} and call it {\em stability of a ranking}.

%We will formalize our notion of stability based on this intuitive notion, and will relate it to the volume that a ranking occupies in the space of attributes and weights.

Suppose that the ranking with %the default 
$\alpha=0.3$ is released, placing Cornell (a consumer) at rank 11, just missing the top-$10$. Cornell then checks the stability of the ranking (see \S~\ref{sec:stability:cons}), and learns that it's value is 0.3\%, matching that of the uniform baseline.  With this finding, Cornell asks CSMetrics to justify its choice of $\alpha$.

CSMetrics (the producer) can respond to Cornell by {\new further interrogating, and potentially} revising the published ranking.  It first enumerates stable regions (see \S~\ref{sec:stability:prod}) and finds that the most stable ranking indeed places Cornell at rank 10 (switching with the University of Toronto), and represents 2\% of the volume --- an order of magnitude more than the reference ranking. However, this stable ranking is very far from the default, placing more emphasis on measured citations with $\alpha=0.608$.  If this is unsatisfactory, CSRankings can propose another ranking closer to the reference ranking, but with better stability (see \S~\ref{sec:stability:reg}). Interestingly, Cornell also appears at the top-$10$ in the most stable ranking that is within 0.998 cosine similarity from the original scoring function. 
%making the choice of $\alpha=0.3$ difficult to justify.
\submit{\vspace{-2mm}}
%{\bf Draft} Producer take 1: Stability is the primary objective of the producer, he executes Get-Next, doesn't like its region, executes Get-Next again, is happy after a couple of calls.  This is stability-guided exploration of the space of rankings.

%{\bf Draft} Producer take 2: Region of interest is the starting point, take the most stable function within that region. I can dream up a region once we decide what attributes to use.
\end{example}

\eat{
For example, for a manager of a flight agency that wants to rank its employee, the choices of weights on the ``customer satisfaction'' and ``travel sales'' as long as those are relatively similar are equally good. In another example, a customer who wants to buy a call phone may accept any weights that consider the battery life-time to be more important that the camera quality.
On the other hand, small changes in the weight vector may dramatically change the ranked list -- making it very dependent on the choice of weights.
}

\eat{
Therefore, this paper initiates the research in addressing this need.
Especially, rather than limiting the view of the user to a fixed number of rankings, we propose a \getnext operator that upon calling it returns the next most stable ranking along with its measure of stability to the user.
As a result, the user may issue a \getnext to get the next most stable ranking, and if it does not satisfy her preference, she may continue finding the next rankings until she is satisfied with the results.
Carefully providing the problem setting, we discuss the geometric interpretations that help us developing the algorithms for the problem.
}

%Our contributions in this paper are summarized as following:
%\jag{Abol, we need to redo the list of contributions so that (1) they do not use terminoogy not yet defined above, and (2) reflect the 4 problem structure that we develop in the next section.  COuld you please rewrite this list, beginning with contribution 4?  (I rewrote the first three)}
%\stitle{Contributions:}
\vspace{1ex}
Our contributions include the following:
%\submit{\begin{itemize}[leftmargin=*]}
%\techrep{\begin{itemize}}
\begin{itemize}[leftmargin=*,itemsep=0pt]
\item We formalize a novel notion of the {\em stability} of a ranking, for rankings that result from a linear weighting of item attribute values. Stability captures the tolerance to changes in the weights. 
\item We propose algorithms that enable the efficient testing of ranking stability as well as the enumeration of the most-stable rankings, optionally constrained by a set of acceptable scoring functions.  We propose both exact algorithms and approximation algorithms that are based on novel sampling methods.

%\item \gm{more?} \abol{I think it is not bad to mention some of the techniques. Maybe a short mentioning the design of exact algorithm for 2D, threshold-based and randomized algorithms for MD? I believe, talking at least about the unbiased sampler design is not a bad idea as it plays a key role here and can be impactful in other applications. These all can be as part of previous bullet point. Currently looking at the length of each bullet it seems like this is an exp-based paper.}\julia{My preference is to not discuss anything that the reader cannot fully understand based on what he read so far.  Short and to the point is good.}
\eat{
\item We define three related problems in this regard, one from the perspective of a ranked order consumer and two from the perspective of a ranked order producer. We show how the solution for the first producer problem enables an efficient solution for the second producer problem.
\item We develop a geometric interpretation of items in a weight space, and show how this interpretation can be exploited to solve all the problems.
\item We propose a ray-sweeping based algorithm for the two dimensional case.
\item We use the relationship between the ranking regions and the ``arrangement'' notion in geometry for addressing the consumer problem.  For the producer problem, we propose a threshold-based algorithm for the multi-dimensional case that uses a tree data structure for concentrating the computation on the (next) most stable region.
\item We propose an unbiased sampler that draws functions uniformly at random from the acceptable region of interest. This sampler is required for several algorithms, including those for estimating the stability of a ranking (when $d>2$) and those for designing the randomized estimation.
\item For the settings with large items, we devise a Monte-Carlo method that works based on uniformly drawing samples from the region of interest. We provide two randomized algorithms, one requiring fixed number of samples while the other guarantees a confidence error on its estimation.
\item We show how the proposed techniques work for different varieties of top-$k$ items.
}
\item Our empirical evaluation demonstrates the efficiency of our techniques on real and synthetic datasets, and investigates the stability of real published rankings of computer science departments, soccer teams, and diamond retailers. \new{We show that existing rankings in these domains are often unstable and that favoring stability can sometimes have a significant impact on the rank of some items.
For instance, our findings cast doubt on the validity of the FIFA rankings which are used in making important decisions such as seeding the World Cup final draws.}
%\gm{update to reflect findings.}
\end{itemize}

%\stitle{Paper organization:} In Section ... \gm{TO DO}

\section{Problem Setup}
\subsection{Preliminaries} \label{sec:pre}
%In this section we describe our data and rankings models, and the geometric interpretations that are the foundation for our algorithms.
\begin{figure*}[t!]
\centering
\subcaptionbox{\label{fig:toy1} A sample database, $\mathcal{D},$ of items with scoring attributes $x_1$ and $x_2$; and the result of scoring function $f=x_1+x_2$.}
{	
\begin{tabular}{|l|c|c||@{}c@{}|}
	\hline
	\multicolumn{3}{|c||}{$\mathcal{D}$} & $f$ \\ \hline
	id   & $x_1$ & $x_2$ & $ \; x_1+x_2 \;$ \\ \hline \hline
	$t_1$& 0.63 & 0.71&1.34 \\ \hline
	$t_2$& 0.83 & 0.65&1.48 \\ \hline
	$t_3$& 0.58 & 0.78&1.36 \\ \hline
    $t_4$& 0.7 & 0.68&1.38 \\ \hline
	$t_5$& 0.53 & 0.82&1.35 \\ \hline
	\end{tabular}
	\vspace{7ex}
	}
\hfill 
\subcaptionbox{\label{fig:toy1-2} The {\em original space:} each item is a point. A scoring function is a ray ($f = x_1+x_2$ is shown) which induces a ranking of the items by their projection.}{\includegraphics[width=0.32 \textwidth]{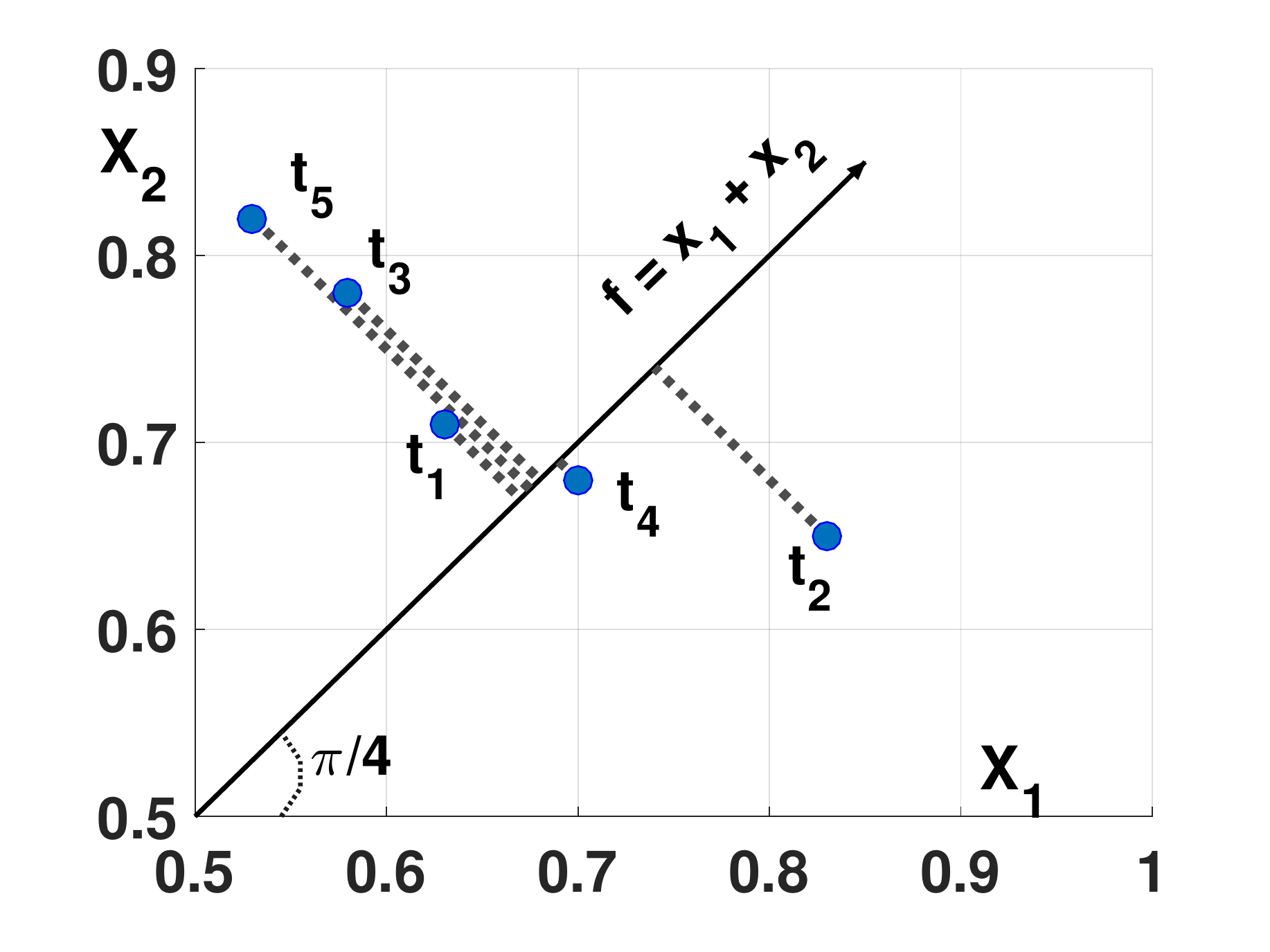}}
\hfill 
\subcaptionbox{
\label{fig:toy3} 
\revision
The {\em dual space:} items are the hyperplanes (lines here). Each scoring function is a ray; within a region bounded by the intersections of the item hyperplanes, all scoring functions induce the same ranking.}
{\includegraphics[width=0.33 \textwidth]{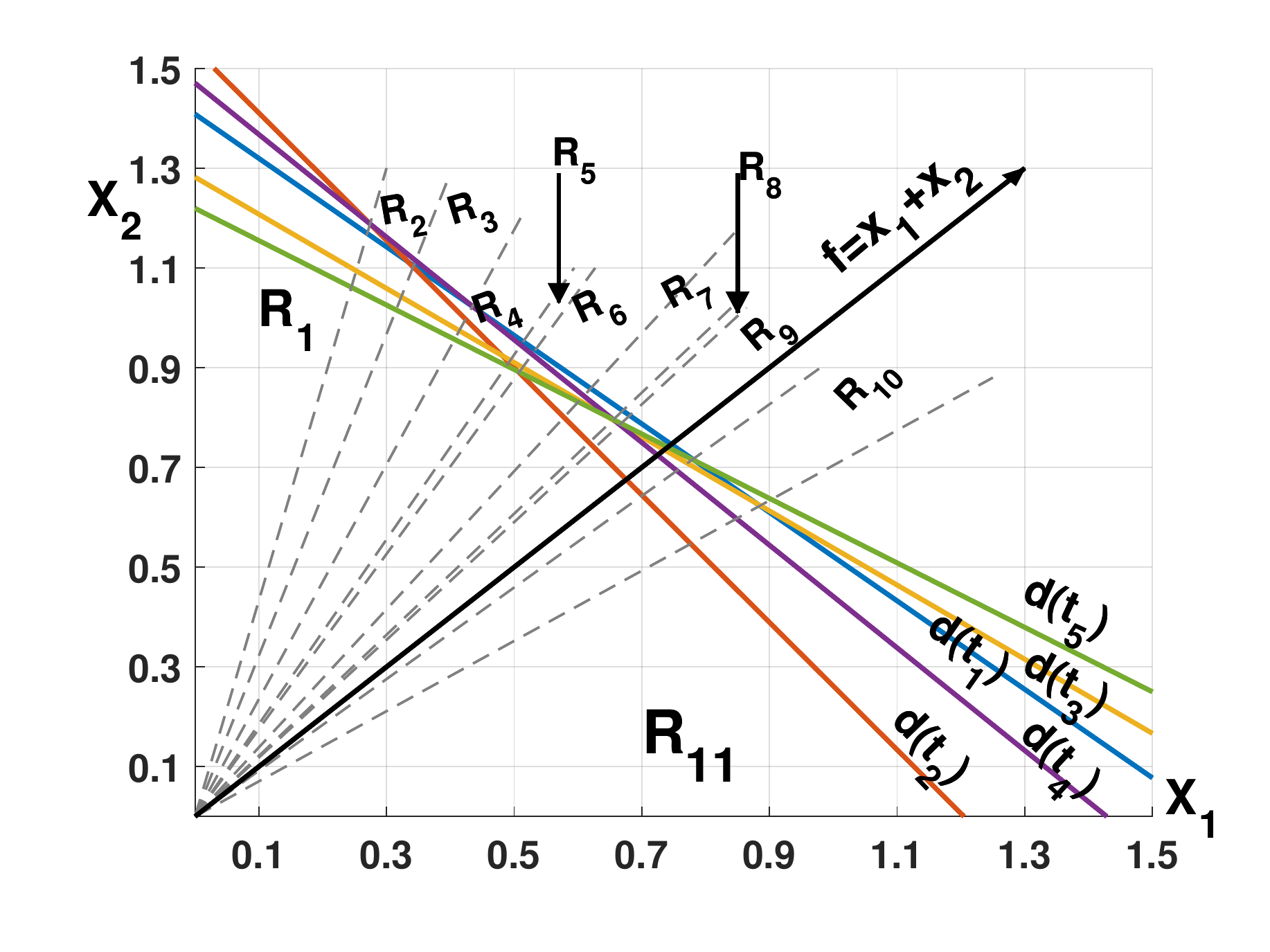}
}
\submit{\vspace{-3mm}}
\caption{
\label{fig:prelim}
A sample database and its geometric interpretation in the original space and dual space.
}
\submit{\vspace{-5mm}}
\end{figure*}

\subsubsection{Data model and rankings}

We consider a fixed database $\mathcal{D}$ consisting of $n$ items, each with $d$ scalar scoring attributes. In addition to the scoring attributes, the dataset may contain non-scoring attributes that are used for filtering, but they are not our concern here.  Thus we represent an item $t \in \mathcal{D}$ as a $d$-length vector of scoring attributes, $\langle t[1], t[2], \ldots, t[d] \rangle$.  Without loss of generality, we assume that the scoring attributes have been appropriately transformed: normalized to non-negative values between 0 and 1, standardized to have equivalent variance, and adjusted so that larger values are preferred. 
{\revision
Note that this normalization is not strictly required for the techniques we propose in the paper but they make the comparison between attribute value weights fair and our stability measure more meaningful.
}

%we assume that each scoring attribute is a , and that larger values are preferred.  Ideally, each scoring attribute also has the same variance.  These assumptions are usually straightforward to meet by applying an appropriate simple transformation function to values of an attribute.  
%They can be relaxed if need be, with some additional notation and bookkeeping. 
%

% GM: I removed this discussion about identical scores.  
%
%Further, we assume that no two items are identical on every scoring attribute.
%\abol{do we need to mention this? In fact the items that have equal ranking scores fall in the same ``bucket'' and they are ranked equally good. So, if we want to keep this, we can say something like: this assumption can easily be relaxed by putting the items with the same ranking scores in the same bucket.}
%This assumption is also easy to relax, by adding a check for such equality, collapsing any such identical scoring items into one, and then separating them out at the very end when displaying the ranking. \julia{It's not as straight-forward as that, because distance measures between rankings would need to be adjusted, somewhat non-trivially, to accommodate ties.  See for example~\cite{DBLP:journals/pvldb/BrancotteYBBDH15}.}

We consider rankings of items that are induced by first applying a linear weight function to each item, then sorting the items by the resulting score to form a ranking. 
\begin{definition}[Scoring function]
A {\em scoring function} $f_{\vec{w}}:\mathbb{R}^d\rightarrow\mathbb{R}$, with weight vector $\vec{w} ~=~\langle w_1, w_2,\ldots,w_d \rangle$, assigns the score $f_{\vec{w}}(t) = \Sigma_{j=1}^d w_j t[j]$ to any item $t\in\mathcal{D}$.	
\end{definition}
{\revision Without loss of generality, we assume that $w_j\in\vec{w} \geq 0$.
This assumption is straightforward to relax with some additional notation and bookkeeping.}
When $\vec{w}$ is clear, we denote $f_{\vec{w}}(t)$ by $f(t)$.

We use $\mathcal{U}$ to refer to the set of all possible scoring functions.
Given a score for each item, the ranking of items induced by  $f$ is the permutation of items in $\mathcal{D}$ defined by  sorting them by their scores under $f$, in descending order, and breaking ties consistently by an item identifier.  We use the notation $\mathcal{r}_f(\mathcal{D})$ to denote the ranking of items in $\mathcal{D}$ based on $f$. 

% Julia: I edited, see above
% \gmref{If $f$ results in equal scores for two or more items, we assume they are ordered arbitrarily by an item id field, so that a ranking $\mathcal{r}_f(\mathcal{D})$ does not have ties.}{please check.}

%\gm{there is no definition of "score-based ranking" beyond this language. Should there be?} \julia{I think this is sufficiently precise, but what about ties then?  We do need to say something, both regarding what happens when two items have identical values for all scoring attributes (and so would always be tied), and for cases when a particular scoring function induces a tie between a pair of items.  Even something like ``ties are broken by item id'', and add id to the data model. We don't want to say that ties are broken arbitrarily --- that would make a ranking unstable in a way that we cannot address effectively.}

\begin{example}\label{example1}
The human resources (HR) department of a company wants to prioritize hiring candidates based on two criteria: an aptitude measure $x_1$ (e.g. a score on a qualifying exam) and an experience measure $x_2$ (e.g. the number of years of relevant experience).   
Figure~\ref{fig:toy1} shows the candidates as well as their (normalized) values for $x_1$ and $x_2$. The score for each candidate is also shown, for weight vector $\vec{w}=\langle 1,1\rangle$, computed as $f(t)= x_1+x_2$.
\end{example}

{\revision
Although we restrict our attention to linear scoring functions, our techniques can be used with more general scoring functions by applying non-linear transformations to $\mathcal{D}$ as a preprocessing step.
Consider Example~\ref{example1}, and let the scoring function be $f(t)= x_1+x_2+0.5x_1^2$. The quadratic term $x_1^2$ can be added as  $x_3=x_1^2$. %For example, $t_1[3] = x_1^2 = 0.3969$.
%After adding $x_3$, $f$ can be rewritten as the linear function $x_1+x_2+0.5x_3$.
}

\subsubsection{Geometry of ranked items}
Our algorithms are based on a geometric interpretation of scored items and induced rankings. We now present two geometric views of the database, to which we respectively refer  as (i) the {\em original space}, where every item corresponds to a point, and (ii) the {\em dual space}, where every item corresponds to a hyperplane.

%\paragraph*{The original space}
{\bf The original space.}
The {\em original space} 
consists of $\mathbb{R}^d$ with each item in $\mathcal{D}$ represented as a point, and a linear scoring function $f_{\vec{w}}$ viewed as a ray starting from the origin and passing through the point $\vec{w} = (w_1, w_2,...,w_d)$.  The ranking $\mathcal{r}_{f_{\vec{w}}}(\mathcal{D})$ %score-based ordering of the points induced by $f_{\vec{w}}$ 
corresponds to the ordering of their projections onto this ray. 

Continuing our example, Figure~\ref{fig:toy1-2} shows the items of our sample database in the original space, as points in $\mathbb{R}^2$. The function $f=x_1+x_2$ is shown as a ray passing through $\langle 1,1 \rangle$. 
The projection of the points onto the vector of $f$ specifies the ranking: the further a point is from the origin, the higher its rank.
{\revision The reason is that, for every score $f(t)$, $\sum w_j x_j = f(t)$ is the perpendicular hyperplane to the ray of $f$ that passes through the point $t$.} Hence, looking at Figure~\ref{fig:toy1-2}, the candidates in Example~\ref{example1} are ranked as $\langle t_2, t_4, t_3, t_5, t_1\rangle$ based on $f$.  One can also easily imagine the ranking of items that would result from an extreme scoring function that ranks only by attribute $x_1$ (i.e. $f=x_1$) by considering the projections onto the $x_1$-axis (or respectively for the $x_2$-axis).

% Removed
%\gmref{Note that, for any constant value $c>0$, the vector corresponding to functions $f$ and $f'$ with weights $\vec{w} = \langle w_1, w_2,\ldots, w_d\rangle$ and $\vec{w'} = \langle c.w_1, c.w_2,\ldots,c.w_d\rangle$ are the same.  This is because the scores generated by $f'$ are a linear scaling of the scores generated by $f$, and so both functions induce the same ordering of the items.}{Do we really need this? It should be clear since these two are the same vector.}

Viewing the items from $\mathcal{D}$ in the original space provides clarity about the range of rankings that can be induced by the scoring functions: all scoring functions are defined by rays in the first quadrant of $\mathbb{R}^d$ that is determined by the weight vector. It is sometimes convenient to use polar coordinates to represent a scoring function:  a ray in $\mathbb{R}^d$ starting from the origin (corresponding to function $f_{\vec{w}}$) can be identified by $(d-1)$ angles $\langle \theta_1, \theta_2, \cdots, \theta_{d-1} \rangle$, each in the range $[0,\pi/2]$. Thus, given a function $f_{\vec{w}}$, its angle vector can be computed using the polar coordinates of $w$. 
%Let $w$ have the polar coordinates $(r,\Theta)$. Then the $(d-1)$-dimensional vector $\Theta$ defines the ray of $w$.
For example, function $f$ with weights $\langle 1,1 \rangle$ in Figure~\ref{fig:toy1-2} is identified by a single angle $\langle \pi/4 \rangle$.
There is a one-to-one mapping between these rays and the points on the surface of the origin-centered unit $d$-sphere (the unit hypersphere in $\mathbb{R}^d$), {\revision or to the surface of any origin-centered $d$-sphere}.
Thus, the first quadrant of the unit $d$-sphere represents the universe of functions $\mathcal{U}$.

%\paragraph*{The dual space}
{\bf The dual space.} We are particularly interested in reasoning about the transition points of the weight vector, where we move from one ranking to a different ranking.  The {\em dual space}~\cite{edelsbrunner} consists of $\mathbb{R}^d$, but we represent an item $t$ by a hyperplane $\mathsf{d}(t)$ given by the following equation of $d$ variables $x_1 \dots x_d$:
{\revision
\begin{align}\label{eq:dual}
\mathsf{d}(t):~ t[1]\times x_1 + \dots + t[d]\times x_d = 1
\end{align}
}
Continuing our example, Figure~\ref{fig:toy3} shows the items in the dual space. In $\mathbb{R}^2$, every item $t$ is a 2-dimensional hyperplane (i.e. simply a line) given by $\mathsf{d}(t): t[1] x_1 + t[2] x_2=1$.
In the dual space,  functions are represented by the same ray as in the original space{\revision, passing through the point $\vec{w}$.
Consider the intersection of a dual hyperplane $\mathsf{d}(t)$ with this ray.
This intersection is in the form of $a\times\vec{w}$, because every point on the ray of $f$ is a linear scaling of $\vec{w}$.
Since this point is also on the hyperplane $\mathsf{d}(t)$,
$t[1]\times a\times w_1 + \dots + t[d]\times a\times w_d = 1$. Hence, $\sum t[j] w_j = 1/a$. This means that the dual hyperplane of any item with the score $f(t)=1/a$ intersects the ray of $f$ at point $a\times\vec{w}$.
Following this, the}
ordering of the items based on a function $f$ is determined by the ordering of the intersection of the hyperplanes with the vector of $f$. The closer an intersection is to the origin, the higher its rank.
{\revision
For example, in Figure~\ref{fig:toy3}, the intersection of the line $t_2$ with the ray of $f=x_1 + x_2$ is closest to the origin, and so $t_2$ has the highest rank for $f$.
}
We will show in the next section that the intersections of hyperplanes in the dual space define {\em regions}, within which rankings {\em do not} change under small changes of the weight vector.

\subsection{Stability of a ranking} \label{sec:stability}

We now present our definition of stability and identify the key algorithmic problems for consumers and producers of rankings.

\subsubsection{Definition of stability}

Every scoring function in the universe $\mathcal{U}$ induces a single ranking of the items. But each ranking is generated by many functions.  For database $\mathcal{D}$, let $\mathfrak{R}_\mathcal{D}$ be the set of rankings over the items in $\mathcal{D}$ that are generated by at least one scoring function $f\in\mathcal{U}$, that is, by at least one choice of weight vector. For a ranking $\mathfrak{r}\in\mathfrak{R}_\mathcal{D}$, we define its {\em region}, $R_\mathcal{D}(\mathfrak{r})$,  as the set of functions that generate $\mathfrak{r}$: 
\begin{align}
R_\mathcal{D}(\mathfrak{r}) = \{f ~|~  \mathcal{r}_{f}(\mathcal{D}) = \mathfrak{r}\}
\end{align}
Figure~\ref{fig:toy3} shows the boundaries (as dotted lines) of the regions for our sample database, one for each of the 11 feasible rankings. 

We use the region associated with a ranking to define the ranking's stability. The intuition is that a ranking is stable if it can be induced by a large set of functions.
\techrep{Recall from Section~\ref{sec:pre} that each function is a ray, in both the original and the dual space.}
If the region of a ranking is large, then small changes in the weight vector are not likely to cross the boundary of a region and therefore the ranked order will not change. 
{\revision
For every region $R$, we define its volume, $vol(R)$, to measure the bulk of the region.
Specifically, we use the one-to-one mapping between the surface of the unit $d$-sphere and $\mathcal{U}$ for this purpose.
The volume of a region is the area of the space carved out in the unit $d$-sphere by the set of functions in the region.
}

\begin{definition}[Stability of $\mathfrak{r}$ at $\mathcal{D}$]\label{def:1}
Given a ranking $\mathfrak{r}\in\mathfrak{R}_\mathcal{D}$, the stability of $\mathfrak{r}$ is the proportion of ranking functions in $\mathcal{U}$ that generate $\mathfrak{r}$.  That is, stability is the ratio of the volume of the ranking region of $\mathfrak{r}$ to the volume of $\mathcal{U}$. Formally:
\begin{align}
S_\mathcal{D}(\mathfrak{r}) = \frac{\mbox{vol}(R_\mathcal{D}(\mathfrak{r}))}{\mbox{vol}(\mathcal{U})}
\end{align}
\end{definition}

We emphasize that stability is a property of a ranking (not of a scoring function) and it holds for a particular database, as indicated by the notation $S_\mathcal{D}(\mathfrak{r})$.
{\revision
For ease of notation, we denote $S_\mathcal{D}(\mathfrak{r})$ and $R_\mathcal{D}(\mathfrak{r})$ with $S(\mathfrak{r})$ and $R(\mathfrak{r})$, respectively, in the rest of this paper.
}

{\revision
In the following, we  % in \S~\ref{sec:stability:reg}, we 
define the scope for studying the stability of rankings% (\S~\ref{sec:stability:reg})
, and
we develop three alternative problems that build on the notion of stability in Definition~\ref{def:1} and correspond to the views of two different stakeholders: consumers and producers of rankings.
}

\subsubsection{Acceptable scoring functions}\label{sec:stability:reg}
{\revision
When generating a ranking, the producer will often need to consider trade-offs between the choice of an acceptable scoring function and the stability of the generated ranking.  Stable rankings are preferable because they are robust to small changes in scoring function weights.
%producers prefer to present robust rankings rather than those that might change significantly under small changes of the scoring function.  
Furthermore, to the extent that consumers trust more stable rankings, producers are interested in earning this trust. Still, stability is not the only concern for the producer. We return to our running example to motivate this point.
}
%The producer of a ranking may also care about stability because they prefer to present robust rankings rather than those that might change significantly under small changes of the scoring function.  Furthermore, to the extent that consumers trust more stable rankings, producers must try to earn this trust by creating stable rankings.  Of course, stability is not the only concern for the producer, and will often be traded off with the choice of a scoring function that is acceptable to the producer a priori. 

\begin{example}\submit{\vspace{-2mm}}
In producing a ranking of employees, an HR officer believes that aptitude ($x_1$) should be twice as important as experience ($x_2$), but this is only a rough guideline. Any weight with a ratio within 20\% of $2$ is acceptable. By testing different weights within this acceptable range, the officer observes different rankings of candidates and selects one that maximizes stability.
\submit{\vspace{-1mm}}
\end{example}

\def\ar{\mathcal{U}^*}

We allow producers to constrain the scoring function by specifying an {\em acceptable region}, denoted $\ar \subseteq \mathcal{U}$, in one of two ways:
%\submit{\begin{itemize}[leftmargin=*]}
%\techrep{\begin{itemize}}
\begin{itemize}[leftmargin=*,itemsep=0pt]
\item {\em A vector and angle distance}\footnote{\small Note that this can be expressed by cosine similarity.}: the acceptable region is identified by a hypercone around the central ray defined by the weight vector. For example, a user may equally prefer any function that has at most $\pi/10^\circ$ angle distance (at least 95.1\% cosine similarity) with the function $f$ with weight vector $\langle 1,1 \rangle$.
\item {\em A set of constraints}: the acceptable region is a convex region identified by a set of inequalities. For example, if the user is interested in the functions that weigh $x_2$ no greater than $x_1$, then the acceptable region is constrained by $w_2\leq w_1$.
\end{itemize}

%Therefore, considering the fact that small changes in the weights may change the ranking drastically, she wants to find out the ranking(s) that appear most of the times.

%While $\mathcal{U}$ identifies an infinite number of possible functions, the number of rankings of $n$ items bounded by $n!$.  

%\julia{Why did we drop $\mathcal{D}$ from the subscripts?  Let's bring it back, so that what's below doesn't differ from Definition 2 in a non-obvious way.  Also equations are typeset differently.}\abol{I suggest adding a footnote in definition 1 and simplify the notation to not having subscript}

We incorporate the notion of an acceptable region into the definition of stability in a natural way.  
%\julia{I think this notation is unnecessary.  I'd just say: Let $\mathfrak{R}_i$ be the set of rankings  that are generated by at least one function $f\in\mathcal{U}_i$.  Then you don't need to specialize the definitions of stability explicitly.  Otherwise it's confusing that $\mathfrak{r}$ is already drawn from $\mathcal{U}_i$, but still we state $R(\mathfrak{r}, \mathcal{U}_i)=\ldots$.  So, I suggest having just one definition of stability (Def. 1).  This will also save space.}
Let $\mathfrak{R}^*$ be the set of rankings that are generated by at least one function $f\in\ar$.  The ranking region in $\ar$ of a ranking $\mathfrak{r}\in \mathfrak{R}^*$ is: $R^*(\mathfrak{r}) = \{f \in \ar~|~  \mathcal{r}_{f}(\mathcal{D}) = \mathfrak{r}\}$.
%For the ease of notation, when $\mathcal{U}_i$ is clear, we use  $R(\mathfrak{r})$ to refer to $R(\mathfrak{r}, \mathcal{U}_i)$.
%\footnote{\abol{do not forget to prove that each ranking regions is a convex polygon while every ranking is generated by at most one region.}}
Accordingly, we modify the definition of stability of a ranking $\mathfrak{r}\in\mathfrak{R}^*$ to be:
\submit{
$S(\mathfrak{r}) =\mbox{vol}({R^*}(\mathfrak{r}))/\mbox{vol}(\ar).$
}
\techrep{
$$
S(\mathfrak{r}) = \frac{\mbox{vol}({R^*}(\mathfrak{r}))}{\mbox{vol}(\ar)}
$$
}

%\begin{definition}[Stability]\label{def:3}
%Given a ranking $\mathfrak{r}\in\mathfrak{R}$, the stability of $\mathfrak{r}$ is the volume ratio of its region in $\mathcal{U}_i$ to the one of the region of interest. Formally:
%\end{definition}
%Please note that global stability is a special case of stability in which $\ar=\mathcal{U}$. For the ease of notations, in the rest of paper we use $S(\mathfrak{r})$ to refer to $S(\mathfrak{r},\mathcal{U}_i)$ when $\mathcal{U}_i$ is clear.

\subsubsection{Consumer's stability problem}\label{sec:stability:cons}

The basic problem for the consumer is {\em stability verification}, where the consumer seeks to validate the stability of a given ranking. A ranking with higher stability will be more robust and is less likely to be the result of an engineered scoring function.

\begin{problem}[Stability verification]
For dataset $\mathcal{D}$ with $n$ items over $d$ scoring attributes, and ranking $\mathfrak{r}\in\mathfrak{R}$ of the items in $\mathcal{D}$, compute the ranking region $R_\mathcal{D}(\mathfrak{r})$ and its stability $S_\mathcal{D}(\mathfrak{r})$.
\label{pr:p1}
\end{problem}

\subsubsection{Producer's stability problems}\label{sec:stability:prod}
With all of the above machinery in place, we can return to helping the producer of a ranking choose one that is stable.  To this end, we develop two related methods for a producer to explore stable rankings.  We state these problems with respect to an acceptable region $\ar$ and set $\ar=\mathcal{U}$ when all scoring functions are acceptable.

First, the producer may wish to enumerate rankings, prioritizing those that are more stable.  Below we consider an enumeration of rankings in order of stability, with stopping criteria based either on a stability threshold or on a bound on the number of desired rankings. 

\begin{problem}[batch stable-region enumeration]
For a dataset $\mathcal{D}$ with $n$ items over $d$ scoring attributes, a region of interest $\ar$ (specified either by a set of constraints or by a vector-angle), and a stability threshold $s$ (resp. a value $h$), find all rankings $\mathfrak{r}\in\mathfrak{R}^*$ such that $S(\mathfrak{r}) \geq s$ (resp. the top-$h$ stable rankings). 
\end{problem}

In many scenarios, rather than enumerating rankings, the producer may wish to incrementally generate stable regions, in the order of their stability,
{\revision 
using the \getnext primitive.
%Every call to \getnext returns the next stable region, enabling their incremental generation.  
So, the $h$-th call to \getnext will return the $h$-th most stable ranking in  $\ar$.
%Therefore, the $h$-th call of the primitive, after the top-$(h-1)$ stable rankings in $\ar$ being discovered by the previous calls of the primitive, returns the $h$-th stable ranking:
}

\begin{problem}[iterative stable-region enumeration]
\label{prob:get-next}
For a dataset $\mathcal{D}$ with $n$ items over $d$ scoring attributes, a region of interest $\ar$, specified either by a set of constraints or by a vector-angle, and the top-$(h-1)$ stable rankings in $\ar$, {\revision discovered in the previous iterations of the problem,} find the $h$-th stable ranking $\mathfrak{r}\in\mathfrak{R}$. That is, find: 
\begin{align}
\underset{\mathfrak{r}\in ~ \mathfrak{R}\backslash top-(h-1)}{\mbox{argmax}}\big( S(\mathfrak{r})\big)
\end{align}
\end{problem}

Of course, the two enumeration problems are closely related. In fact,  an algorithm for iterative ranking enumeration can be used directly for batch ranking enumeration, if it's called multiple times. In our algorithmic contributions we focus on efficiently evaluating an operator we call \getnext, which can be used to solve both enumeration problems. %, which is describe in \S~\ref{sec:2d}, for the case of two scoring attributes (2D), and \S~\ref{sec:md} for the general case.

{\revision
In the above, for convenience, we relate the stability enumeration to the producers and stability verification to the consumers of rankings. However, a producer can use verification for testing the stability of a ranking, while a consumer can use enumeration for identifying stable rankings.
}

\subsubsection{Stability \revision{of the top-$k$} items}
%\abol{I removed the phrase "partial ranking" all over the paper. Please double check}

So far we focused on complete rankings of $n$ items in $\mathcal{D}$.  However, when $n$ is large, one may be interested in only the highest-ranked $k$ items, for $k <\!\!< n$.  
%\julia{This is a weak point, no need to stress.}\abol{OK}
%{\revision
%When $n$ is large, the regions of all rankings may be small and unstable. As a result, for large settings, it is reasonable not to consider the complete ranking between the items.
%}
This motivates us to reformulate the problems above, focusing on the top-$k$ portion of the ranked list.

{\revision 
We consider two notions of stability of the top-$k$ items.  With the first, weight vectors $\vec{w}$ and $\vec{w}'$ are said to generate the same result if they produce the same {\em set} of top-$k$ items, while with the second, $\vec{w}$ and $\vec{w}'$ must both select the same set of top-$k$ items and return them in the same {\em order}.  
}
%{\revision
%We consider two definitions of the partial rankings: (i) top-$k$ partial rankings, in which two weight vectors $\vec{w}$ and $\vec{w}'$ generate the same result, if the {\em ordering} between their top-$k$ items is the same, and (ii) top-$k$ sets, where $\vec{w}$ and $\vec{w}'$ generate the same result, if the {\em set} of the top-$k$ items based on them is the same. The Stability of partial rankings is defined the same as Definition~\ref{def:1}, except that rather the complete rankings, the partial rankings are considered.
%}
% One challenge is that, while ranking regions are guaranteed to be convex for complete rankings, they may not be convex for top-$k$ partial rankings. 
We present sampling-based randomized algorithms that support top-$k$ partial rankings in \S~\ref{sec:randomized}.

{\revision We will discuss the relationship between our approach and the rich body of work on top-$k$ processing and skyline queries in Section~\ref{sec:related}.  
Here we note that the set of most-stable top-$k$ items is in general different from the skyline~\cite{skylineoperator}, or 
any of its subsets%, proposed by the works on selecting a representative subset of the skyline
~\cite{lin2007selecting,nanongkai2010,chester2014,su2010top}.
The key difference is that the stable top-$k$ items are not necessarily a subset of the skyline. Yet, these items are of high quality and so are potentially of interest to the user. Consider the toy example  $\mathcal{D}=\{t_1(1,0),t_2(.99,.99),t_3(.98,.98),t_4(.97,.97),t_5(0,1)\}$.\\ The skyline of this dataset is $\{t_1,t_2,t_5\}$, while the most stable top-$3$ items are $\{t_2,t_3,t_4\}$.   Of these, only $t_2$ is part of the skyline. %, and so $t_1$ and $t_5$ would not belong to any of the skyline representatives.
}

\eat{
%For example, we may care about the 100 richest people in the world (such as may be on the famous list created by Forbes); we are unlikely to be interested in the relative ranking of the 1,000,000 and 1,000,001 ranked richest person, let alone the 7 billion others in the world.  Similarly, we may care about the ATP ranking of the top several dozen tennis players in the world, but not about the thousands of lower ranked players.  

%Even when $n$ is not very large, we may still see such truncation.  For example, the US News and World Report presents the first $k$ universities in ranked order (with some ties), but lists the remaining $n-k$ as "Did not Rank". There is also a body of work in the database literature on finding the top-$k$ items based on a ranking function, rather than considering the complete ordering between the items~\cite{ihab}.

In the next two sections, we first consider addressing the SR problem. Then in \S~\ref{sec:randomized} we provide the details and propose a randomized algorithm that is applicable both for finding the stable rankings and the stable top-$k$ results.

Mathematically, two rankings are said to be {\em $k$-identical} if they agree perfectly on their respective top-$k$ items.  What we are saying in the examples considered in this section is that sometimes we may not care to distinguish between two rankings if they are $k$-identical.  In weight space, the union of ranking regions of all $k$-identical rankings is called a $k$ {\em ranking region}.  We are interested in solving each of the four problems listed above using $k$ ranking region rather than ranking region.  \julia{There are either 3 or 6 problems listed above, depending on whether we count problems that are stated over a region of interest separately.} We will address this scenario in Sec.~\ref{sec:topk}.  The biggest challenge is that ranking regions are guaranteed to be convex, but $k$ ranking regions are not.
}

\section{Two dimensional (2D) Ranking}\label{sec:2d}
To develop our intuition, we start with the case of $d=2$ scoring attributes. %(the 2D case).
Using the geometric interpretation of items provided in \S~\ref{sec:pre} while considering the dual representation of the items, we propose exact algorithms for stability verification and enumeration.
%that runs in polynomial time in the size of the database ($n$) for 2D. 

\techrep{
Consider the items in Example~\ref{example1}, as shown in Figure~\ref{fig:toy1}.
While the number of ranking functions, rays between $0^\circ$ and $\pi /2^\circ$, is infinite, the number of possible orderings between the items is (combinatorially) limited to $n!$.
The number of possible orderings based on linear ranking functions is even less.
}

Consider a pair of items $t_i$ and $t_j$ presented in the dual space in $\mathbb{R}^2$.
Recall that in 2D, every item $t$ is transformed to the line:
{\revision
\begin{align}\label{eq:dual2d}
\mathsf{d}(t): t[1]\times x_1 + t[2]\times x_2 = 1
\end{align}
}
Also, recall that every function $f$ with the weight vector $w$ is represented with the origin-starting ray passing through the point $w$, and consider points $t_i$ and $t_j$.
$f$ ranks $t_i$ higher than $t_j$ if the intersection of $\mathsf{d}(t_i)$ with $f$ is closer to the origin than the intersection of
$\mathsf{d}(t_j)$ with $f$.
%Between $t_i$ and $t_j$, for any function $f$, the one that its dual line intersects the ray of $f$ closer to the origin is ranked higher.

Consider $f$ whose origin-starting ray passes through the intersection of $\mathsf{d}(t_i)$ and $\mathsf{d}(t_j)$.
Since both lines intersect with the ray of $f$ at the same point, $f$ assigns an equal score to $t_i$ and $t_j$. We refer to this function (and its ray) as the {\em ordering exchange} {\revision (first defined in~\cite{fairranking})} between $t_i$ and $t_j$, and denote it $\times_{t_i,t_j}$. The ordering between $t_i$ and $t_j$ changes on two sides of $\times_{t_i,t_j}$: %\julia{I'd say ``intersection'' rather than ``ray'', since it's not a ray in the general case, right?  generally this is a D-1 structure.}\abol{it is the origin-starting ray passing through that intersection}
$t_i$ is ranked higher than $t_j$ one side of the ray, and $t_j$ is ranked higher than $t_i$ on the other side.
For example, consider $t_1$ and $t_4$ in Example~\ref{example1}, shown in
Figure~\ref{fig:toy3} in the dual space: {\revision the closest line to the  origin on the $x_1$ axis represents $\mathsf{d}(t_2)$, and the next closest line is $\mathsf{d}(t_4)$.}
The left-most intersection in the figure is between $\mathsf{d}(t_1)$ and $\mathsf{d}(t_4)$. The top-left dashed line that starts from the origin and passes through this intersection shows $\times_{t_1,t_4}$: $t_1$ is preferred over $t_4$ on the left of $\times_{t_1,t_4}$, and $t_4$ is preferred over $t_1$ on the right.

{\revision 
An item $t$ dominates~\cite{asudeh2016discovering,pareto, skylineoperator} an item $t'$, if $\nexists x_i$ s.t. $t'[i]>t[i]$ and $\exists x_i$ s.t. $t[i]> t'[j]$. If $t$ dominates $t'$, then these items do not exchange order.}
Consider two items $t$ and $t'$ that do not dominate each other.
% \footnote{\small An item $t$ dominates an item $t'$, if $\nexists x_i$ s.t. $t'[i]>t[i]$ and $\exists x_i$ s.t. $t[i]> t'[j]$~\cite{asudeh2016discovering,pareto}. If $t$ dominates $t'$, then these items do not exchange order.}.
Equation~\ref{eq:dual2d} can be used for finding the intersection between the lines $\mathsf{d}(t)$ and $\mathsf{d}(t')$. Considering the polar coordinates of the intersection, $\times_{t,t'}$ is specified by the angle $\theta_{t,t'}$ (between the ordering exchange and the x-axis) as follows:
\begin{align}\label{eq:2dexhange2}
\theta_{t,t'} = \arctan \frac{t'[1]-t[1]}{t[2]-t'[2]}
\end{align}

The ordering exchanges between pairs of items of a database partition the space of scoring functions into a set of regions. Each region is identified by the two ordering exchanges that form its borders. Since there are no ordering exchanges within a region, all scoring functions inside a region induce the same ranking of the items. Thus, the number of regions is equal to $|\mathfrak{R}|$, as $\mathfrak{R}$ is the collection of rankings defined by these regions. For instance, Figure~\ref{fig:toy3} shows regions $R_1$ through $R_{11}$ that define the set of possible rankings of Example~\ref{example1} for $\mathcal{U}$.

\submit{\newpage }
\subsection{Stability Verification}\label{subsec:2dsv}
%\julia{I still don't understand this notation: ``let $t$ be $t_{\mathfrak{r}[i]}$ and $t'$ be $t_{\mathfrak{r}[i]}$.''  If we need a way to refer to an item at position $i$ in a ranking, the standard way to do this is to write $\tau(i)$, we can add this to preliminaries.  If we also need to look up the rank of some item $t$ in $\tau$, we can write that as $\tau^{-1}(t)$.} \abol{I changed the write up and removed the confusion.}

The ordering exchanges are the key to figuring out the stability of a ranking.
Consider a ranking $\mathfrak{r}$. For a value of $1\leq i<n$, let $t$ and $t'$ be the $i$-th and $(i+1)$-th items in $\mathfrak{r}$. %in which every item $t_{\mathfrak{r}[i]}$ is ranked higher than $t_{\mathfrak{r}[i+1]}$. For any value of $i\in[1,n]$, let $t$ be $t_{\mathfrak{r}[i]}$ and $t'$ be $t_{\mathfrak{r}[i]}$.
Following Equation~\ref{eq:2dexhange2}, $\theta_{t,t'}$ specifies the ordering exchange between $t$ and $t'$.
If $t[1]<t'[1]$ (resp. $t[1]>t'[1]$), all functions with  angles $\theta<\theta_{t,t'}$ (resp. $\theta>\theta_{t,t'}$) rank $t$ higher than $t'$.
{\revision
The reason is that if $t[1]>t'[1]$, $t[2]$ should be smaller than $t'[2]$, otherwise $t$ dominates $t'$. Hence $\frac{t[1]}{t[2]}>\frac{t'[1]}{t'[2]}$, i.e. the dual line $\mathsf{d}(t)$ has a larger slope than $\mathsf{d}(t')$, and intersects the rays in range $[0,\theta_{t,t'})$ closer to the origin.
}

\techrep{
\begin{algorithm}[h]
\caption{\svtd \\
         {\bf Input:} Two dimensional dataset $\mathcal{D}$ with $n$ items and the ranking $\mathfrak{r}$ \\
         {\bf Output:} The stability and the region of $\mathfrak{r}$
        }
\begin{algorithmic}[1]
\label{alg:2dsv}
	\STATE $(\theta_1,\theta_2) = (0,\pi/2)$
	\FOR{$i=1$ to $n-1$}
    	\STATE $t = \mathfrak{r}[i]$; $t' = \mathfrak{r}[i+1]$
    	\STATE {\bf if} $t$ dominates $t'$ {\bf then continue}
        \STATE {\bf if} $t'$ dominates $t$ {\bf then return} null %{\scriptsize \tt // $\mathfrak{r}$ is infeasible}
    	\STATE $\theta =\arctan\frac{t'[1]-t[1]}{t[2]-t'[2]}$
        %\STATE {\bf if} $\big(t[1]<t'[1]$ and $\theta<\theta_1 \big)$ or $\big( t[1]>t'[1]$ and $\theta>\theta_2\big)$ {\bf then return} null {\scriptsize \tt // $\mathfrak{r}$ is infeasible}
        \STATE {\bf if} $\big(t[1]<t'[1]$ and $\theta>\theta_1 \big)$ {\bf then} $\theta_1 = \theta$
        \STATE{\bf if} $\big(t[1]>t'[1]$ and $\theta<\theta_2 \big)$ {\bf then} $\theta_2 = \theta$
        \STATE{\bf if} $\theta_1>\theta_2$ {\bf then return} null %{\scriptsize \tt // $\mathfrak{r}$ is infeasible}
    \ENDFOR
    \STATE {\bf return} $\frac{\theta_2-\theta_1}{\pi/2}$, $(\theta_1,\theta_2)$
\end{algorithmic}
\end{algorithm}
Algorithm~\ref{alg:2dsv} uses
}
\submit{We use}
this idea for computing the stability (and the region) of a given ranking $\mathfrak{r}$.
The stability verification algorithm uses the angle range $(\theta_1,\theta_2)$ for specifying the region of $\mathfrak{r}$.
For each value of $i$ in range $[1,n)$, the algorithm considers the items $t$ and $t'$ to be the $i$-th and $(i+1)$-th items in $\mathfrak{r}$, respectively.
If $t'$ dominates $t$, the ranking is not valid.
Otherwise, if $t$ does not dominate $t'$,
the algorithm computes the ordering exchange $\times_{t,t'}$ and, based on the values of $t[1]$ and $t'[1]$, decides to use it for setting the upper bound or the lower bound of the ranking region.
After traversing the ranked list $\mathfrak{r}$, the algorithm returns $\frac{\theta_2-\theta_1}{\pi/2}$ as the stability value and $(\theta_1,\theta_2)$ as the region of $\mathfrak{r}$.
Since the algorithm scans the ranked list only once, stability verification in 2D is in $O(n)$.
\submit{The algorithm's pseudocode is provided in the technical report~\cite{techrep}.}

\subsection{Stability Enumeration}
In 2D, $\ar$ is identified by two angles demarcating the edges of the pie-slice.
For example, let $\ar_1$ be defined by the set of constraints $\{w_1 \leq w_2, 2w_1\geq w_2\}$. This defines the set of functions above the line $w_1 = w_2$ and below the line $2w_1 = w_2$, limiting the region of interest to the angles in the range $[\pi/4^\circ, \pi/3^\circ]$.
\submit{
Similarly, region $\ar_{2}$ defined around $f=x_1+x_2$ with the maximum angle $\pi/10^\circ$ corresponds to the angles in the range $[3\pi/20^\circ, 7\pi/20^\circ]$.}
\techrep{Similarly, let $\ar_{2}$ be defined around the function $f=x_1+x_2$ with the maximum angle $\pi/10^\circ$ (i.e., minimum cosine similarity of 95.1\% with $f$). The $\pi/10^\circ$ angle around $f$ identifies the range $[3\pi/20^\circ, 7\pi/20^\circ]$ as the region of interest.}
In what follows, we use $[~\ar[1],~ \ar[2]~]$ to denote the borders of $\ar$.
Based on Definition~\ref{def:1}, the stability of a ranking $\mathfrak{r}\in\mathfrak{R}$ in 2D is the span of its region -- the distance between its two borders.
\techrep{
For example, in Figure~\ref{fig:toy3}, the regions $R_{11}$ and $R_1$ are wide and provide stable rankings, while the rankings provided by $R_5$ and $R_8$ are hazy and may change by small changes in the weight vector.
}

We propose the algorithm \raysweeping (Algorithm~\ref{alg:raysweeping}) that starts from the angle $\ar[1]$ and, while sweeping a ray toward $\ar[2]$, uses the dual representation of the items for computing the ordering exchanges and finding the ranking regions. The algorithm stores the regions, along with the stability of their rankings, in a heap data structure that is later used by the \getnexttd primitive.

%As explained in \S~\ref{sec:pre}, the ordering exchanges in 2D are identified by one angle.

\begin{algorithm}[!h]
\caption{\raysweeping \\
         {\bf Input:} Two dimensional dataset $\mathcal{D}$ with $n$ items and the region of interest in the form of an angle range $[~\ar[1],~ \ar[2]~]$ \\
         {\bf Output:} A heap of ranking regions and their stability
        }
\begin{algorithmic}[1]
\label{alg:raysweeping}
	\STATE sweeper = {\it new} min-heap$([\ar[2]])$;
    \STATE {\revision $\vec{w} = (\cos \ar[1], \sin \ar[1])$}
    \STATE $L =$ $\mathcal{r}_{f}(\mathcal{D})$
    \FOR{$i=1$ to $n-1$}
    	\STATE $\theta =\arctan (L_{i+1}[1]-L_i[1])/(L_i[2]-L_{i+1}[2])$
        \STATE {\bf if} $\ar[1]< \theta < \ar[2]$ {\bf then} sweeper.push$((\theta,L_i,L_{i+1}))$
    \ENDFOR
    \STATE {\revision h} = {\it new} max-heap$()$; $\theta_p = \ar[1]$
    \WHILE{sweeper is not empty}
        \STATE ($\theta$, $t$,$t'$) = sweeper.pop()
        \STATE $i,j = $ index of $t,t'$ in $L$
        \STATE {\revision h}.push $\left( \frac{\theta - \theta_p}{\ar[2] - \ar[1]}, (\theta_p, \theta) \right)$
        \STATE swap $L_i$ and $L_{j}$ \submit{and add the ordering exchanges between the new adjacent items to the sweeper}
        \techrep{
        \IF{$i>1$}
        	\STATE $\theta' = \arctan (t'[1]-L_{i-1}[1])/(L_{i-1}[2]-t'[2])$
        	\STATE {\bf if} $\theta< \theta' < \ar[2]$ {\bf then} sweeper.push( ($\theta',L_{i-1}$))
        \ENDIF
        \IF{$j<n$}
        	\STATE $\theta' = \arctan (t[1]-L_{j+1}[1])/(L_{j+1}[2]-t[2])$
        	\STATE {\bf if} $\theta< \theta' < \ar[2]$ {\bf then} sweeper.push( ($\theta,L_{j+1}$))
        \ENDIF
        }
        \STATE $\theta_p = \theta$
    \ENDWHILE
    \STATE {\bf return} {\revision h}
\end{algorithmic}
\end{algorithm}

%Algorithm~\ref{alg:raysweeping} presents the pseudo-code of \raysweeping.
Algorithm~\ref{alg:raysweeping} starts by ordering the items based on $\ar[1]$.
It uses the fact that at any moment, an adjacent pair in the ordered list of items exchange ordering, and, therefore, computes the ordering exchanges between the adjacent items in the ordered list. The intersections that fall into the region of interest are added to the sweeper's min-heap.
Until there are intersections over which to sweep, the algorithm pops the intersection with the smallest angle, marks the region between it and the previous intersection in the output max-heap, and updates the ordered list accordingly. Upon updating the ordered list, the algorithm adds the intersections between the new adjacent items to the sweeper. Since the total number of intersections between the items is bounded by $O(n^2)$, and the heap operation is in $O(\log n)$, \raysweeping is in $O(n^2\log n)$.

After finding the ranking regions and theirs spans, every call to \getnexttd \techrep{(Algorithm~\ref{alg:getnext2d})} pops the most stable region from the heap and chooses a scoring function $f$ inside the region. The algorithm returns  the ranking $\mathcal{r}_{f}(\mathcal{D})$, along with the width of the region (its stability), to the user.
\submit{
{\revision Due to the space limitations, the pseudo code of \getnexttd is provided in the technical report~\cite{techrep}}.
}

\techrep{
\vspace{3mm}
\begin{algorithm}[!h]
\caption{\getnexttd \\
         {\bf Input:} The heap of regions, with the top-$h$ regions being removed from it \\
         {\bf Output:} the top-$h+1$ ranking, along with its stability
        }
\begin{algorithmic}[1]
\label{alg:getnext2d}
	\STATE $S,(\theta_1,\theta_2)$ = heap.pop()
    \STATE $w = (\cos \frac{\theta_1 + \theta_2}{2}, \sin \frac{\theta_1 + \theta_2}{2})$
    \STATE $L =$ $\mathcal{r}_{f}(\mathcal{D})$
    \STATE {\bf return} $L,S,(\theta_1,\theta_2)$
\end{algorithmic}
\end{algorithm}
}

Since there are no more than $O(n^2)$ regions in the heap, \getnexttd needs $O(\log n)$ to find the $(h+1)$-th stable region. Then, it takes $O(n\log n)$ to compute the {\new ranking} for the region. As a result, the first call to \getnexttd that creates the heap of regions takes $O(n^2\log n)$, while subsequent calls take $O(n\log n)$.
Note that subsequent \getnexttd calls can be done in the order of $O(\log n)$, with memory cost of $O(n^3)$, by storing the ordered list $L$ for every region in \raysweeping algorithm.

\section{Multi dimensional (MD) Ranking}\label{sec:md}
Building upon the intuitions developed from the 2D case, we now turn to the general setting where $d>2$. 
Again, we consider the items in dual space and use the ordering exchanges for specifying the borders of the ranking regions.
Recall from Equation~\ref{eq:dual} that an item $t$ is presented as the hyperplane $\mathsf{d}(t):~ \sum_{i=1}^d t[i].x_i=1$.
For a pair of items $t_i$ and $t_j$ the ordering exchange $h = \times_{t_i,t_j}$ is a hyperplane that contains the functions that assign the same score to both items. Therefore:
\submit{\vspace{-5mm}}
\begin{align}\label{eq:mdexchange}
\times_{t_i,t_j} = \sum\limits_{k=1}^{d} (t_i[k] - t_j[k])x_k = 0
\end{align}
Every hyperplane $h = \times_{t_i,t_j}$ {\em partitions} the function space $\mathcal{U}$ in two ``half-spaces''~\cite{edelsbrunner}:
\begin{itemize}[leftmargin=*,itemsep=0pt]
\item $h^-: \sum_{k=1}^{d} (t_i[k] - t_j[k])x_k < 0$: for the functions in $h^-$, $t_j$ outranks $t_i$.
\item $h^+: \sum_{k=1}^{d} (t_i[k] - t_j[k])x_k > 0$: for the functions in $h^+$, $t_i$ outranks $t_j$.
\end{itemize}

Similar to \S~\ref{sec:2d}, we first show how ordering exchanges can be used for verifying the stability of a ranking and then focus on designing the \getnext operator.

\subsection{Stability Verification}
Identifying the half-spaces defined by the ordering exchanges between adjacent items in a ranking $\mathfrak{r}$ is the key to figuring out its stability.
For each value of $i$ in range $[1,n)$, let $t$ and $t'$ be the $i$-th and $(i+1)$-th items in $\mathfrak{r}$.
Using Equation~\ref{eq:mdexchange}, every function in the positive half-space $h^+: \sum_{k=1}^{d} (t[k] - t'[k])x_k > 0$ ranks $t$ higher than $t'$.
The intersection of these half-spaces specifies an open-ended $d$-dimensional cone (referred to as $d$-cone) whose base is a $(d-1)$ dimensional convex polyhedron.
Every function in this cone generates the ranking $\mathfrak{r}$, while no function outside it generates $\mathfrak{r}$.
In other words, this $d$-cone is the ranking region of $\mathfrak{r}$.
\techrep{Algorithm~\ref{alg:mdsv}}\submit{The algorithm for verifying stability} %uses this, and 
finds the region of a ranking $\mathfrak{r}$ as the set of positive half-spaces defined by the ordering exchanges of the adjacent items in $\mathfrak{r}$. 

\techrep{
\begin{algorithm}[h]
\caption{\sv \\
         {\bf Input:} Dataset $\mathcal{D}$ with $n$ items and $d$ attributes, and the ranking $\mathfrak{r}$ \\
         {\bf Output:} The stability and the region of $\mathfrak{r}$
        }
\begin{algorithmic}[1]
\label{alg:mdsv}
    \STATE $R = \{\}$
    \FOR{$i=1$ to $n-1$}
        \STATE $t = \mathfrak{r}[i]$; $t'=\mathfrak{r}[i+1]$
        \STATE {\bf if} $t$ dominates $t'$ {\bf then continue}
        \STATE {\bf if} $t'$ dominates $t$ {\bf then return} null {\scriptsize \tt // infeasible}
        \STATE $R.add(\sum_{k=1}^{d} (t[k] - t'[k])x_k > 0)$
    \ENDFOR
    \STATE {\bf return} $S(R, \mathcal{U})$ {\tt \scriptsize /*c.f. \S~\ref{subsec:SOracle}*/}, $R$
\end{algorithmic}
\end{algorithm}
}

Based on Definition~\ref{def:1}, the volume ratio of the region of $\mathfrak{r}$ to the one of $\mathcal{U}$ (or, more generally, to $\ar$) is its stability.  However, since $\mathfrak{r}$ is a polyhedron, computing its volume is \#P-hard~\cite{dyer1988complexity}.
%This, however, is \#P-complete~\cite{dyer1988complexity}. 
Therefore, we use numeric methods {\revision and sampling} for estimating this quantity.
Throughout this section, we assume the existence of a stability oracle $S(R,\ar)$ that, given a convex region $R$ in the form of an intersection of half-spaces and a region of interest $\ar$, returns the stability of $R$ in $\ar$.
\techrep{We will describe in \S~\ref{subsec:SOracle} how to construct such an oracle.}
\submit{Due to the space limitations, We will describe in the technical report~\cite{techrep} how to construct such an oracle.}
Stability verification in MD is in $O(n+X_S)$ where $X_S$ is the complexity of constructing the stability oracle. 

\subsection{Stability Enumeration}\label{subsec:thbased}
Similarly to verifying the stability of a ranking, ordering exchanges can be used for finding possible rankings in a region of interest $\ar$.
The set of ordering exchanges intersecting $\ar$ define a dissection of $\ar$ into connected convex $d$-cones, each showing a ranking region as the intersection of ordering exchange half-spaces and $\ar$. This dissection is named the {\em arrangement} of ordering exchange hyperplanes~\cite{edelsbrunner}.
{\revision
For example, % where $d=3$
the ordering exchanges in $\mathbb{R}^3$ are the planes passing through the origin.
Each plane dissects the space in two half-spaces.
The intersection of the half-spaces forms an arrangement in the form of open-ended convex cones.
}
\techrep{
Algorithm~\ref{alg:hyperplanes} shows the pseudocode of the function {\em $\times$hps} that finds the set of ordering exchange hyperplanes passing through the region of interest $\ar$.

\begin{algorithm}[!h]
\caption{{\bf $\times$hps} \\
         {\bf Input:} database $\mathcal{D}$, $n$, $d$, region of interest $\ar$\\
         {\bf Output:} the set of ordering exchange hyperplanes intersecting $\ar$
        }
\begin{algorithmic}[1]
\label{alg:hyperplanes} 
    \STATE $H=\{\}$
    \FOR{$t_i$, $t_j$ in $\mathcal{D}$}
        \IF{$t_i$ and $t_j$ do not dominate each other} 
            \STATE Let $\times_{t_i,t_j}$ be the hyperplane $\sum\limits_{k=1}^{d} (t_i[k] - t_j[k])x_k = 0$
            \STATE {\bf if} $\times_{t_i,t_j}$ intersect with $\ar$ {\bf then} $H.add(\times_{t_i,t_j})$
        \ENDIF
    \ENDFOR
    \STATE {\bf return} $H$
\end{algorithmic}
\end{algorithm}
}

\begin{theorem}\label{th:uniqueRanking}
Every ranking $\mathfrak{r}\in\mathfrak{R}^*$ is provided by the functions in exactly one convex region in the arrangement of ordering exchange hyperplanes in $\ar$.
\end{theorem}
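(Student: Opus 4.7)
The plan is to establish a bijection between rankings in $\mathfrak{R}^*$ and the cells of the arrangement by showing (a) every cell induces a single well-defined ranking, (b) no two distinct cells induce the same ranking, and (c) each such cell is convex, which together with the non-emptiness guaranteed by $\mathfrak{r}\in\mathfrak{R}^*$ gives the statement.

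First I would fix a cell $C$ of the arrangement and show that all scoring functions $f\in C$ produce the same ranking. For any pair of items $t_i,t_j\in\mathcal{D}$, there are two cases. If one item dominates the other, then by Equation~\ref{eq:mdexchange} the relative order is forced to be the same under every $f\in\mathcal{U}$. Otherwise, the hyperplane $\times_{t_i,t_j}$ belongs to the arrangement, and since $C$ is a cell of the arrangement it lies entirely within one open half-space of $\times_{t_i,t_j}$ (it cannot straddle the hyperplane, as that would subdivide it further). Hence the sign of $f_{\vec{w}}(t_i) - f_{\vec{w}}(t_j) = \sum_k (t_i[k]-t_j[k])w_k$ is constant over $C$, fixing the relative ordering of $t_i$ and $t_j$. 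Since this holds for every pair, the entire ranking $\mathfrak{r}$ is determined by $C$.

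Next I would argue uniqueness contrapositively: if $f_1\in C_1$ and $f_2\in C_2$ lie in distinct cells, some hyperplane $\times_{t_i,t_j}$ of the arrangement separates them, i.e.\ $f_1\in h^+$ and $f_2\in h^-$ (or vice versa). By the definition of $h^+$ and $h^-$ above, $f_1$ ranks $t_i$ above $t_j$ while $f_2$ ranks $t_j$ above $t_i$, so the induced rankings differ. Hence each ranking is attained in at most one cell. Combined with the fact that $\mathfrak{r}\in\mathfrak{R}^*$ is, by definition, realized by at least one $f\in\ar$ (which sits in some cell of the arrangement), this gives \emph{exactly one} cell per ranking.

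Finally, convexity of each cell follows because a cell of a hyperplane arrangement is, by construction, the intersection of $\ar$ with a collection of open half-spaces (one per ordering exchange hyperplane in $\ar$); since $\ar$ is itself convex (either a hypercone or an intersection of linear inequalities, per \S\ref{sec:stability:reg}) and half-spaces are convex, the intersection is convex. The only subtle point I anticipate is justifying that a cell cannot be split by some $\times_{t_i,t_j}$ it straddles; this is immediate from the definition of the arrangement (cells are the connected components of $\ar \setminus \bigcup \times_{t_i,t_j}$), but is worth stating explicitly so the reader sees why ``constant sign on $C$'' holds for every ordering exchange hyperplane simultaneously.
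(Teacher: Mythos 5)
Your proposal is correct and follows essentially the same route as the paper's proof: a cell induces a single ranking because no ordering-exchange hyperplane crosses it, two distinct cells are separated by some $\times_{t_i,t_j}$ and hence rank that pair oppositely, and existence follows from $\mathfrak{r}\in\mathfrak{R}^*$. You are somewhat more careful than the paper in handling dominated pairs, in justifying that distinct cells must be separated by some hyperplane, and in explicitly verifying convexity, but these are elaborations of the same argument rather than a different approach.
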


\submit{
\noindent{\it Proof sketch:} The proof follows the non-existence of ordering exchanges inside a region and the existence of at least one ordering exchange between two regions. Additional details are provided in the technical report~\cite{techrep}.

}
\techrep{
\begin{proof}
First, every ranking $\mathfrak{r}\in\mathfrak{R}$ is generated by at least one function in $\ar$. Since the arrangement partitions the $\ar$ into disjoint regions, this function belongs to one of the regions ($d$-cones) of the arrangement. Also, because there is no ordering exchange in a region, every function in it will generate the same ranking, i.e., for every ranking there exists at least one region that generates it.
Now, to prove the one to one mapping by contradiction, assume the existence of two regions that generate the same ranking. Since these two regions are disjoint, there should exist at least an ordering exchange hyperplane $h=\times_{t_i,t_j}$ between them. Hence, one of them belongs to the half-space $h^+$ while the other falls into $h^-$. Therefore, in one of the regions, $t_i$ is preferred over $t_j$ while in the other $t_j$ is ranked higher.
This contradicts with the assumption that both regions generate the same ranking.
\end{proof}
}

Theorem~\ref{th:uniqueRanking} shows the one-to-one mapping between the rankings $\mathfrak{r}\in\mathfrak{R}^*$ and the regions of the arrangement.
Following Theorem~\ref{th:uniqueRanking}, the baseline for finding the stable regions in $\ar$ is to first construct the arrangement and then, similarly to \S~\ref{sec:2d}, create a heap of regions and their stabilities. % that will be used in the \getnext$_b$ operator. \julia{I think just  \getnext, no "b".}
Then, each \getnext operation is as simple as removing the most stable ranking from the heap. The construction of arrangements is extensively studied in the literature~\cite{orlik2013arrangements,grunbaum2003arrangements, schechtman1991arrangements,fairranking, agarwal2000arrangements, edelsbrunner}.
The problem with this baseline is that it first needs to construct the complete arrangement of ordering exchange hyperplanes, and to compute the stability of each.
The number of ordering exchanges intersecting $\ar$ is bounded by $O(n^2)$. 
Therefore the arrangement can contain as many as $O(n^{2d})$ regions~\cite{edelsbrunner}. Yet, the baseline needs to compute the stability of each ranking associated with every region.
Considering that our objective is to find stable rankings, rather than to discover all possible rankings, and that the user will likely be satisfied after observing {\em a few} rankings, this construction if wasteful. 
%This gets more highlighted by paying attention to the fact that our objective is to find stable rankings (not discovering all possible rankings), and the user will probably get satisfied after observing {\em a few} rankings.  Therefore, the construction of all regions is a waste of computation.
{\revision
Instead, since the objective is to find the next stable ranking (not to discover all rankings), we propose an algorithm that targets the construction of only the next stable ranking and delays the construction of other rankings.
}

Arrangement construction is an iterative process that starts by partitioning the space into two half-spaces by adding the first hyperplane $H[1]$ (it partitions the space into $H[1]^-$ and $H[1]^+$).
The algorithm then iteratively adds the other hyperplanes; to add $H[i]$, it first identifies the set of regions in the arrangement of $H[1]$ to $H[i-1]$ that $H[i]$ intersects with, and then splits each such region into two regions (one including $H[i]^-$ and one $H[i]^+$).

\techrep{
\begin{figure}[!tb]
    \centering
    \begin{small}
    \begin{tabular}{|l|}
    \hline
    \\
    \texttt{struct Region $\{$ }\\
    \texttt{\hspace{5mm} C; // region half-spaces}\\
    \texttt{\hspace{5mm} S; // the stability of the region}\\
    \texttt{\hspace{5mm} pending; // the first pending hyperplane}\\
    \texttt{\hspace{5mm} sb; // samples beginning index (c.f. \S~\ref{subsec:par})}\\
    \texttt{\hspace{5mm} se; // samples end index (c.f. \S~\ref{subsec:par})}\\
    {\tt $\}$ } \\
    \\
    \hline
    \end{tabular}
    \end{small}
    \caption{The region data structure}
    \label{fig:region}
\end{figure}
}

The \getnexttb algorithm, however, only breaks down the largest region at every iteration, delaying the construction of the arrangement in all other regions.
The algorithm uses the ``region'' data structure \techrep{(Figure~\ref{fig:region})} to record each region in the arrangement of ordering exchanges.
This data structure contains the following fields: (a) {\tt C}: the set of half-spaces defining the region; (b) {\tt S}: the stability of the region, and (c) {\tt pending}: the index of the next hyperplane to be added to the region. % \julia{what's ``it'', the arrangement?}
In addition, every region contains two extra fields {\tt sb} and {\tt se} that are used for determining the intersection of next hyperplanes with it\techrep{ (as well as computing its stability)}. 
\submit{Due to the space limitations, we provide further details about these, as well as the pseudo code of the algorithm \getnexttb in the technical report~\cite{techrep}.}
\techrep{
We defer further details about these two fields to \S~\ref{subsec:par}.

\begin{algorithm}[!h]
\caption{\getnexttb \\
         {\bf Input:} database $\mathcal{D}$, $n$, $d$, \techrep{region of interest }$\ar$, top-$h$ stable rankings $\mathfrak{R}_h$\\
         {\bf Output:} $(h+1)$-th stable ranking and its stability measures
        }
\begin{algorithmic}[1]
\label{alg:tba}
    \IF{$\mathfrak{R}_h$ is $\emptyset$}
        \STATE heap = {\it new } max-heap()
        \submit{
        \STATE $H=$the ordering exchanges that intersect $\ar$
        }
        \techrep{\STATE $H=${\bf $\times$hps}($\mathcal{D}$, $n$, $d$, $\ar$)}
           \STATE heap.push(1,{\it new} Region($C=\{\}$, pending=$1$, $S=1$))
    \ENDIF
    \WHILE{heap is not empty}
        \STATE $s,r$ = heap.pop()
        \WHILE{true}
            \submit{
            \STATE {\bf if $r.$pending$>|H|$} {\bf then} set $w$ as a point in $r$ and {\bf return} $\mathcal{r}_{f}(\mathcal{D})$, $r.S$
            }
            \techrep{
            \IF{$r.$pending$>|H|$}
                \STATE $w$ = a point in $r$
                \STATE {\bf return} $\mathcal{r}_{f}(\mathcal{D})$, $r.S$
            \ENDIF
            }
            \STATE $k = $ {\bf passThrough}($H[r.$pending$]$, $r$) {\scriptsize \tt // c.f. \S~\ref{subsec:par}}
            \STATE {\bf if} $k>0$ {\tt\scriptsize /*intersects $r$*/} {\bf then break}
            \STATE $r.$pending $+=1$
        \ENDWHILE
        \STATE $C_1 = C_2 = r.C$; $p = r.$pending$+1$
        \STATE add $H[r.$pending$]^-$ to $C_1$ and $H[r.$pending$]^+$ to $C_2$
        \STATE $r_1 = $ {\it new} Region($C = C_1$, pending=$p$, $S=${\bf S}($C_1$) \big)
        \STATE $r_2 = $ {\it new} Region($C = C_2$, pending=$p$, $S=${\bf S}($C_2$) \big)
        \STATE heap.push($r_1.S$,$r_1$); heap.push($r_2.S$,$r_2$)
    \ENDWHILE
\end{algorithmic}
\end{algorithm}
} % end of techrep

\techrep{Algorithm~\ref{alg:tba} shows the pseudo-code of the multi-dimensional \getnext operator.
}
%At the high level, w
While constructing the arrangement of hyperplanes, the algorithm keeps track of the stability of the regions, as it adds hyperplanes to the largest one.
It uses a max-heap for this purpose.
For the first \getnext operation, the algorithm \techrep{calls Algorithm~\ref{alg:hyperplanes} to }find\submit{s} the set of ordering exchanges $H$ that intersect with $\ar$.
It also creates the root region that contains all functions in $\ar$ and adds it to the heap.
While the heap is not empty, the algorithm pops the most stable region $r$ from it. It then iterates over the pending hyperplanes that can be added to $r$, attempting to find one that intersects with that region. Testing whether a hyperplane intersects with a region is done by solving a linear program.  Specifically, we solve a quadratic program %for the region of interest defined by the ray and angle model)
that looks for a function in $\ar$ that satisfies both the inequality constraints defined by the half-spaces of the region, and the equality constraints defined by the hyperplane. (Alternatively, sampling can be used for this purpose\techrep{, as we will show in \S~\ref{subsec:par}.}\submit{. We provide further details about this in the technical report~\cite{techrep}.})

If no more hyperplanes can be added to region $r$, the algorithm returns $r$ as the next stable region.
Otherwise, if a hyperplane is found that intersects with $r$, then the algorithm uses it to break $r$ into two regions, and adds them to the heap.

Still, in the worst case (where all regions are equally stable) the algorithm may need to construct the arrangement before returning even the first stable region. Therefore, the worst case cost of the algorithm is still $O(n^{2d})$.

Throughout this section, we assumed the existence of an oracle that, given a region in the form of a set of half-spaces, returns its stability. In next section, we discuss unbiased sampling from the function space that plays a key role in the design of the oracle.
Such sampling will also enable the design of the randomized algorithm in \S~\ref{sec:randomized} that does not depend on the arrangement construction (and therefore, does not suffer its high complexity).

%\section{\getnextr : Randomized Get-Next}\label{sec:randomized}
%\submit{\vspace{-2mm}}
\subsection{Randomized Get-Next}\label{sec:randomized}
\begin{comment}
As explained in \S~\ref{sec:md}, the threshold-based algorithm aims to efficiently finding the most stable regions by focusing the construction to the candidate region.
However, it still needs to fully construct the regions before returning them which is computationally expensive and makes it intractable for large settings.

On the other hand, in large settings, users are usually interested in the head (the top-$k$) of the ranked list, rather than the complete ordering.
For example, a complete ordering between ten thousands items may be overwhelming and less interesting than the top few items.
Thus, in this section, we propose a randomized algorithm for settings with large number of items.
\end{comment}

In a setting with many items, users are usually not interested in the complete ranking between all of the items.
The top-$k$ items model~\cite{ihab,fagin2003} is a natural fit for such settings, and therefore, is used as the de-facto data retrieval model in the web~\cite{asudeh2016query,qr2}.
In this model, the user is interested in the head (the top-$k$) of the ranked list, rather than the complete ordering.
In the following, we propose a randomized algorithm that, in addition to being scalable for large settings, is applicable for enumerating the top-$k$ items.

%Every ranking in the region of interest is associated with a stability measure. The idea is that if we can sample the rankings based on their stabilities, we discover them and can devise a Monte-Carlo method~\cite{montecarlo} for estimating their stability.
{\revision
While every ranking is generated by continuous ranges of functions, every function $f$ generates only one ranking of items. Moreover, the larger the volume of a ranking region (i.e. the more stable it is), the higher the chance of choosing a random function from it.
Therefore, {\it uniform sampling of the function space allows sampling of rankings based on their stability distribution}.
%This suggests using such samples from the function space to design a Monte-Carlo method.
We delay the details of a sampler that generates functions uniformly at random from a region of interest $\ar$ to \S~\ref{sec:sampler}.
Assuming the existence of such sampler, in this section, we use the Monte-Carlo methods for Bernoulli random variables~\cite{blaker2000confidence, hickernell2013guaranteed, hoeffding1963probability, chernoff1952measure,montecarlo} and design the randomized \getnextr operator.}
% As discussed earlier in \S~\ref{sec:sampler}, %each function in the region of interest $\ar$ provides a ranking of items and 
% the uniform sampling of the functions of a region of interest $\ar$ provides a sampling of the rankings based on the distribution of their volumes.
% Therefore, in this section, we 
%use the unbiased sampler provided in \S~\ref{sec:sampler} and use the Monte-Carlo methods for Bernoulli random variables~\cite{blaker2000confidence, hickernell2013guaranteed, hoeffding1963probability, chernoff1952measure,montecarlo} and design the randomized \getnextr operator.
This operator works both for finding the stable rankings in a region of interest, as well as the top-$k$ results. In the following, we use ranking for the explanations but all the algorithms and explanations are also valid for top-$k$.

Monte-Carlo methods work based on repeated sampling and the central limit theorem in order to solve deterministic problems. We consider using these algorithms for numeric integration.
Based on the law of large numbers~\cite{durrett2010probability},
the mean of independent random variables can be used for approximating the integrals.
That is possible, as the expected number of occurrences of each observation is proportional to its probability.
At a high-level, the Monte-Carlo methods work as follows:
first they generate a sufficiently large set of inputs based on a probability distribution over a domain; then they use these inputs to do  estimation and aggregate the results.

We use sampling both for discovering the rankings as well as for estimating their stability.
We design the \getnextr operator to allow the user to either (i) specify the sampling budget, or (ii) the confidence interval.
Each of these two approaches has their pros and cons.
The running time in (i) is fixed but the error is variable. In (ii), on the other hand, the operator guarantees the output quality while the running time is not deterministic.
In the following, we \techrep{first} explain the details for (i).
\techrep{Then}\submit{Due to space limitations}, we show how this can be adopted for (ii)\submit{ in the technical report~\cite{techrep}}.

\techrep{\subsection{Fixed budget}}
The sampler explained in \S~\ref{subsec:sampleui} draws functions uniformly at random from the function space. Each function is associated with a ranking. The uniform samples on the function space provide ranking samples based on their stabilities (the portion of functions in $\ar$ generating them).
For each ranking $\mathfrak{r}\in\mathfrak{R}$, consider the distribution of drawing a function that generate it. The probability mass function of this distribution is:
\begin{align}\label{eq:functiondist}
p(\Theta;S(\mathfrak{r}))=\begin{cases}
S(\mathfrak{r}) &\mathcal{r}_{f(\Theta)}(\mathcal{D}) = \mathfrak{r} \\
1-S(\mathfrak{r}) &\mathcal{r}_{f(\Theta)}(\mathcal{D}) \neq \mathfrak{r}
\end{cases}
\end{align}

Let the random Bernoulli variable $x_\mathfrak{r}$ be 1 if $\mathcal{r}_{f(\Theta)}(\mathcal{D}) = \mathfrak{r}$ and 0 otherwise.
Recall that the mean and standard deviation of a Bernoulli distribution with the success probability of $S(\mathfrak{r})$ are $\mu_\mathfrak{r} = S(\mathfrak{r})$ and $\sigma_\mathfrak{r} = S(\mathfrak{r})(1-S(\mathfrak{r}))$.
Let $m_\mathfrak{r}$ be the average of a set of $N$ samples of the random variable $x_\mathfrak{r}$.
Then, $E[m_\mathfrak{r}] = S(\mathfrak{r})$ and the standard deviation of samples are $s_\mathfrak{r} = m_\mathfrak{r}(1-m_\mathfrak{r})$.
Based on the central limit theorem, we also know that the distribution of the sample average is N$\big(\mu_\mathfrak{r}, \frac{\sigma_\mathfrak{r}}{\sqrt{N}}\big)$ -- the Normal distribution with the mean $\mu_\mathfrak{r}$ and standard deviation $\frac{\sigma_\mathfrak{r}}{\sqrt{N}}$.
For a large value of $N$, we can estimate $\sigma_\mathfrak{r}$ by $s_\mathfrak{r}$.

For a confidence level $\alpha$, the confidence error $e$ identifies the range $[m_\mathfrak{r}-e, m_\mathfrak{r}+e]$ such that:
\begin{align}
p(m_\mathfrak{r}-e\leq \mu_\mathfrak{r}\leq m_\mathfrak{r}+e)= 1-\alpha
\end{align}
Using the Z-table:
\techrep{
\begin{align} \label{eq:costerror}
\nonumber
e &= Z(1-\frac{\alpha}{2})\frac{s_\mathfrak{r}}{\sqrt{N}}\\
  &= Z(1-\frac{\alpha}{2})\sqrt{\frac{ m_\mathfrak{r}(1-m_\mathfrak{r})}{N}}
\end{align}
}
\submit{
\vspace{-3.5mm}
\begin{align} \label{eq:costerror}
e &= Z(1-\frac{\alpha}{2})\frac{s_\mathfrak{r}}{\sqrt{N}}
  = Z(1-\frac{\alpha}{2})\sqrt{\frac{ m_\mathfrak{r}(1-m_\mathfrak{r})}{N}}
\end{align}
}

Using this argument, we use a set of $N$ samples of the function space for the design of \getnextr with a fixed budget.
Every time  \getnextr is called, we collect a set of $N$ samples and use them for finding the next stable ranking and estimating its stability. In order to provide a more accurate estimation, in addition to the $N$ new samples, it uses the aggregates of its previous runs. Algorithm~\ref{alg:gnr1} shows the pseudocode of \getnextr with a fixed budget.

%\submit{\vspace{-2mm}}
\begin{algorithm}[!h]
\caption{\getnextr\\
{\bf Input:} \techrep{database} $\mathcal{D}$, \techrep{region of interest} $\ar$, previous stable rankings $\mathfrak{R}_{h-1}$, hash of previous aggregates $cnts$, total number of previous samples $N'$, confidence level $\alpha$, and sampling budget $N$\\
{\bf Output:} Next stable ranking and its stability measures
}
\begin{algorithmic}[1]
\label{alg:gnr1}
	\FOR {$k=1$ to $N$} 
    	\STATE $w =$ Sample$\ar(\ar)$
        \STATE $\mathfrak{r} = \mathcal{r}_{f}(\mathcal{D})$
        \techrep{
        \IF{$\mathfrak{r}\in cnts.keys$} 
        	\STATE $cnts[\mathfrak{r}]+=1$ 
        \ELSE
        	\STATE $cnts[\mathfrak{r}]=1$
        \ENDIF
        }
        \submit{
        \STATE {\bf if} $\mathfrak{r}$ is in $cnts.keys$ {\bf then} $cnts[\mathfrak{r}]$+=$1$ 
        {\bf else} $cnts[\mathfrak{r}]=1$
        }
    \ENDFOR
    \STATE{\bf if} $cnts.keys\backslash \mathfrak{R}_{h-1} = \emptyset$ {\bf then return} null 
    \STATE $\mathfrak{r}_{h} = \underset{\mathfrak{r}\in ~ cnts.keys\backslash \mathfrak{R}_{h-1}}{\mbox{argmax}}\big( cnts[\mathfrak{r}]\big)$
    \techrep{    
    \STATE $S(\mathfrak{r}_{h}) = cnts[\mathfrak{r}_{h}]/(N+N')$
    \STATE $e = Z(1-\frac{\alpha}{2})\sqrt{\frac{S(\mathfrak{r}_{h})(1-S(\mathfrak{r}_{h}))}{N+N'}}$
    }
    \submit{
    \STATE $S(\mathfrak{r}_{h}) = \frac{cnts[\mathfrak{r}_{h}]}{N+N'}$;
     $e = Z(1-\frac{\alpha}{2})\sqrt{\frac{S(\mathfrak{r}_{h})(1-S(\mathfrak{r}_{h}))}{N+N'}}$
    }
    \STATE {\bf return} $\mathfrak{r}_{h},S(\mathfrak{r}_{h}),e$
\end{algorithmic}
\end{algorithm}

Algorithm~\ref{alg:gnr1} uses a hash data structure that contains the aggregates of the rankings it has observed so far.
Upon calling the algorithm, it first draws $N$ sample functions from the region of interest $\ar$.
For each sample function, the algorithm finds the corresponding ranking and checks if it has previously been discovered. If not, it adds the ranking to the hash and sets its count as 1; otherwise, it increments the count of the ranking.
If the number of discovered rankings is at most $h$, the algorithm fails to find a new ranking and returns null.
The algorithm then chooses the ranking that does not belong to top-$h$ and has the maximum count. It computes the stability and confidence error of the ranking and returns it.

Considering a budget of $N$ samples while finding the ranking for each sample, the running time of \getnextr is $O(N\times n\log n)$.

\techrep{
The \getnextr with fixed budget selects the rankings from the ones it has observed. While finding the stable rankings that a relatively large portion of function space generate them is very likely, the rankings that are generated by a small portion of the region is unlikely. First, we would like to remind here that the objective is finding the stable rankings, not the rankings that are very unlikely to generate.
Still, if the user is not satisfied with the discovered rankings, she may continue calling the operator to collect more samples and increase the chance of discovering the rare rankings.
Next, we explain the other version of \getnextr that guarantees the confidence error. We will also explain the expected number of samples that are required for the discovery of the rankings with a specific stability.
}

\techrep{
\subsection{Fixed confidence error}
While the \getnextr with fixed budget consumes constant number of samples at every iteration, it does not provide a bound on the confidence error.
Here, we fix the error while accepting a non-deterministic number of samples per operation.
The algorithm is similar to the previous one and works based the central limit theorem.
However, instead of sampling a fixed number of functions, it continues the sampling until the confidence error for its estimate of the next stable ranking is not more than the required error $e$.
In high-level, the algorithm first finds its estimate of the next stable ranking and computes its confidence error, if it is less than $e$ returns the ranking, otherwise continue the sampling.
Algorithm~\ref{alg:gnr2} shows the pseudocode of this operation.
\begin{algorithm}[!h]
\caption{\getnextr\\
{\bf Input:} database $\mathcal{D}$, region of interest $\ar$, top-$h$ stable rankings $\mathfrak{R}_h$, hash of previous aggregates $counts$, total number of previous samples $N'$, confidence level $\alpha$, confidence error $e$\\
{\bf Output:} $(h+1)$-th stable ranking and its stability
}

\begin{algorithmic}[1]
\label{alg:gnr2}
	\STATE $N = 0$
    \STATE $\mathfrak{r}_{h+1} = \underset{\mathfrak{r}\in ~ counts.keys\backslash \mathfrak{R}_h}{\mbox{argmax}}\big( counts[\mathfrak{r}]\big)$
	\WHILE{{\bf true}}
        \IF{$\mathfrak{r}_{h+1}$ is not null}
    		\STATE $S(\mathfrak{r}_{h+1}) = \frac{counts[\mathfrak{r}_{h+1}]}{N+N'}$
    		\STATE $e' = Z(1-\frac{\alpha}{2})\sqrt{\frac{S(\mathfrak{r}_{h+1})(1-S(\mathfrak{r}_{h+1}))}{N+N'}}$
        	\IF{$e'\leq e$}
        		\STATE $N'+=N$
            	\STATE {\bf return} $\mathfrak{r}_{h+1}, S(\mathfrak{r}_{h+1}), N$
        	\ENDIF
        \ENDIF
        \STATE $w =$ Sample$\ar(\ar)$; $N+=1$
        \STATE $\mathfrak{r} = \mathcal{r}_{f_w}(\mathcal{D})$
        \IF{$\mathfrak{r}$ is in $counts.keys$} 
        	\STATE $counts[\mathfrak{r}]+=1$ 
        \ELSE
        	\STATE $counts[\mathfrak{r}]=1$
        \ENDIF
        \IF{$\mathfrak{r}\notin \mathfrak{R}_h$ {\bf and} $(\mathfrak{r}_{h+1}$ is null {\bf or} $counts[\mathfrak{r}]>counts[\mathfrak{r}_{h+1}])$}
        	\STATE $\mathfrak{r}_{h+1} = \mathfrak{r}$
        \ENDIF
    \ENDWHILE
\end{algorithmic}
\end{algorithm}

At every iteration, Algorithm~\ref{alg:gnr2} looks into the set of rankings it has observed so far.
Therefore, it first needs to observe a ranking at least once to be able to estimate its stability.
Theorem~\ref{th:observationcost} provides the expected cost for observing a ranking $\mathfrak{r}$ with the stability of $S(\mathfrak{r})$.
Following this theorem, \getnextr needs to spends the expected cost of $1/S(\mathfrak{r}_{h+1})$ for discovering $\mathfrak{r}_{h+1}$. 
\begin{theorem}\label{th:observationcost}
Given a ranking $\mathfrak{r}\in\mathfrak{R}$ with the stability of $S(\mathfrak{r})$, the expected number of samples required for observing it is $\frac{1}{S(\mathfrak{r})}$ with the variance of $\frac{1-S(\mathfrak{r})}{S(\mathfrak{r})^2}$.
\end{theorem}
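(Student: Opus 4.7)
The plan is to recognize this as the standard waiting time for the first success in a sequence of independent Bernoulli trials, and then invoke the known moments of the geometric distribution. The key conceptual step is to justify, from the setup already established, that successive samples can legitimately be treated as i.i.d. Bernoulli trials with success probability $S(\mathfrak{r})$.

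First, I would argue the Bernoulli structure: by the sampler guarantee (to be established in \S\ref{sec:sampler}), each call of Sample$\ar$ draws a function $f$ uniformly at random from $\ar$, independently of all other draws. Using the probability mass function in Equation~\ref{eq:functiondist}, the event $\mathcal{r}_{f}(\mathcal{D}) = \mathfrak{r}$ occurs with probability exactly $S(\mathfrak{r})$, and otherwise fails. Thus the indicator that the $i$-th sample realizes $\mathfrak{r}$ is Bernoulli$(S(\mathfrak{r}))$, and these indicators are mutually independent.

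Next, let $N$ denote the index of the first sample on which $\mathfrak{r}$ is observed. By the previous step, $N$ is geometrically distributed with success probability $p = S(\mathfrak{r})$, i.e.\ $\Pr[N = k] = (1-p)^{k-1} p$ for $k \geq 1$. I would then simply cite (or, if space allows, briefly recompute) the well-known identities
\begin{align}
\mathbb{E}[N] = \sum_{k=1}^{\infty} k (1-p)^{k-1} p = \frac{1}{p}, \qquad
\mathrm{Var}[N] = \frac{1-p}{p^2},
\end{align}
obtained for example by differentiating the geometric series $\sum_{k \geq 0} (1-p)^k = 1/p$ once and twice in $p$. Substituting $p = S(\mathfrak{r})$ yields the two claimed expressions $1/S(\mathfrak{r})$ and $(1-S(\mathfrak{r}))/S(\mathfrak{r})^2$, completing the argument.

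The only nontrivial obstacle is the independence/uniformity claim for the samples; once the sampler of \S\ref{sec:sampler} is accepted as producing i.i.d.\ uniform draws on $\ar$, the rest is a textbook calculation. Since the uniform sampler is the foundation for the entire randomized section, I would make a brief forward reference to it rather than re-justifying independence here.
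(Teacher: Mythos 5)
Your proposal is correct and follows essentially the same route as the paper: both model the first observation of $\mathfrak{r}$ as the first success in Bernoulli$(S(\mathfrak{r}))$ trials via Equation~\ref{eq:functiondist}, identify the waiting time as geometrically distributed, and read off the standard mean $1/S(\mathfrak{r})$ and variance $(1-S(\mathfrak{r}))/S(\mathfrak{r})^2$. Your version is slightly more explicit about the independence and uniformity of the draws from the sampler, but this is a refinement of the same argument rather than a different one.
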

\begin{proof}
Consider the probability distribution provided in Equation~\ref{eq:functiondist} and the random Bernoulli variable $x_\mathfrak{r}$ that is 1 if $\mathcal{r}_{f(\Theta)}(\mathcal{D}) = \mathfrak{r}$ and 0 otherwise.
We want to find the likelihood of the event $x_\mathfrak{r}$ with the success probability of $S(\mathfrak{r})$
happening first at exactly the i-th trial.
This is modeled by the Geometric distribution with the following probability mass function:
$$
p(i;S(\mathfrak{r}))=(1-S(\mathfrak{r}))^{x-1}S(\mathfrak{r})
$$
The mean and variance of such distribution are $\mu=\frac{1}{S(\mathfrak{r})}$ and $\sigma = \frac{1-S(\mathfrak{r})}{S(\mathfrak{r})^2}$.
These provide the expected number of samples, as well as the variance, required for observing $\mathfrak{r}$ for the first time.
\end{proof}

Theorem~\ref{th:observationcost} shows the reverse relation between the stability of a ranking and the cost for its discovery, which indicates that at the beginning that the rankings are stable, the cost of discovering them is low; on the other hand, the cost for discovering the unstable rankings that rarely happen is high.

Algorithm~\ref{alg:gnr2} guarantees the confidence error $e$.
Using the Equation~\ref{eq:costerror}, the expected number of samples for such guarantee for $\mathfrak{r}_{h+1}$ is:
\begin{align}\label{eq:expectedsample4e}
N_{\mathfrak{r}_{h+1}} = S(\mathfrak{r}_{h+1})(1-S(\mathfrak{r}_{h+1}))\big(\frac{Z(1-\frac{\alpha}{2})}{e}\big)^2
\end{align}
\begin{comment}
Using Theorem~\ref{th:observationcost} and Equation~\ref{eq:expectedsample4e} the expected complexity of \getnextr with fixed confidence error is as provided in Theorem~\ref{th:gnr2complexity}.

\begin{theorem}\label{th:gnr2complexity}
Let $X$ be $\max\big(1/S(\mathfrak{r}_{h+1}),
        S(\mathfrak{r}_{h+1})(1-S(\mathfrak{r}_{h+1}))\\(\frac{Z(1-\frac{\alpha}{2})}{e})^2 \big)$. The expected complexity of Algorithm~\ref{alg:gnr2} is\footnote{The expected cost for the top-$k$ version is $O(\max\big(c, (C_s + n) (N'- X) \big)$.}
        $$O(\max\big(c, (C_s + n\log(n)) (N'- X) \big)$$
where $C_s$ is the cost of sampling a function from the region of interest $\ar$.
\end{theorem}
\abol{In the analyses I need to figure out and add the expected size of $counts$.}
\begin{proof}
TBD!
\end{proof}

Intuitively speaking, the first few calls of \getnextr using Algorithm~\ref{alg:gnr2}
quickly discover stable rankings, but need to take enough samples to guarantee the required error; after that, most of the cost is for discovering the less stable rankings as it probably already has collected enough samples to guarantee the  confidence error for the discovered rankings.
\end{comment}
}

%\submit{\vspace{-1mm}}
\subsubsection{\revision Stable top-k items}\label{subsed:partialranking}
When $n$ is large, instead of the complete ranking, the user may be interested in the top-$k$ items.
The top-$k$ items may either be treated as a set or a ranked list.
A company that considers its top-$k$ candidates for the on-site interview is an example of the top-$k$ set model, whereas for a student that wants to apply for the top colleges, the ranking between the top-$k$ colleges is important.

% {\revision
% A challenge for finding the top-$k$ sets is that their regions are not necessarily convex. That is because multiple rankings can have equal top-$k$ sets. Let $\mathfrak{R}_\mathfrak{p}$ be the \techrep{set of }rankings in $\mathfrak{R}$ that have the same top-$k$ set $\mathfrak{p}$. Every function in the region of a ranking $\mathfrak{r}\in\mathfrak{R}_\mathfrak{p}$ produces the same top-$k$ set. Hence, the region of $\mathfrak{p}$ is $\cup_{\forall\mathfrak{r}\in\mathfrak{R}_\mathfrak{p}}R(\mathfrak{r})$ and is not necessarily convex.
% }

Unfortunately the MD algorithm \getnexttb is not applicable here, as different ranking regions may share the same top-$k$ items.
Therefore, the algorithm cannot focus only on a single region, while delaying the others.
Fortunately, the randomized algorithm \getnextr can be used for partial rankings. Instead of maintaining the counts of complete rankings, it counts the occurrences of partial rankings. %, and all subsequent results hold.
In \S~\ref{sec:exp}, we will show that \getnextr is both effective and efficient for top-$k$ items.

%\submit{\vspace{-2mm}}
\section{Unbiased Function Sampling}\label{sec:sampler}
\techrep{
\begin{figure*}[t] 
    \begin{minipage}[t]{0.24\linewidth}
        \centering
        \includegraphics[width = \textwidth]{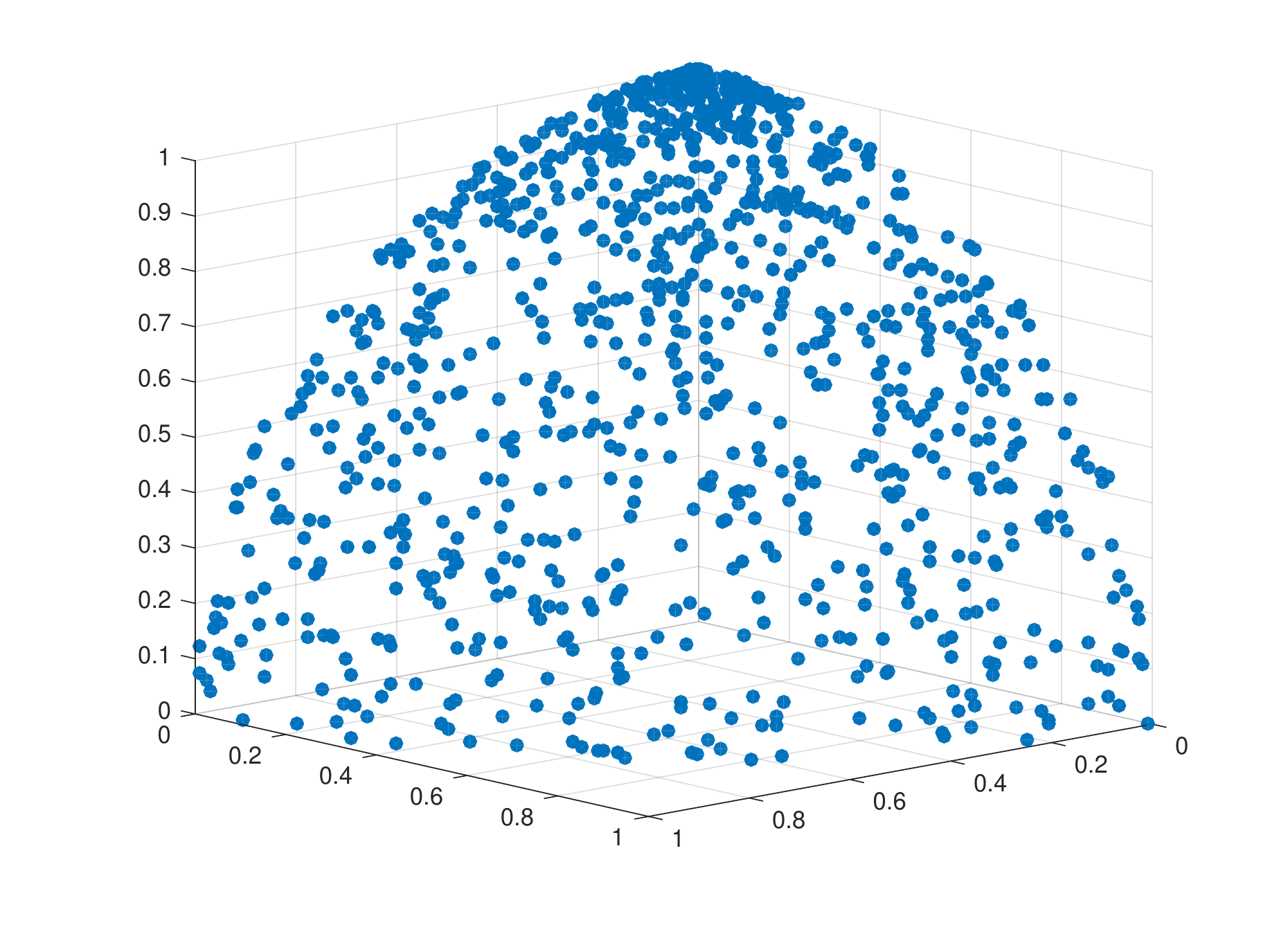}
        \caption{Distribution of 1000 random functions in $\mathbb{R}^3$, generated by taking uniform samples of the angles}
        \label{fig:randf1}
    \end{minipage}
    \hspace{1mm}
    \begin{minipage}[t]{0.24\linewidth}
        \centering
        \includegraphics[width = \textwidth]{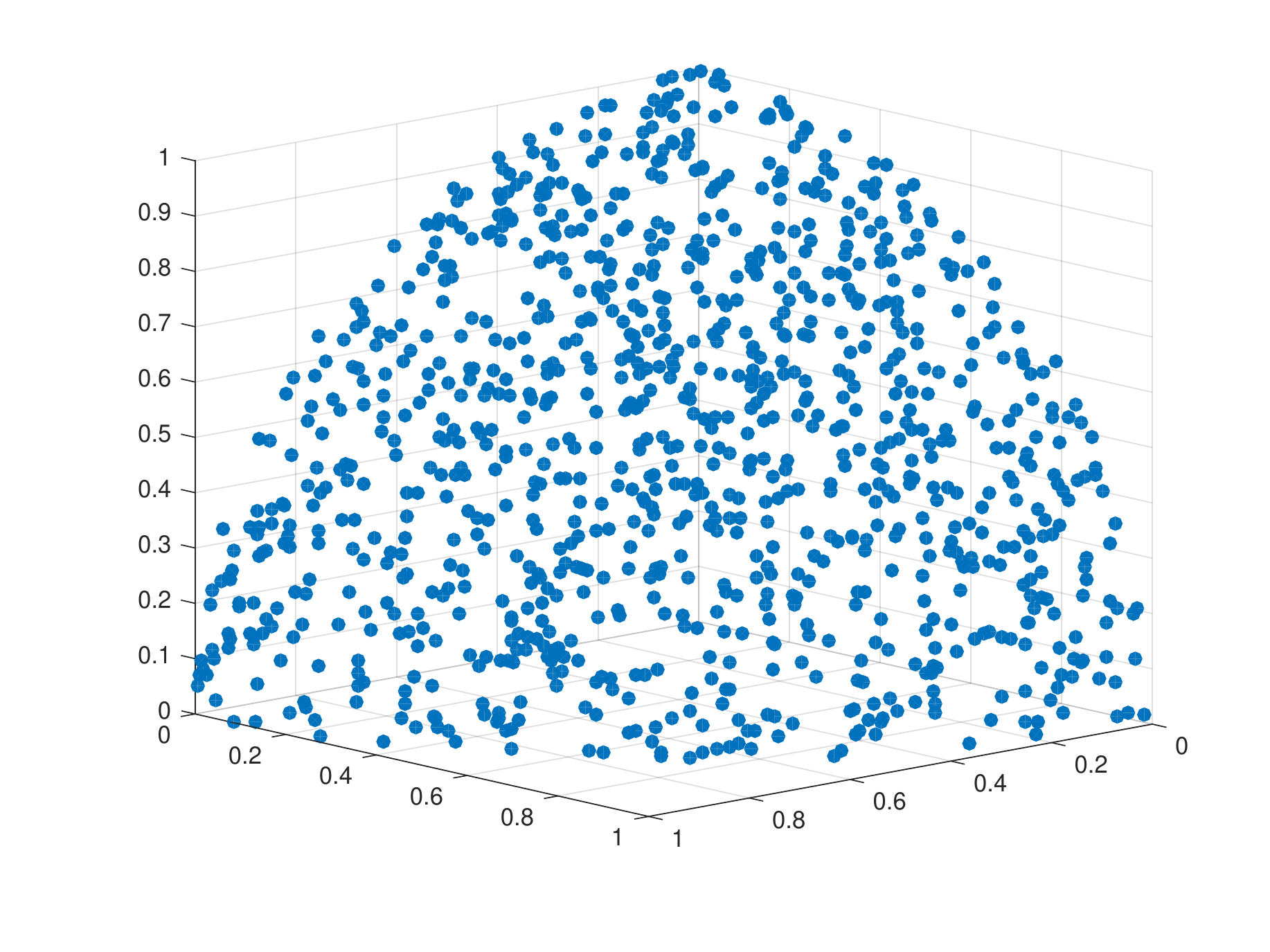}
        \caption{Illustration of 1000 random uniform functions taken in $\mathbb{R}^3$, using Algorithm~\ref{alg:sampu}}
        \label{fig:randf2}
    \end{minipage}
    \hspace{1mm}
    \begin{minipage}[t]{0.24\linewidth}
        \centering
        \includegraphics[width = 0.85\textwidth]{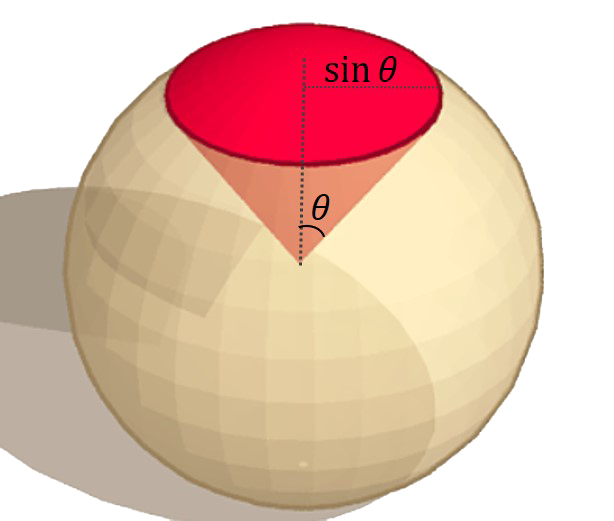}
        \caption{Modeling $\ar$ as a unit $d$-spherical cap around the $d$-th axis}
        \label{fig:roi}
    \end{minipage}
    \hspace{1mm}
    \begin{minipage}[t]{0.24\linewidth}
        \centering
        \includegraphics[width = 1.03\textwidth]{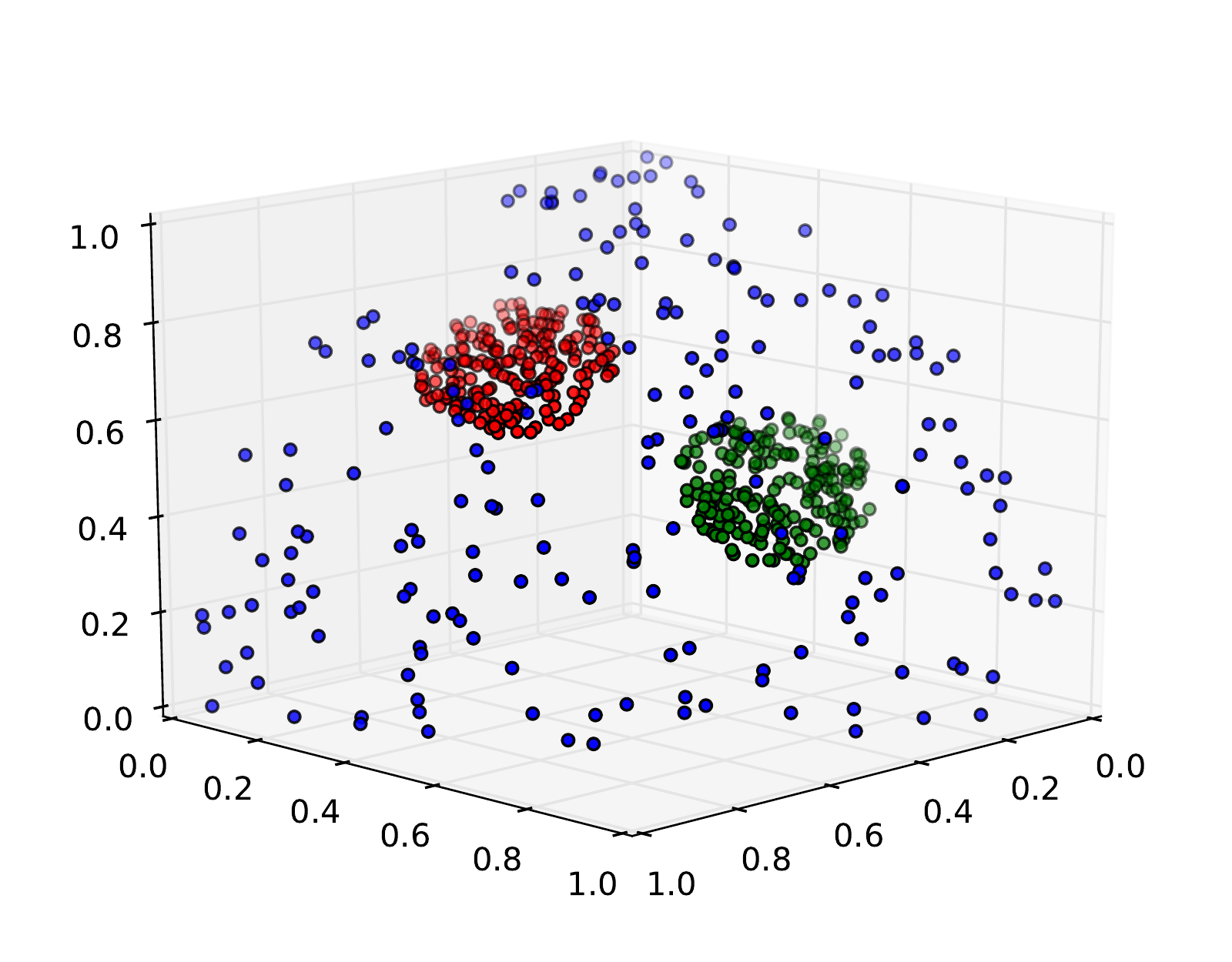}
        \caption{Samples generated using (i) blue: Algorithm~\ref{alg:sampu}, (ii) green: Algorithm~\ref{alg:sampui} and \techrep{Algorithm~\ref{alg:riemann}}\submit{numeric inverse CDF}, (iii) red: Algorithm~\ref{alg:sampui} and Equation~\ref{eq:cdf3d}}
        \label{fig:roi2}
    \end{minipage}
\end{figure*}
}

A uniform sampler from the function space is a key component for devising Monte-Carlo methods, both for estimating the stability of rankings and for designing randomized algorithms for the problem.
% While every ranking is generated by continuous ranges of functions, every function $f$ generates only one ranking of items. Moreover, the larger the volume of a ranking region is (i.e. the more stable it is), the higher is the chance of choosing a random function from it.
% Therefore, {\it uniform sampling of the function space provides the sampling of rankings based on their stability distribution}.
% This suggests using such samples from the function space to design a Monte-Carlo method.
% But first, we need a sampler that generates functions uniformly at random from a region of interest $\ar$.
In the following, we first discuss  sampling from the complete function space and then propose an efficient sampler for $\ar$.

\subsection{Sampling from the function space}\label{subsec:sampleu}
In this subsection we discuss how to generate unbiased samples from the complete function space.
Since every function is represented as a vector of $d-1$ angles, each in range $[0^\circ,\pi/2^\circ]$, one way of generating random functions is by generating angle vectors uniformly at random.
This, however, does not provide uniform random functions sampled from the function space, except for 2D.

\techrep{
To demonstrate this in 3D, we generated 1000
two dimensional random vector with values in range $[0,\pi/2]$. Then, for every vector $\Theta$, we plotted the point with the polar cardinalities $(1,\Theta)$ in Figure~\ref{fig:randf1}. These represent the end points for the unit vectors with angle $\Theta$.
Looking at the figure, it is clear that the distribution is not uniform, as the density of the end points reduces moving from the top of the figure to the bottom.
Also, looking at Figure~\ref{fig:randf1}, one may notice that the set of points in the first quadrant of the unit $d$-sphere represent the function space $\mathcal{U}$.
}
\submit{
As mentioned in \S~\ref{sec:pre}, the set of points in the first quadrant of the unit $d$-sphere represent the function space $\mathcal{U}$.
}
This is because of the one-to-one mapping between the points on the surface of the unit $d$-sphere and the unit origin-starting rays, each representing a function $f$.
Hence, the problem of choosing functions uniformly at random from $\mathcal{U}$ is equivalent to choosing random points from the surface of a $d$-sphere.
As also suggested in~\cite{rrr}, we adopt a method for uniform sampling of the points on the surface of the unit $d$-sphere~\cite{muller1959note, marsaglia1972choosing}.
Rather than sampling the angles, this method samples the weights using the {\em normal distribution}, and normalizes them. 
This method works because the normal distribution function has a constant probability on the surfaces of $d$-spheres with common centers~\cite{marsaglia1972choosing,cramer2016mathematical}.\submit{
Therefore, in order to generate a random function in $\mathcal{U}$, we set each weight as $w_i = |\mathcal{N}(0,1)|$, where $\mathcal{N}(0,1)$ draws a sample from the standard normal distribution.
}
\techrep{
Algorithm~\ref{alg:sampu}, adopt this method to generate random functions from $\mathcal{U}$. 
}

\techrep{
\vspace{3mm}
\begin{algorithm}[!h]
\caption{{\bf Sample$\mathcal{U}$} }
\begin{algorithmic}[1]
\label{alg:sampu}
    \FOR {$i=1$ to $d$} 
        \STATE $w_i = |\mathcal{N}(0,1)|$ \scriptsize{\tt // $\mathcal{N}(0,1)$ generates a random number from the standard normal distribution}
    \ENDFOR
    \STATE {\bf return} $w$
\end{algorithmic}
\end{algorithm}
}

\techrep{
In order to verify the uniformity of the output of Algorithm~\ref{alg:sampu}, we used it to draw 1000 sample functions. Similar to Figure~\ref{fig:randf1}, we plotted the endpoints of the unit vectors of sampled functions in Figure~\ref{fig:randf2} -- which shows the points are distributed uniformly.
}

\subsection{Sampling from the region of interest}\label{subsec:sampleui}
Drawing unbiased samples from a region of interest $\ar$ is critical for finding stable regions.
Given an unbiased sampler for the function space $\mathcal{U}$, an acceptance-rejection method~\cite{lucidl1989random} can be used for drawing samples from $\ar$.
The idea is simple: (i) draw a sample from the function space; (ii) test if the drawn sample is in the region of interest and, if so, accept it; otherwise reject the sample and try again.
Testing if the drawn sample is in the region of interest can be done by:
(a) computing its angle distance from the reference $\rho$ and comparing it with the reference angle $\theta$, if $\ar$ is specified by a ray and angle, or by
(b) checking if the sampled point satisfies the constraint, if $\ar$ is specified by a set of constraints.

The efficiency of this method, however, depends on the acceptance probability $p$, defined by the volume ratio of $\ar$ to $\mathcal{U}$.
The expected number of trials for drawing a sample for such probability is $1/p$.
Hence, it is efficient if the volume of $\ar$ is not small.

Therefore,
in the following, we alternatively propose an inverse CDF (cumulative distribution function) method~\cite{devroye1986sample} for generating random uniform functions from a region of interest. This method is preferred over the acceptance-rejection method when $\ar$ is small. It generates functions with the maximum angular distance of $\theta$ from the reference ray $\rho$.
For a region of interest specified by a set of constraints, the bounding sphere~\cite{fischer2003fast} for the base of its $d$-cone identifies the ray and angle distance that include $\ar$. For such regions of interest, the inverse CDF method enables an acceptance-rejection method with higher acceptance rate, leading to better performance.

Consider $\ar$ as the set of functions with the maximum angle $\theta$ around the ray of some function $f$.
We model $\ar$ by the surface unit $d$-spherical cap with angle $\theta$ around the d-th axis in $\mathbb{R}^d$ (Figure~\ref{fig:roi}).
This is similar to the mapping of $\mathcal{U}$ to the surface of the unit hyperspherical, and is due to the one-to-one mapping between the rays in $\ar$ and the points on the surface unit hyperspherical cap.
We use a transformation that maps the ray of $f$ to the d-D axis. 
After drawing a function we  transform it around the ray of $f$. 

For an angle $\theta$, the $d$-dimension orthogonal plane $x_d=\cos \theta$ partitions the cap from the rest of the $d$-sphere.
Hence, the intersection of the set of the ($d$-th axis orthogonal) planes $\cos \theta\leq x_d\leq 1$ with the $d$-sphere define the cap.

The intersection of each such plane with the $d$-sphere is a $(d-1)$-sphere. For example, in Figure~\ref{fig:roi} the intersection of a plane that is orthogonal to the z-axis with the unit sphere is a circle (2-sphere).
An unbiased sampler should sample points from the surface of such $(d-1)$-spheres proportionally to their areas.
In the following, we show how this can be done.

The surface area of a $\delta$-sphere with the radius $r$ is~\cite{li2011concise}:
\begin{align}\label{eq:area1}
A_\delta (r) = \frac{2\pi^{\delta/2}}{\Gamma(\delta/2)}r^{\delta-1}
\end{align}
$\Gamma$, in the above equation, is the gamma function.

Using this equation, the area of the unit $d$-spherical cap can be stated as the integral over the surface areas of the $(d-1)$-spheres, 
defined by
the intersection of the planes $\cos \theta\leq x_d\leq 1$ with the $d$-sphere, as follows~\cite{li2011concise}:
\techrep{
\begin{align}\label{eq:area2}
\nonumber
A_d^{cap} (1) &= \int_0^\theta A_{d-1}\sin\phi d\phi \\
              &= \frac{2\pi^{d/2}}{\Gamma(d/2)} \int_0^\theta \sin^{d-2}(\phi)d\phi
\end{align}
}
\submit{
\begin{align}\label{eq:area2}
A_d^{cap} (1) &= \int_0^\theta A_{d-1}\sin\phi d\phi
              = \frac{2\pi^{d/2}}{\Gamma(d/2)} \int_0^\theta \sin^{d-2}(\phi)d\phi
\end{align}
}

\submit{
\begin{figure}[t!]
\centering
\subcaptionbox{\label{fig:roi}}{\includegraphics[scale=0.35]{figures/ROI}}
\hfill 
\subcaptionbox{\label{fig:roi2}}
{\includegraphics[scale=0.27]{figures/sampleui}
}
\vspace{-3mm}
\caption{
\label{fig:ROI}
Sampling from $\ar$. a) $\ar$ as a unit $d$-spherical cap around $d$-th axis. b) Samples generated using  {\revision(blue -- scattered over the space)}: \techrep{Algorithm~\ref{alg:sampu}}\submit{\S~\ref{subsec:sampleu}}, {\revision(green -- right cluster)}: Algorithm~\ref{alg:sampui} and \techrep{Algorithm~\ref{alg:riemann}}\submit{numeric inverse CDF}, {\revision(red -- left cluster)}: Algorithm~\ref{alg:sampui} and Equation~\ref{eq:cdf3d}.
}
\vspace{-5mm}
\end{figure}
}

Therefore, considering the random angle $0 \leq x\leq \theta$, the cumulative density function (cdf) for $x$ is given by:

\begin{align}\label{eq:cdf1}
F(x) &= \frac{\int_0^x \sin^{d-2}(\phi)d\phi}{\int_0^\theta \sin^{d-2}(\phi)d\phi}
\end{align}

For a specific value of $d$, one can solve Equation~\ref{eq:cdf1}, find the inverse of $F$ and use it for sampling.
For instance, for $d=3$:
\begin{align}\label{eq:cdf3d}
F(x) = \frac{1-\cos x}{1-\cos\theta}
\Rightarrow F^{-1}(x) = \arccos \big(1-(1-\cos\theta)x\big)
\end{align}

\submit{For a general $d$, we use numerical methods for finding the inverse CDF. Details can be found in the technical report~\cite{techrep}.}
\techrep{
For a general $d$, we can use the representation of $\int_0^\theta \sin^{d-2}(\phi)d\phi$ in the form of beta function and regularized incomplete beta function~\cite{li2011concise} and rewrite Equation~\ref{eq:cdf1} as\footnote{\small $I_z(\alpha, \beta)$ is the regularized incomplete beta function.}:
\begin{align}\label{eq:cdf}
F(x) &= \frac{I_{\sin^2(x)}\big(\frac{d-1}{2}, \frac{1}{2}\big)}{I_{\sin^2(\theta)}\big(\frac{d-1}{2}, \frac{1}{2}\big)}
\end{align}
However, since numeric methods are applied for finding the inverse of the regularized incomplete beta function~\cite{cran1977remark}, we consider a numeric solution for Equation~\ref{eq:cdf1}.
Consider a regular partition of the interval $[0,\theta]$. The integral $\int_0^\theta \sin^{d-2}(\phi)d\phi$ can be computed as the sequence of Riemann sums over the partitions of the interval.
We apply this for computing both the denominator and the nominator of Equation~\ref{eq:cdf1}.
Given the partition of the interval, we start from the angle 0, and for each partition, $x'$ compute the value of $\int_0^{x'} \sin^{d-2}(\phi)d\phi$ as the aggregate over the previous summations and store it in a sorted list.
As a result, in addition to the value of the denominator, we have the value of $F$ for each of the partitions. We will later apply binary search on this list, in order to find the angle $x$ that has the area $F(x)$.
Algorithm~\ref{alg:riemann} shows the pseudocode of the function {\it RiemannSums} that computes the denominator and returns the list of partial integrals divided by the denominator.
In addition to the angle $\theta$, the function takes the number of partitions as the input.
}

\techrep{
\begin{algorithm}[!h]
\caption{{\bf RiemannSums}\\
{\bf Input:} The angle $\theta$ and number of partitions $\gamma$
}
\begin{algorithmic}[1]
\label{alg:riemann}
    \STATE $\epsilon = \theta/\gamma$
    \STATE $L=[0]$; $A=0$; $\alpha= \epsilon$
    \FOR{$i = 1$ to $\gamma$}
        \STATE $A= A+\sin^{d-2}(\alpha)$
        \STATE $L.append(A)$
        \STATE $\alpha = \alpha + \epsilon$
    \ENDFOR
    \STATE {\bf for} $i=1$ to $\gamma$ {\bf do} $L[i] = L[i]/A$
    \STATE {\bf return} $L$
\end{algorithmic}
\end{algorithm}
}% End of TechRep

Algorithm~\ref{alg:sampui} shows the pseudocode of the inverse CDF sampler.
{\revision
For example, consider the example in $\mathbb{R}^3$, where the objective is to generate random numbers around the ray $(\pi/6,\pi/4)$ with angle $\theta=\pi/20$.
The algorithm starts by drawing a random uniform number in range $[0,1]$. 
Let such a random number be 0.13.
}
It takes the list $L$ (computed using the function RiemannSums) as the input and draws a random function from $\ar$.
To do so, it first draws a random uniform number $y$ in the range [0,1].
Next, it applies a binary search on the list of partial integrals to find the range to which $y$ belongs. Considering a fine granularity of the partitions, we assume that the areas of all $(d-1)$-spheres inside each partition are equal.
The algorithm, therefore, returns a random angle (drawn uniformly at random) from the selected partition.
{\revision
Obviously, instead, the algorithm can use the equation of the inverse function.
Continuing with our example, while using Equation~\ref{eq:cdf3d}, the corresponding $y$ value for 0.13 is $\pi/55.5$.
}

\begin{algorithm}[!h]
\caption{{\bf Sample $\ar$}\\
{\bf Input:} The ray $\rho$, angle $\theta$\techrep{, number of partitions $\gamma$, and $L$}
}
\begin{algorithmic}[1]
\label{alg:sampui}
    \STATE $y = U[0,1]$ {\scriptsize \tt // draw a uniform sample in range [0,1]}
     \submit{
     \STATE $x = F^{-1}(y)$
     }
    \techrep{
    \STATE $i =$ {\bf binarySearch}$(y,L)$
    \STATE $\hat{\epsilon} = U[0,\epsilon]$
    \STATE $x = i\epsilon +\hat{\epsilon}$
    }
    \STATE {\bf for} $i=1$ to $d-1$ {\bf do} $\hat{w}_i = \mathcal{N}(0,1)$
    \STATE $\langle \theta_1,\cdots,\theta_{d-2} \rangle = $ the angles in polar representation of $\hat{w}$
    \STATE $w = $ {\bf toCartesian}$(1, \langle \theta_1,\cdots , \theta_{d-2},x \rangle)$
    \STATE {\bf return Rotate}($w$, $\rho$)
\end{algorithmic}
\end{algorithm}

Recall that the angle $x$ specifies the intersection of a plane with the $d$-spherical cap, which is a $(d-1)$-sphere.
Hence, after finding the angle $x$, we need to sample from the surface of a $(d-1)$-sphere, uniformly at random.
{\revision 
For our example in $\mathbb{R}^3$, the intersection is a circle (2-sphere) and, therefore, we need to sample from the surface of the circle.
}
Also, recall from \S~\ref{subsec:sampleu} that the normalized set of $d-1$ random numbers drawn from the normal distribution provide a random sample point on the surface of $(d-1)$-sphere.
The algorithm Sample$\ar$ uses this for generating such a random point.
It uses the angle combination of the drawn random point from the surface of a $(d-1)$-sphere and combines them with the angle $x$ (with the $d$-th axis).
{\revision
In our example in $\mathbb{R}^3$, let the sampled point on the circle have the angle $0.8\pi$. Hence, the angle combination is $\langle 0.8\pi, \pi/55.5\rangle$.
}
After this step, the point specified by the polar coordinates $(1, \langle \theta_1,\cdots , \theta_{d-2},x \rangle)$ is the random uniform point from the surface of $d$-spherical cap around the $d$-th axis.
As the final step, the algorithm needs to rotate the coordinate system such that the center of the cap (currently on $d$-th axis) falls on the ray $\rho$.
We rely on the existence of the function {\it Rotate} for this, presented in \submit{the technical report~\cite{techrep}}\techrep{ Appendix~\ref{app:rotation}}.
\submit{
Figure~\ref{fig:roi2} shows three cases of 200 samples in $\mathbb{R}^3$ drawn from {\revision(blue -- scattered over the space)} $\mathcal{U}$ using \S~\ref{subsec:sampleu} and {\revision(green and red -- right and left clusters)} $\ar$ around the rays $(\pi/3,\pi/3)$ and $(\pi/6,\pi/4)$ with angle $\theta=\pi/20$.
}

\techrep{
The final discussion is when to apply the acceptance-rejection method and when the inverse CDF.
For a fix $d$ that the inverse CDF is calculated (e.g. Equation~\ref{eq:cdf3d}), the sampling cost is constant and, therefore, it is always preferred over the acceptance/reject method.
However, in general where the numeric method is applied for the inverse CDF method (Algorithm~\ref{subsec:sampleu}), the cost $O(\log |L|)$. Specifically, in order to make Algorithm~\ref{alg:sampui} run in $O(\log n)$, one can consider $L=O(n)$.
On the other hand, the (expected) cost of the acceptance/reject method depends on the ration of the volume of $\mathcal{U}$ to $\ar$.
Based on Equations~\ref{eq:area1} and~\ref{eq:area2}, $1/\int_0^\theta \sin^{d-2}(\phi)d\phi$ shows this ratio.
Therefore, if $\log(|L|) > (1/\int_0^\theta \sin^{d-2}(\phi)d\phi)$, the acceptance/reject method is expected to outperform the inverse CDF method; otherwise, the inverse CDF method is preferred.

Figure~\ref{fig:roi2} shows three cases of 200 samples in $\mathbb{R}^3$ where (i) the blue points are sampled from function space using Algorithm~\ref{alg:sampu}, (ii) green points are generated around the ray $(\pi/3,\pi/3)$ with angle $\pi/20$ using Algorithm~\ref{alg:sampui} and Algorithm~\ref{alg:riemann}, and (iii) red points are generated around the ray $(\pi/6,\pi/4)$ with angle $\pi/20$ using Algorithm~\ref{alg:sampui} while using Equation~\ref{eq:cdf3d} for the inverse CDF.
}

\techrep{
\subsection{The stability oracle}\label{subsec:SOracle}
In \S~\ref{sec:md}, we relied on the existence of a stability oracle, that given a region, in the form of the intersection a set of halfspaces, computes its stability.
Here, we explain the details of that oracle.
The unbiased sampling from $\ar$ enables the Monte-Carlo methods~\cite{montecarlo} for estimating the stability of the ranking regions.
We will also use Monte-Carlo methods for devising randomized algorithms in \S~\ref{sec:randomized}.

Based on Definition~\ref{def:1}, the stability of a region $R$ is equal the ratio of its volume to the one of the region of interest $\ar$.
Therefore, using the Monte-Carlo estimation, given a set of samples $\mathcal{S}$, drawn uniformly at random from $\ar$, the stability of $R$ can be estimated as the ratio of samples fall inside it.

Consider a region $R$ as the intersection of a set of halfspaces. Recall from \S~\ref{sec:md} that every halfspace can be considered as an inequality in the forms of $\sum h[i]x_i<0$ for a negative halfspace $h^-$ and as $\sum h[i]x_i>0$ for a positive one.
Checking if a sample function with the weight vector $w$ falls inside a region $R$ can be done by computing the values of $\sum h[i]w_i$ for the halfspaces defining $R$.
The function falls inside $R$, if it satisfies all of the constraints enforces by the halfspaces defining it.

Considering $|R|$ as the number of half-spaces defining the region $R$, the stability oracle is in $O(|R|~|\mathcal{S}|)$.

Algorithm~\ref{alg:Soracle} shows the pseudocode of the stability oracle.
\begin{algorithm}[!h]
\caption{{\bf S} {\scriptsize \tt // the stability oracle}\\
{\bf Input:} A set $R$ of halfspaces defining a region and a set $\mathcal{S}$ of unbiased samples taken from $\ar$\\
{\bf Output:} The stability of $R$ in $\ar$
}
\begin{algorithmic}[1]
\label{alg:Soracle}
    \STATE count$=0$
    \FOR {sample $w$ in $\mathcal{S}$}
        \STATE flag = true
        \FOR{halfspace $(h,$sign$)$ in $R$}
            \STATE sum = $\sum_{i=1}^d h[i]w_i$
            \IF{(sign is + and sum$<0$) or (sign is - and sum$>0$) }
                \STATE flag = false; {\bf break}
            \ENDIF
        \ENDFOR
        \STATE{\bf if} flag=true {\bf then} count $=$ count $+1$
    \ENDFOR
    \STATE {\bf return} $\frac{\mbox{count}}{|\mathcal{S}|}$
\end{algorithmic}
\end{algorithm}
}%end of techrep

\techrep{
\subsection{Partitioning samples for an arrangement}\label{subsec:par}
The MD algorithm \getnexttb in \S~\ref{subsec:thbased} uses the function\\ {\em PassThrough} to check if a hyperplane $h$ intersects with a region $R$.
One can use linear programming and develop the function.
Instead, here we propose a method that uses the set of samples $\mathcal{S}$ that are used for stability estimation for this purpose. The idea is that $h$ intersects $R$, if there exists two points $w$ and $w'$ in $R$ such that $w$ belongs to $h^-$ and $w'$ is in $h^+$, i.e. $\sum_{i=0}^d h[i]w_i<0$ and $\sum_{i=0}^d h[i]w'_i>0$. Since the sample points are used for estimating the stability of the regions, while we are interested in finding the stable regions, $\mathcal{S}$ can also be used for checking if $h$ intersects $R$.
To do so, we partition the samples $\mathcal{S}$ between the current regions of the arrangement, and for a hyperplane $h$, we check the existence of two points in the given region $R$ that fall in two sides of $h$.

Interestingly, despite the complexity of the arrangement and independent from the number of dimensions, a one-dimensional array can be used for partitioning the points.
Recall that the arrangement construction is iterative, and that every time a hyperplane $h$ is added to a region $R$, the algorithm partitions it into two sides: the intersection of $R$ with $h^-$ and its intersection with $h^+$.

Consider the set $\mathcal{S}$ of samples, drawn from $\ar$. Adding the first hyperplane $H[1]$, partitions $\mathcal{S}$ to $\mathcal{S}\cap H[1]^-$ and $\mathcal{S}\cap H[1]^+$.
We can apply the famous {\em partition} algorithm, used in quick-sort~\cite{clrs}, and partition the samples in two sets such that all the samples in the range $[\mathcal{S}[1],\mathcal{S}[i]]$ belong to $R_1=\{H[1]^-\}$ and all the ones with indices $i+1$ to $|\mathcal{S}|$ belong to$R_2=\{H[1]^+\}$.
We use the fields $sb$ and $se$ in the Region data structure (Figure~\ref{fig:region}) to identify the ranges of samples that fall into the region.
Interestingly, for every region $R$,
applying the partition algorithm on the points between $R.sb$ and $R.se$ for a hyperplane $h$ both checks if $h$ intersects with $R$ and adds it to $R$ in the case of intersection. If the output index $i$ from the partition algorithm is $R.sb-1$ or $R.se$, $h$ does not intersects $R$, otherwise the samples are already partitioned to $R_l= R\cap \{H[1]^-\}$ and $R_r=R\cap \{H[1]^+\}$, while $(R_l.sb,R_l.se)=(R.sb,i)$ and $(R_r.sb,R_r    .se)=(i+1,R.se)$.

Partitioning the samples also enables the constant time computation of the stability estimation as:
%\submit{$S(R) = (R.se-R.sb+1)/|\mathcal{S}|$.}
%\techrep{
$$S(R) = \frac{R.se-R.sb+1}{|\mathcal{S}|}$$
%}
}

\begin{figure*}[ht]
	\begin{minipage}[t]{0.33\linewidth}
        	\centering\includegraphics[width=\textwidth]{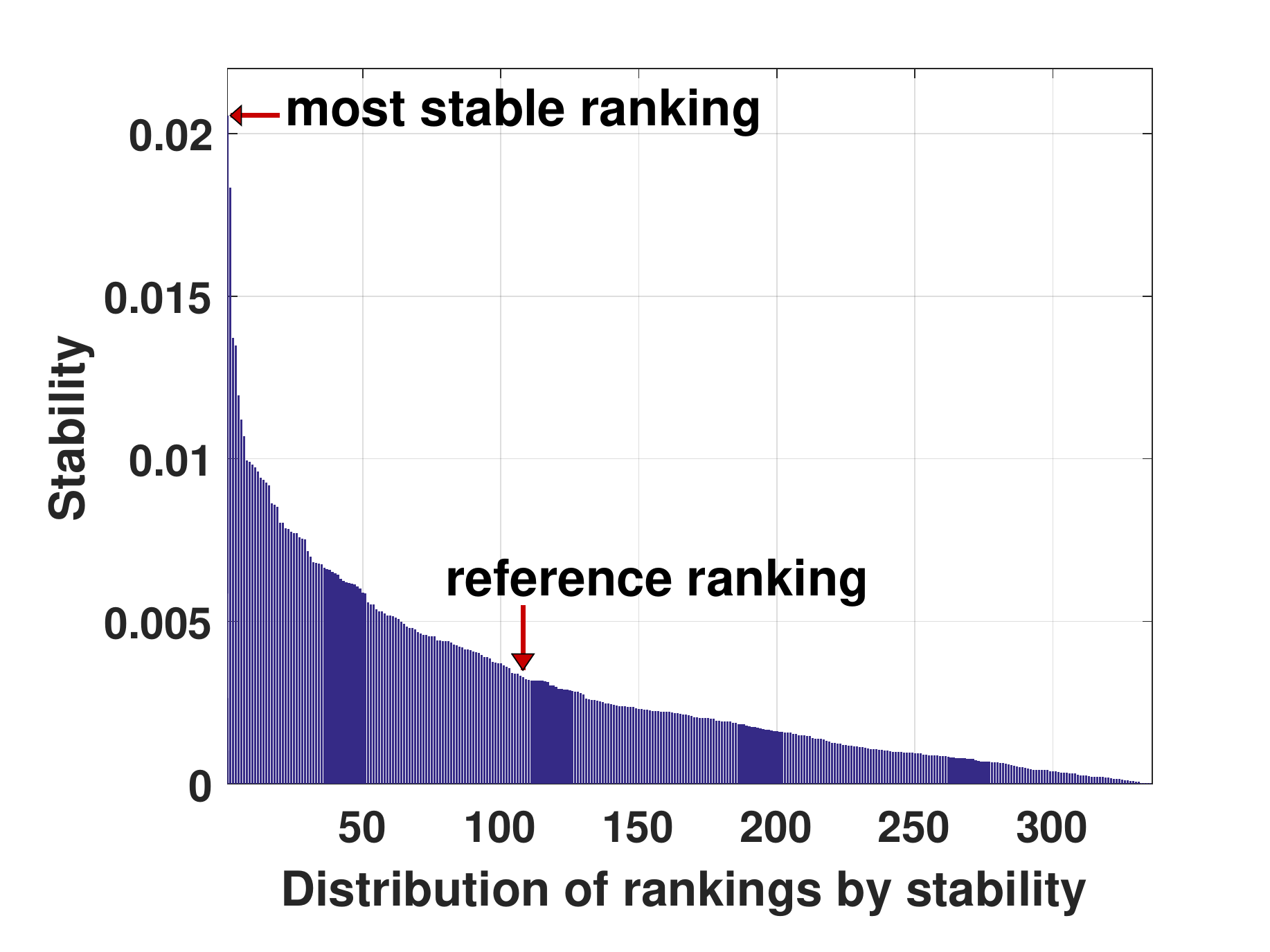}
    		\caption{CSMetrics: overall distribution of rankings by stability.}
    		\label{fig:cms1}
    \end{minipage}
    \hspace{1mm}
    \begin{minipage}[t]{0.33\linewidth}
        \centering
    		\includegraphics[width=\textwidth]{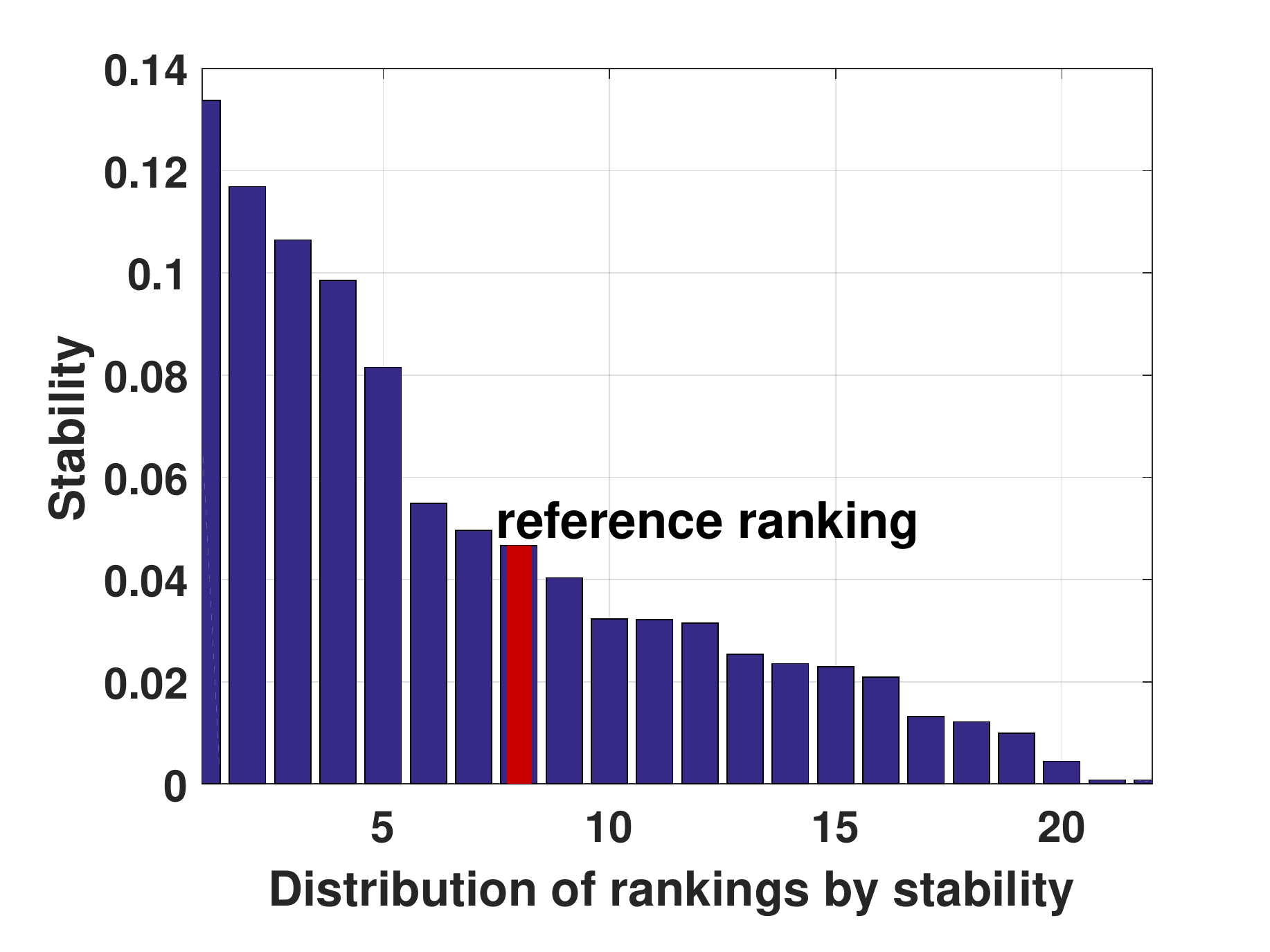}
    		\caption{CSMetrics: stability around reference vector $\langle 0.3,0.7 \rangle$ with 0.998 cosine sim.}
    		\label{fig:cms2}
    \end{minipage}
    \hspace{1mm}
	\begin{minipage}[t]{0.33\linewidth}
        \centering
        \includegraphics[width=\textwidth]{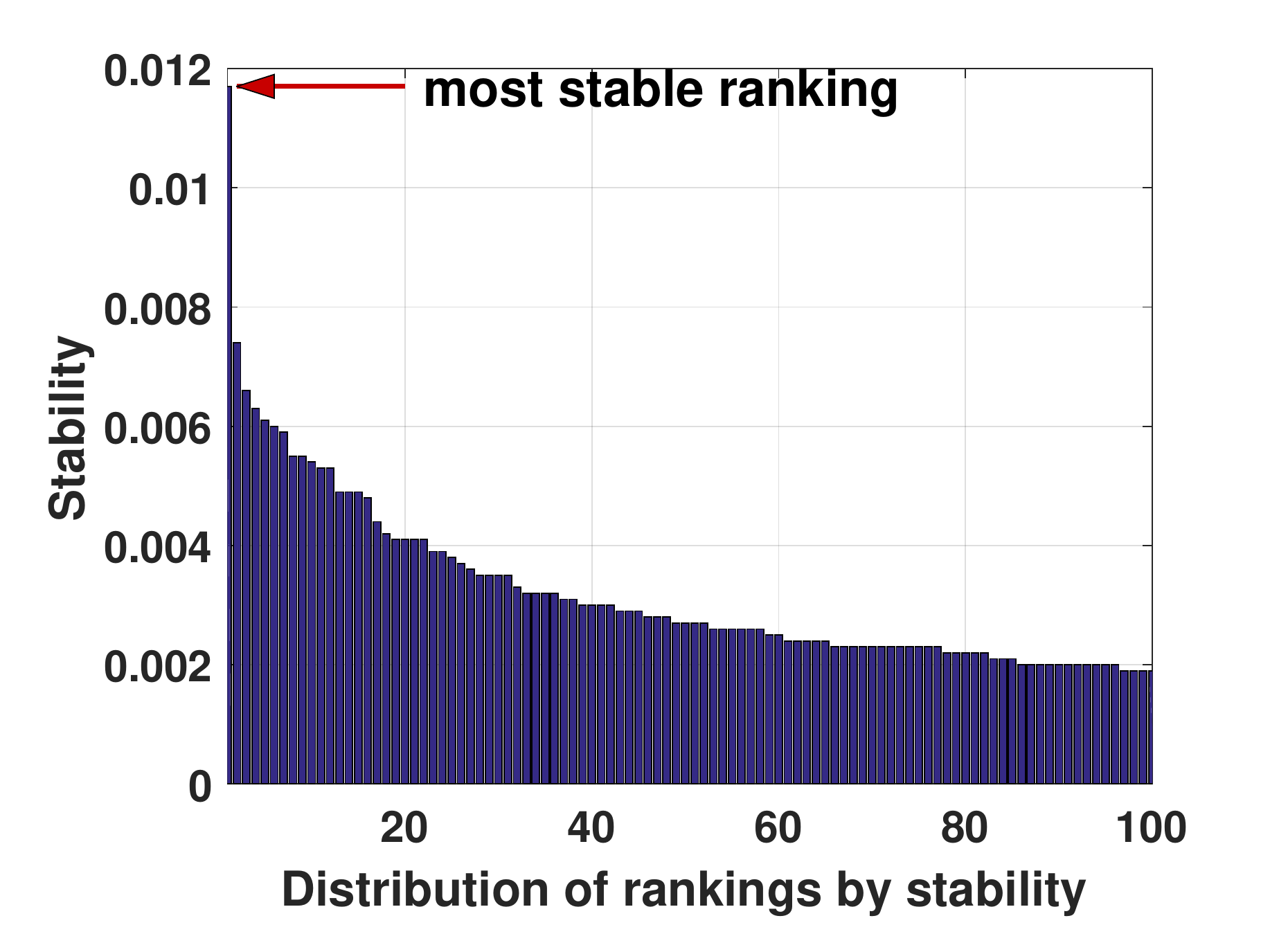}
        \caption{FIFA: stability around reference vector $\langle 1, 0.5, 0.3, 0.2 \rangle$ with 0.999 cosine sim.}
        \label{fig:fifa1}
    \end{minipage}  
\submit{\vspace{-5mm}}
\end{figure*}

%\submit{\vspace{-2mm}}
\section{Experiments}\label{sec:exp}
Here we validate our stability measure and evaluate the efficiency of our algorithms on three real datasets used for ranking.  In particular, we study the stability of two of our datasets in \S~\ref{subsec:expValidation}, showing that the proposed reference rankings are not stable. In \S~\ref{subsec:expperformance}, we study the running times of our algorithms, including stability verification, the \getnext problems in 2D and MD, as well as the randomized algorithm and top-$k$ items.

%\submit{\vspace{-1mm}}
\subsection{Experimental setup} \label{sec:exp:setup}

\stitle{Hardware and platform.}
The experiments were conducted using a 3.8 GHz Intel Xeon processor, 64 GB memory, running Linux.  The algorithms were implemented using Python 2.7.

\stitle{Datasets} 
{\revision
We use four real datasets CSMetrics ($d=2$), FIFA ($d=4$), Blue Nile ($d=5$), and Department of Transportation ($d=3$), as well as a set of three synthetic datasets described below.
}
% Our three datasets are CSMetrics ($d=2$), FIFA ($d=4$), and Blue Nile ($d=5$), as described below:

{\it CSMetrics~\cite{csmetrics}:}
CSMetrics ranks computer science research institutions based on publication metrics.  
%Citation count is considered as the measure of impact. However, since it takes several years that the newer publications to accumulate citations, the (geometric) mean of citations per full research paper in each venue over the last ten years is considered as the predicted count for the newer papers.
For each institution, a combination of {\tt measured} ($M$) citations and an attribute intended to capture future citations, called {\tt predicted} ($P$), is used for ranking, according the score function: $(M)^\alpha(P)^{1-\alpha}$, for parameter $\alpha$.  This score function is not linear, but under a transformation of the data in which $x_1=log(M)$ and $x_2=log(P)$ we can write an equivalent score function linearly as: $\alpha x_1 + (1-\alpha)x_2$. 
%Note that this is the linear combination over the log values of attributes as: 
%\begin{align}\label{eq:csm}
%\alpha\log(M)+(1-\alpha)\log(P)
%\end{align}
The CSMetrics website uses $\alpha=.3$ as the default value, but allows other values to be selected. We use $\alpha=.3$ %to generate a reference ranking. We 
and restrict our attention to the top-100 institutions according to this ranking.
% The system still selects the preselected value of $\alpha=0.3$ which offers an ``original ranking''. Since the user does not usually have the sense to choose a meaningful value for alpha, it is natural to accept the ranking as is.
%Still, as highlighted on the website, the ranking objective is to help ``prospective students, parents, faculty, and University administrators'' to {\em make decisions}.
%Therefore, we collected its set of top $100$ research institutes as one of our datasets for the experiments.

{\it FIFA Rankings~\cite{fifa1}:} 
The FIFA World Ranking of men's national football teams is based on measures of recent performance. Specifically, the score of a team $t$ depends on team performance values for $A_1$ (current year), $A_2$ (past year), $A_3$ (two years ago), and $A_4$ (three years ago).  The given score function, from which the reference ranking is derived, is: $t[1] + 0.5 t[2] + 0.3 t[3] + 0.2 t[4]$.
FIFA relies on these rankings for modeling the progress and abilities of the national-A soccer teams~\cite{fifa} and to seed important competitions in different tournaments, including the 2018 FIFA World Cup.
We consider the top 100 teams in our experiments.

%is a ranking system for men's national teams, provided by the F\'ed\'eration Internationale de Football Association (FIFA).
%The ranking is point-based, based on the performance of the teams in the last four years. 

%FIFA relies on these rankings for modeling the progress and abilities of the national soccer teams, as ``a reliable measure for comparing national A-teams''~\cite{fifa}. Despite the trust of FIFA to these rankings, there have been many critics that question their validity. Ranking Brazil as 22 in 2014, the U.S. as 4 in 2006, and Belgium as number one in Nov. 2015 are some examples that got widely criticized. Still, these rankings are used as part of the calculation, or the entire grounds to seed competitions in different tournaments, including the Russia 2018 FIFA World Cup's final draw.

{\it Blue Nile~\cite{bluenile}:} 
Blue Nile is the largest online diamond retailer in the world.  We collected its catalog that contained 116,300 diamonds at the time of our experiments.
We consider the scalar attributes {\tt Price}, {\tt Carat}, {\tt Depth}, {\tt LengthWidthRatio}, and {\tt Table} for ranking. For all attributes, except {\tt Price}, higher values are preferred.
We normalize each value $v$ of a higher-preferred attribute $A$ as $(v-\min(A))/(\max(A)-\min(A))$; for a lower-preferred attribute $A$, we use $(\max(A) - v)/(\max(A)-\min(A))$.

{\revision {\it Department of Transportation (DoT)~\cite{DoT}:}}
{\revision The flight on-time dataset is published by the US Department of Transportation.  
We collected a set of 1,322,023 records, for the flights conducted by the 14 US carriers in the last three months of 2017. We consider the attributes {\tt air-time}, {\tt taxi-in} and {\tt taxi-out} for ranking.
}

{\revision {\it Synthetic Data:}}
{\revision In order to study the effect of the correlation between the attributes, using the code provided by~\cite{skylineoperator}, we generated three synthetic datasets (independent, correlated, anti-correlated), containing 10,000 items and three scoring attributes in range $[0,1]$.
}

\submit{\vspace{-1mm}}
\subsection{Stability investigation of real datasets}\label{subsec:expValidation}
For the two datasets which provide a reference ranking (CSMetrics and FIFA) we assess these rankings below\techrep{, using our stability measure and our algorithmic tools}. 

\stitle{CSMetrics:}
%We start the validation by studying the computer science research institutes rankings by CSMetrics. The CSMetrics both provides a default ranking and allows ``designing'' the ranking by choosing the ranking parameter $\alpha$.
%As a result, it can be viewed from the angles of both consumer and producer of rankings. Recall that this is a two dimensional ranking done over the {\tt measured} ($M$) and {\tt predicted} ($P$) citations, using Equation~\ref{eq:csm}.
%Besides verifying the stability of the original ranking, we consider two scenarios for producing the stable rankings: (i) considering the complete function space for ranking, and (ii) limiting the scope of design to 0.998 cosine similarity ($\theta=\pi/50$) around the original weight vector ($\alpha=0.3$).
%We use the algorithms \svtd and \getnexttd for stability verification and the \getnext operator, respectively.
\submit{Two attributes are used for ranking here}
\techrep{In this dataset, items are scored according to two attributes} (i.e. $d=2$). We can therefore use the \getnext operator repeatedly to enumerate all feasible rankings and their stability values.  While an upper bound on the number of rankings for $n=100$ and $d=2$ is around $5,000$, the actual number of feasible rankings for this dataset is $336$.  Figure~\ref{fig:cms1} shows the distribution of rank stability across all rankings, showing a few rankings with high stability, after which stability rapidly drops.  
%The most stable ranking is $1.5$ times more stable than the third most stable ranking.  
The reference ranking is highlighted in Figure~\ref{fig:cms1}; using \svtd, we calculated the stability of the reference ranking to be $0.0032$. Notably, the reference ranking did not appear in the top-$100$ stable rankings (it is the $108^{th}$ most stable ranking).

%\gmref{Also, considering the number of rankings, the stability of the original ranking is near the stability of if the rankings were of uniformly distributed.}{This requires more explanation; I was tempted to cut it.  Is it very important?}

%The $\alpha$ value for the function in the middle of the region of the most stable ranking is $0.608$.

Maximizing stability would cause a number of changes compared with the reference ranking.  For example, Cornell University is not in the top-$10$ universities in the reference ranking, but replaces the University of Toronto in the top-$10$ in the most stable ranking. One of the bigger changes in rank position is Northeastern University which improves from $40$ in the reference ranking to $35$ in the most stable ranking.

We also study stability for an acceptable region close to the reference ranking.  We choose 0.998 cosine similarity ($\theta=\pi/50$) around the weight vector of the reference score function. There are $22$ feasible rankings in this acceptable region; their stability distribution is shown in Figure~\ref{fig:cms2}. Even in this narrow region of interest, the reference ranking is far below the maximum stability. 

\stitle{FIFA Rankings:}
Next, we evaluate the higher-dimensional FIFA rankings that are used for important decisions such as seeding different tournaments, including the 2018 FIFA World Cup.
We focus on an acceptable region defined by 0.999 cosine similarity ($\theta=\pi/100$) around the reference weights used by FIFA, i.e. $w=\langle 1, 0.5, 0.3, 0.2 \rangle$.
Using the MD algorithm \getnexttb, we conducted 100 operations to get the distribution of the top-100 stable rankings around the reference weight vector. We considered 10,000 samples drawn using Algorithm~\ref{alg:sampui} for estimations. Figure~\ref{fig:fifa1} shows the distribution of stable rankings.  
First, since $d=4$, there are many feasible rankings, even in such a narrow region of interest, with a significant drop in stability after the most stable rankings, as was the case for CSMetrics. 

Perhaps the most interesting observation is that {\em the reference ranking did not appear in the top-100 stable rankings} (as a result it is not highlighted in Figure~\ref{fig:fifa1}).  While FIFA advertises this ranking as ``a reliable measure for comparing the national A-teams''~\cite{fifa}, 
our finding questions FIFA's trust in such an unstable ranking for making important decisions such as seeding the world cup.
%the lack of stability of the ranking suggests that it may be arbitrary, at best, or intentionally engineered for a specific outcome, at worst.
To highlight an example, while Tunisia holds a higher rank than Mexico in the reference ranking, Mexico is ranked higher in the most stable ranking. % Interestingly, both these teams are present in the 2018 world cup.
This supports the many critics that have questioned the validity of FIFA rankings in the recent past. Examples of controversial rankings include Brazil at 22 in 2014, the U.S. at 4 in 2006, and Belgium at 1 in 2015.

%FIFA relies on these rankings for modeling the progress and abilities of the national soccer teams, as ``a reliable measure for comparing national A-teams''~\cite{fifa}. Despite the trust of FIFA to these rankings, there have been many critics that question their validity. Ranking Brazil as 22 in 2014, the U.S. as 4 in 2006, and Belgium as number one in Nov. 2015 are some examples that got widely criticized. Still, these rankings are used as part of the calculation, or the entire grounds to seed competitions in different tournaments, including the Russia 2018 FIFA World Cup's final draw.

%Next, we study the FIFA rankings which are considered as 

\begin{figure*}[ht]
    \begin{minipage}[t]{0.23\linewidth}
        \centering
        \includegraphics[scale=0.24]{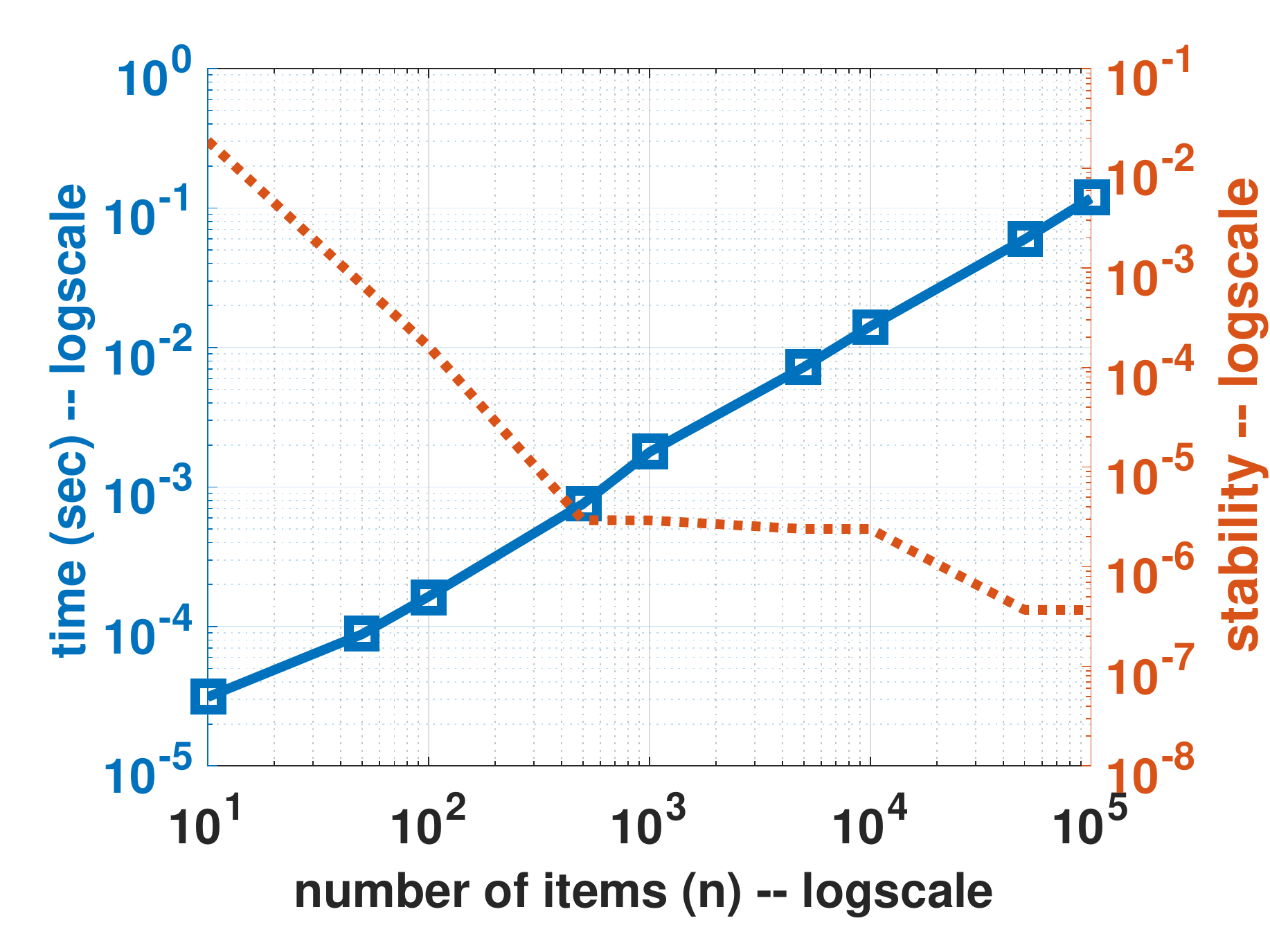}
        \vspace{-8mm}\caption{2D: Stability verification, Impact of dataset size ($n$)}
        \label{fig:bn2dsv}
    \end{minipage}
    \hspace{1mm}
    \begin{minipage}[t]{0.23\linewidth}
        \centering
        \includegraphics[scale=0.24]{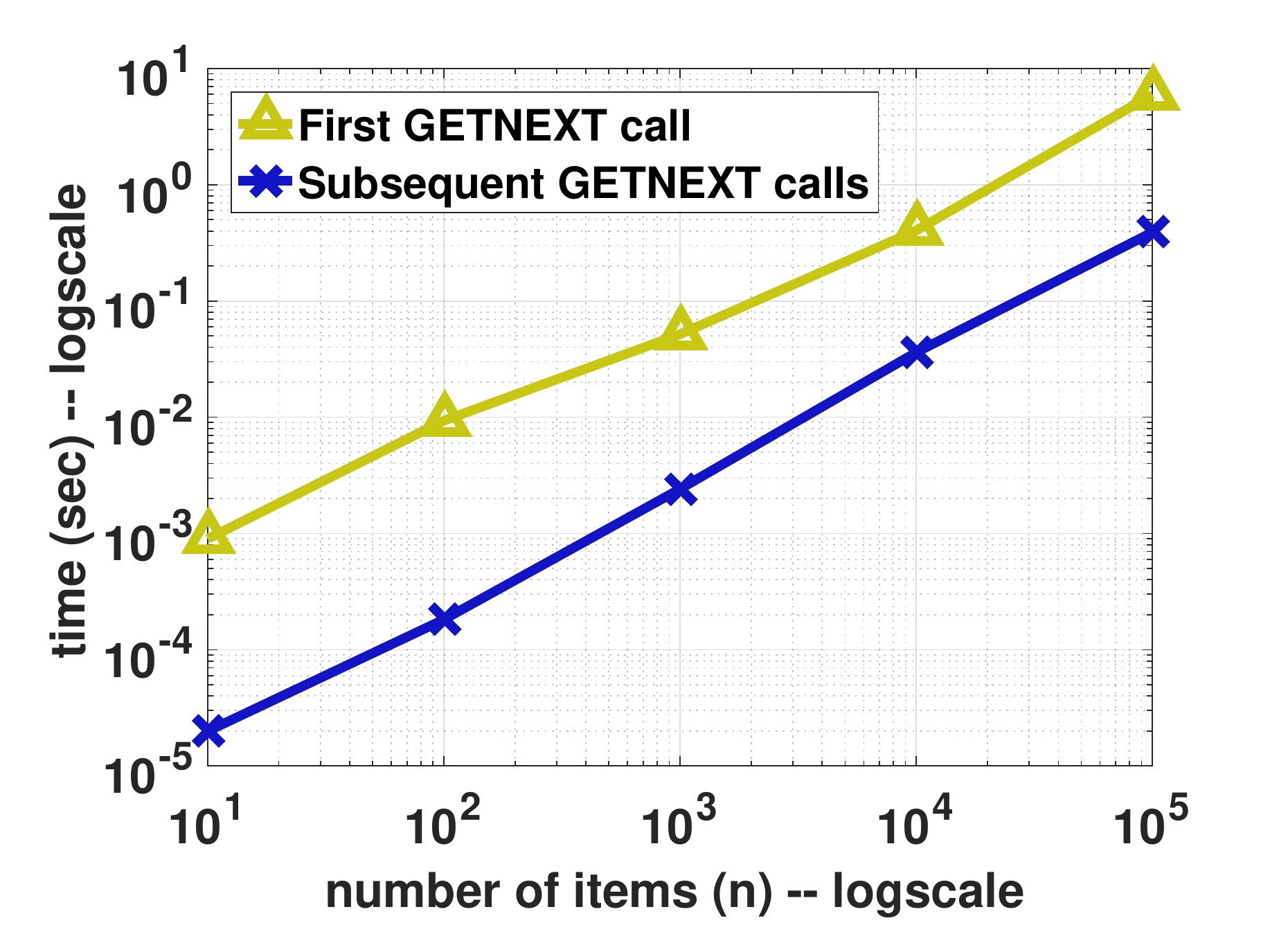}
        \vspace{-8mm}\caption{2D: \getnext, Impact of dataset size ($n$)}
        \label{fig:BN2D2t}
    \end{minipage}
    \hspace{1mm}
    \begin{minipage}[t]{0.23\linewidth}
        \centering
        \includegraphics[scale=0.24]{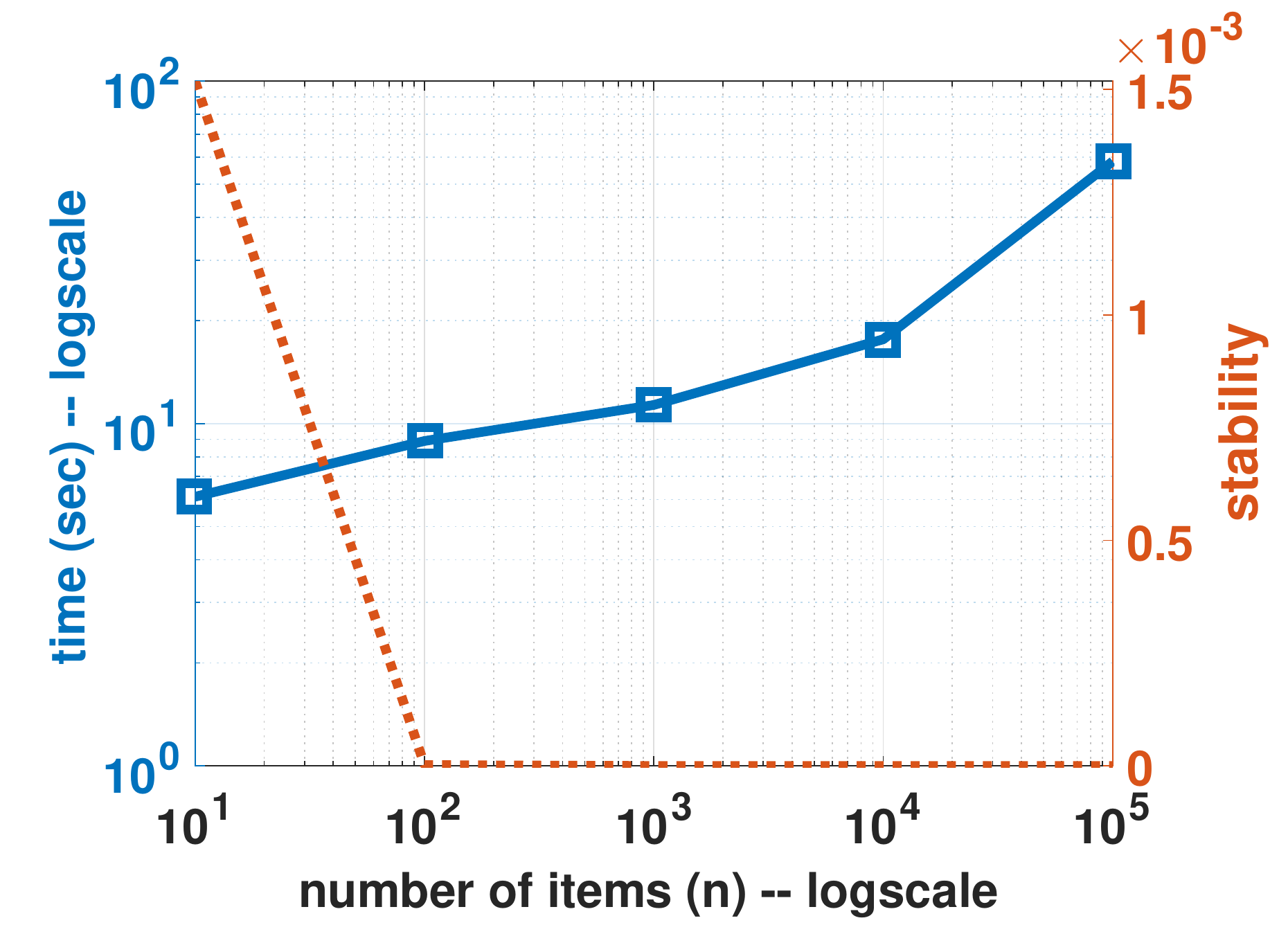}
        \vspace{-8mm}\caption{MD: Stability verification, Impact of dataset size}
        \label{fig:bnmdsvn}
    \end{minipage}
    \hspace{1mm}
%     \begin{minipage}[t]{0.23\linewidth}
%         \centering
% %        \includegraphics[scale=0.24]{plots/BNMD1}
%         \vspace{-8mm}\caption{MD: Stability verification, Impact of number of attributes ($d$)\abol{*}}
%         \label{fig:bnmdsvs}
%     \end{minipage}
	\begin{minipage}[t]{0.23\linewidth}
        \centering
        \includegraphics[scale=0.24]{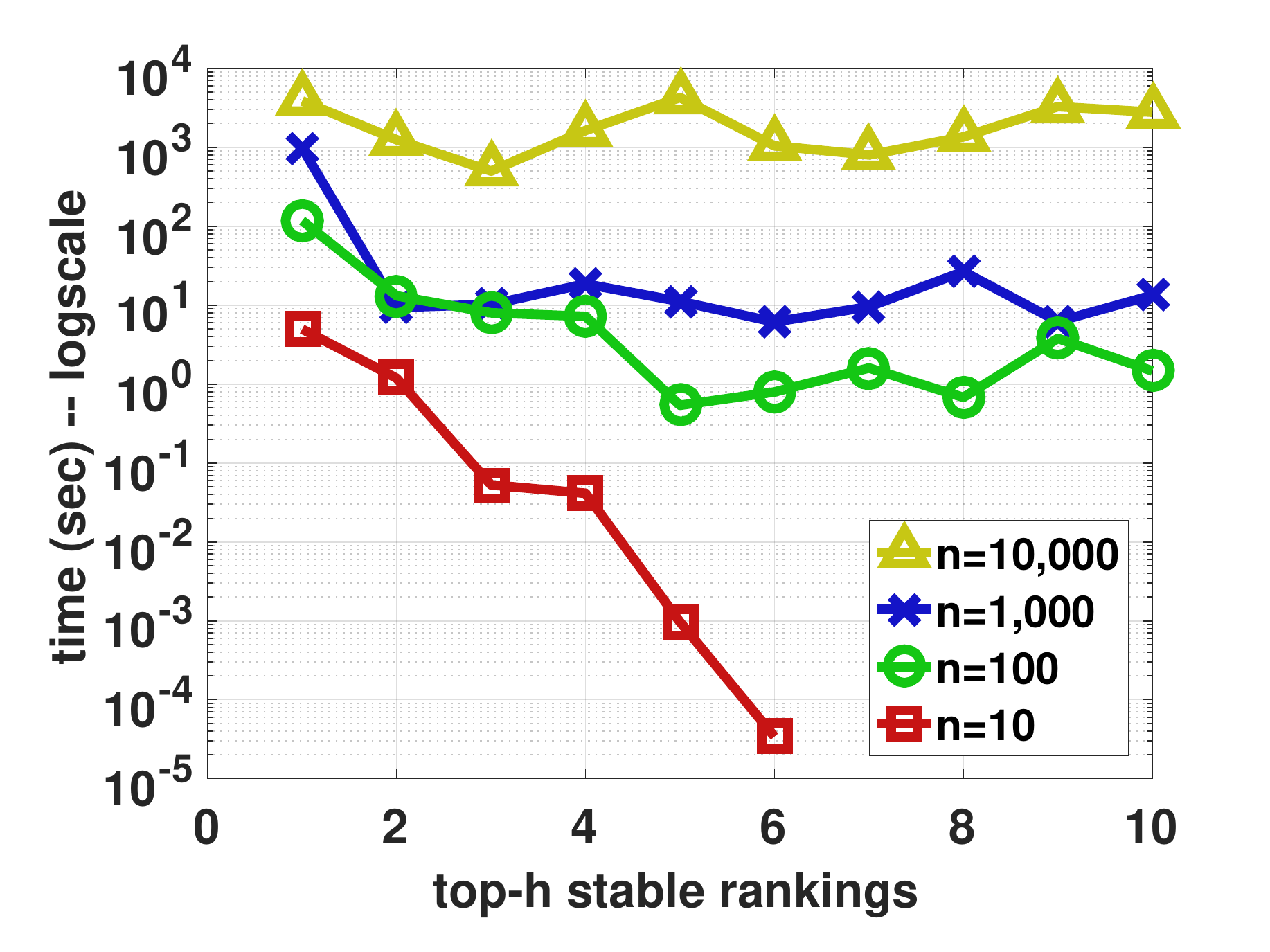}
        \vspace{-8mm}\caption{MD: stable rankings, Impact of dataset size ($n$)}
        \label{fig:BNMDVNt}
    \end{minipage}
    \submit{\vspace{-3mm}}
\end{figure*}

\begin{figure*}[ht]
    \begin{minipage}[t]{0.23\linewidth}
        \centering
        \includegraphics[scale=0.24]{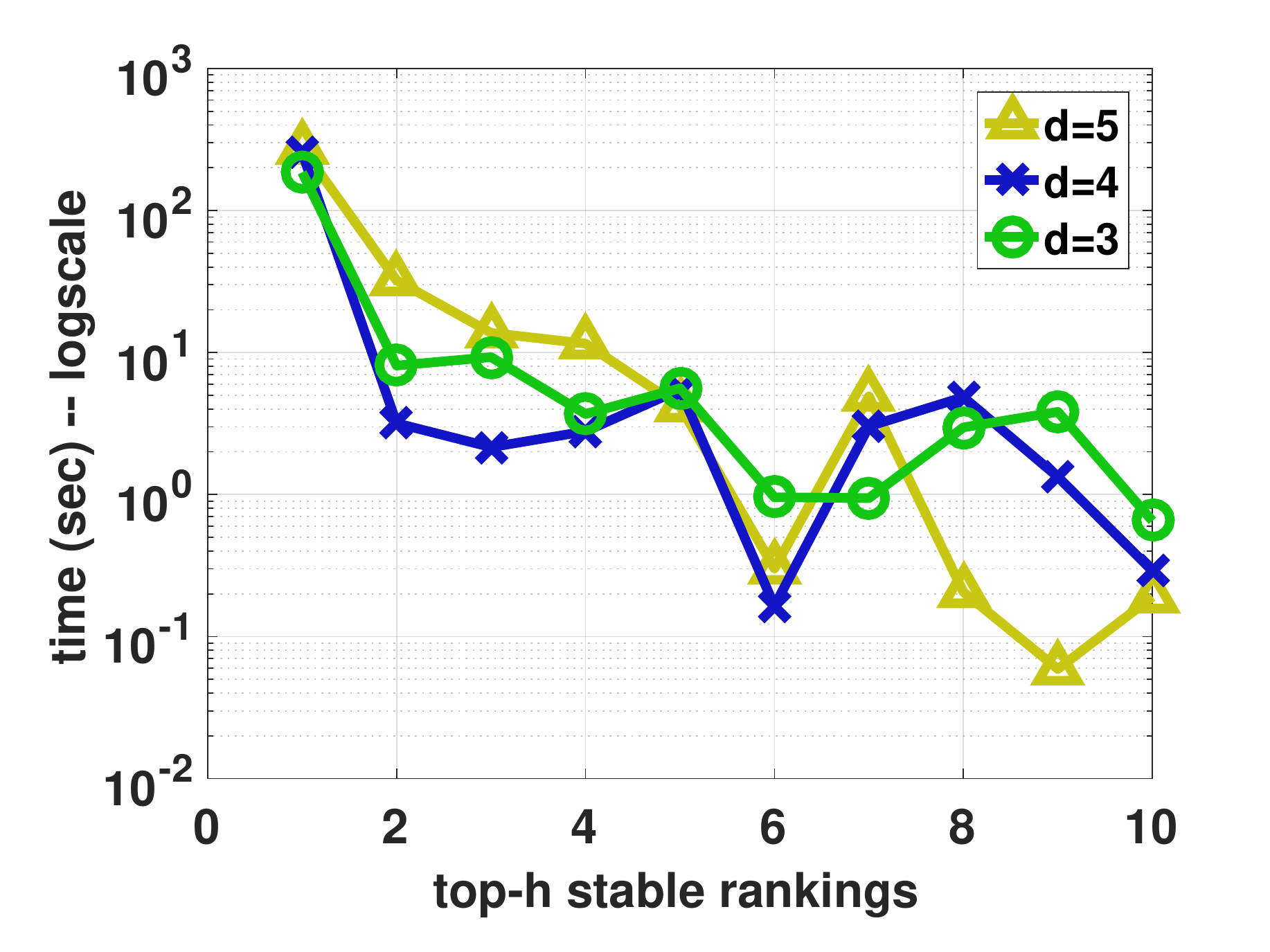}
        \vspace{-8mm}\caption{MD: stable rankings, impact of number of attributes ($d$).}
        \label{fig:BNMDVdt}
    \end{minipage}
    \hspace{1mm}
	\begin{minipage}[t]{0.23\linewidth}
        \centering
        \includegraphics[scale=0.24]{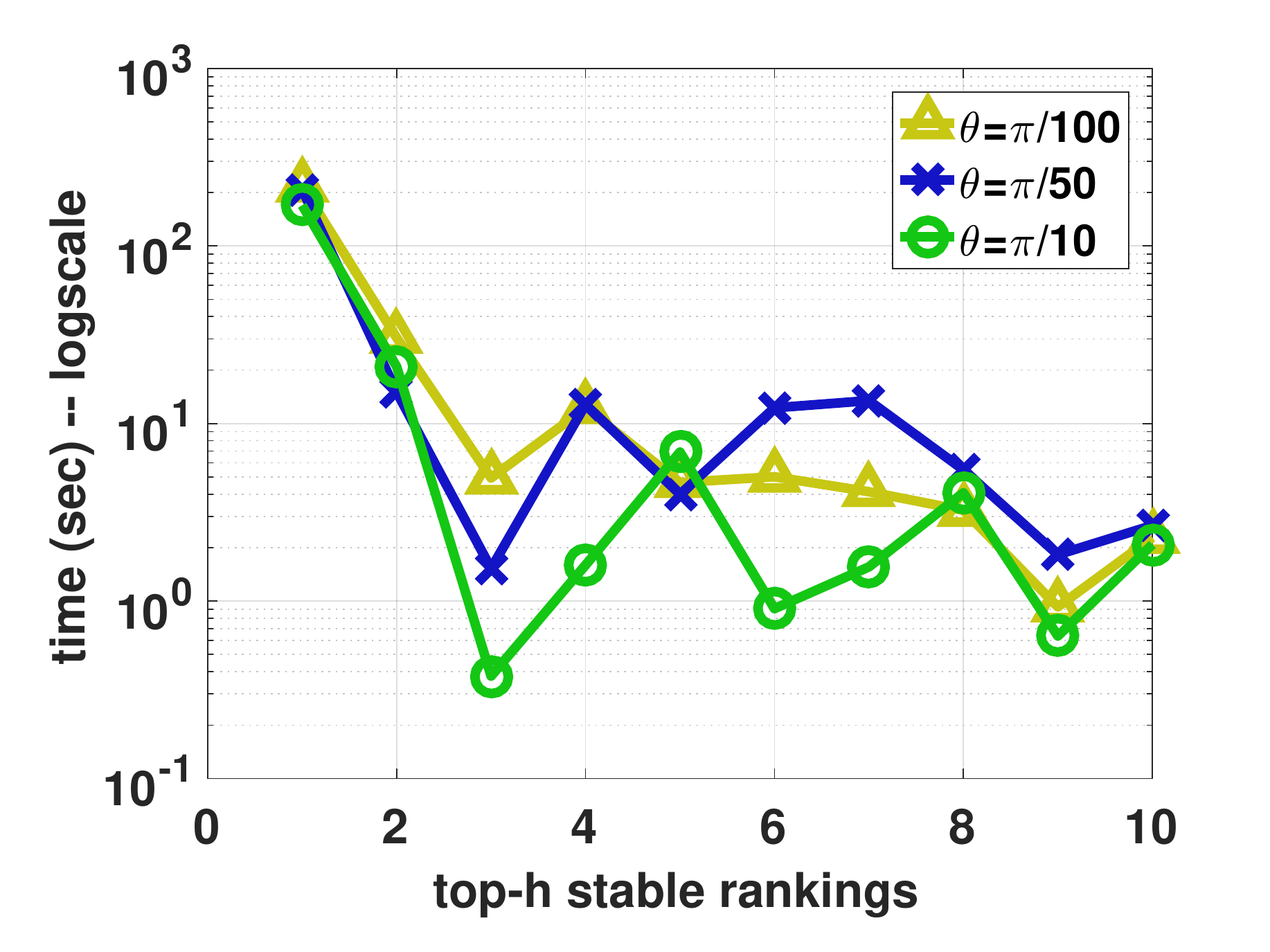}
        \vspace{-8mm}\caption{MD: stable rankings, impact of width of region of interest ($\theta$).}
        \label{fig:BNMDVtheta_t}
    \end{minipage}   
    \hspace{1mm}  
    \begin{minipage}[t]{0.23\linewidth}
        \centering
        \includegraphics[scale=0.24]{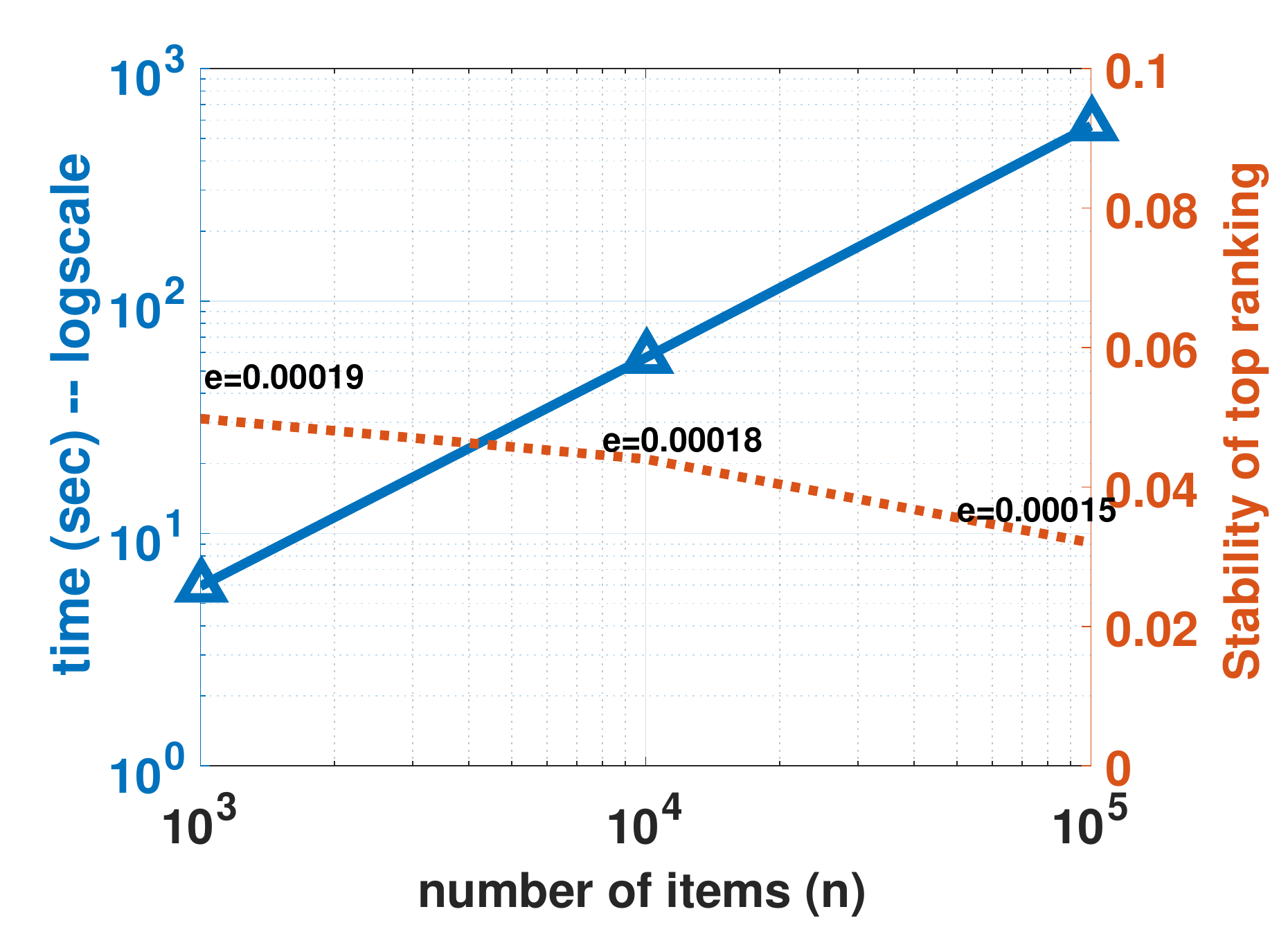}
        \vspace{-8mm}\caption{\getnextr: stable top-$k$ items, impact of dataset size ($n$).}
        \label{fig:BNRandn}
    \end{minipage}
    \hspace{1mm}
    \begin{minipage}[t]{0.23\linewidth}
        \centering
        \includegraphics[scale=0.24]{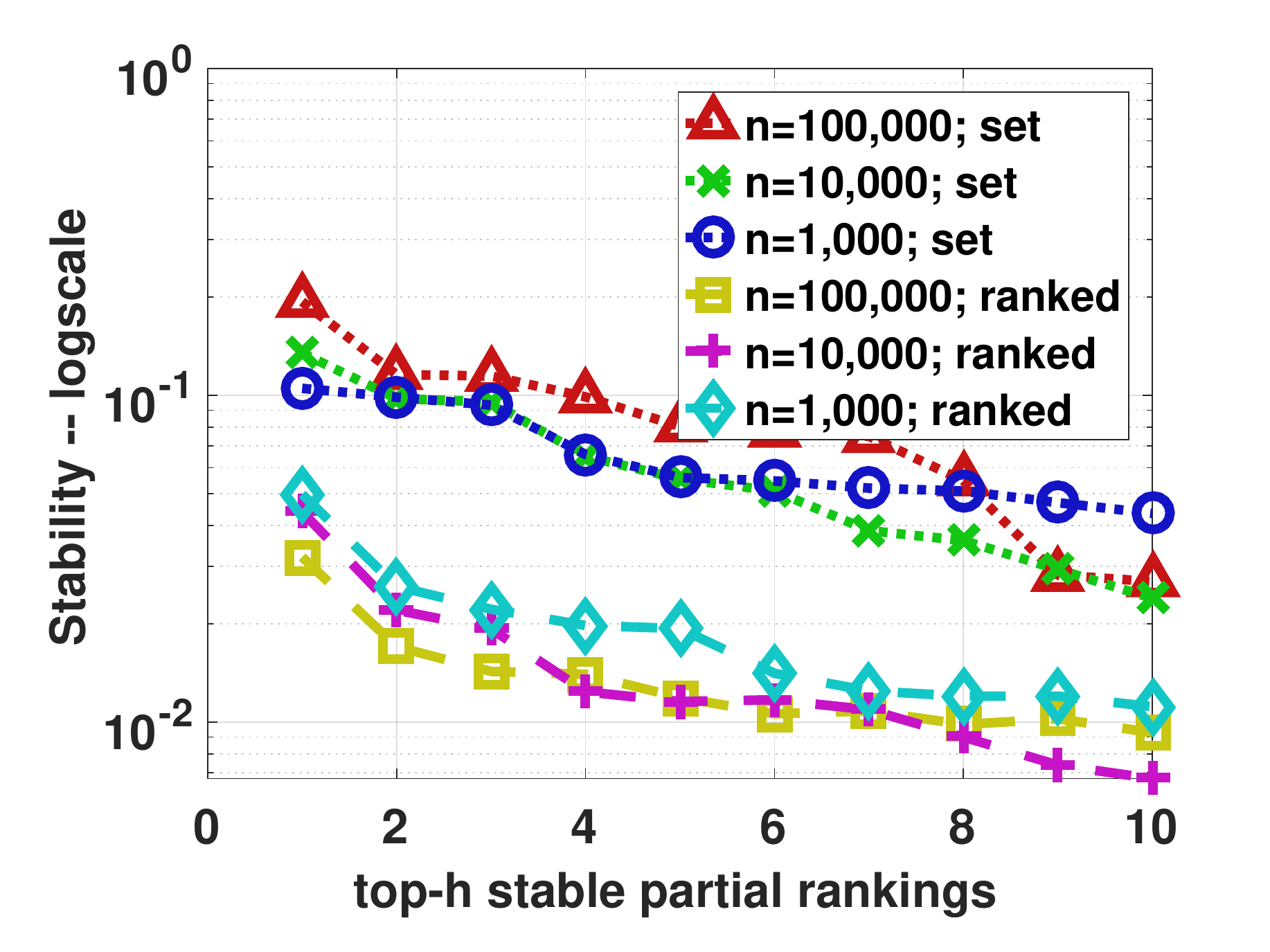}
        \vspace{-8mm}\caption{\getnextr: stable top-$k$ items, impact of dataset size ($n$) on stability.}
        \label{fig:BNRandns}
    \end{minipage}
\vspace{-5mm}
\end{figure*}
\begin{figure*}[ht]
    {\revision
    \begin{minipage}[t]{0.23\linewidth}
        \centering
        \includegraphics[scale=0.24]{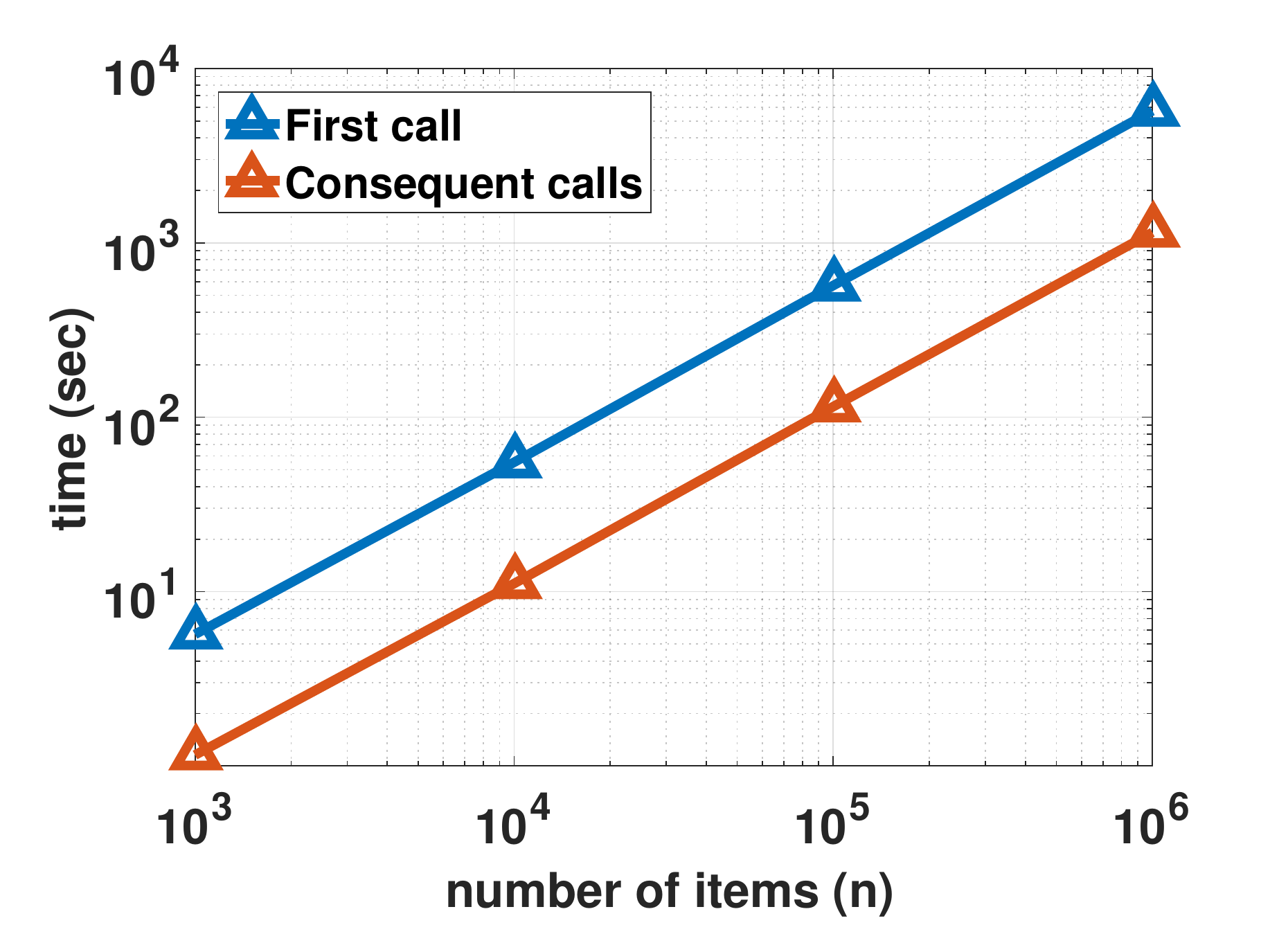}
        \vspace{-8mm}\caption{\revision DoT, \getnextr: stable top-$k$ items, Impact of dataset size ($n$)}
        \label{fig:DoT}
    \end{minipage}
    }
    \hspace{1mm}
    \begin{minipage}[t]{0.23\linewidth}
        \centering
        \includegraphics[scale=0.24]{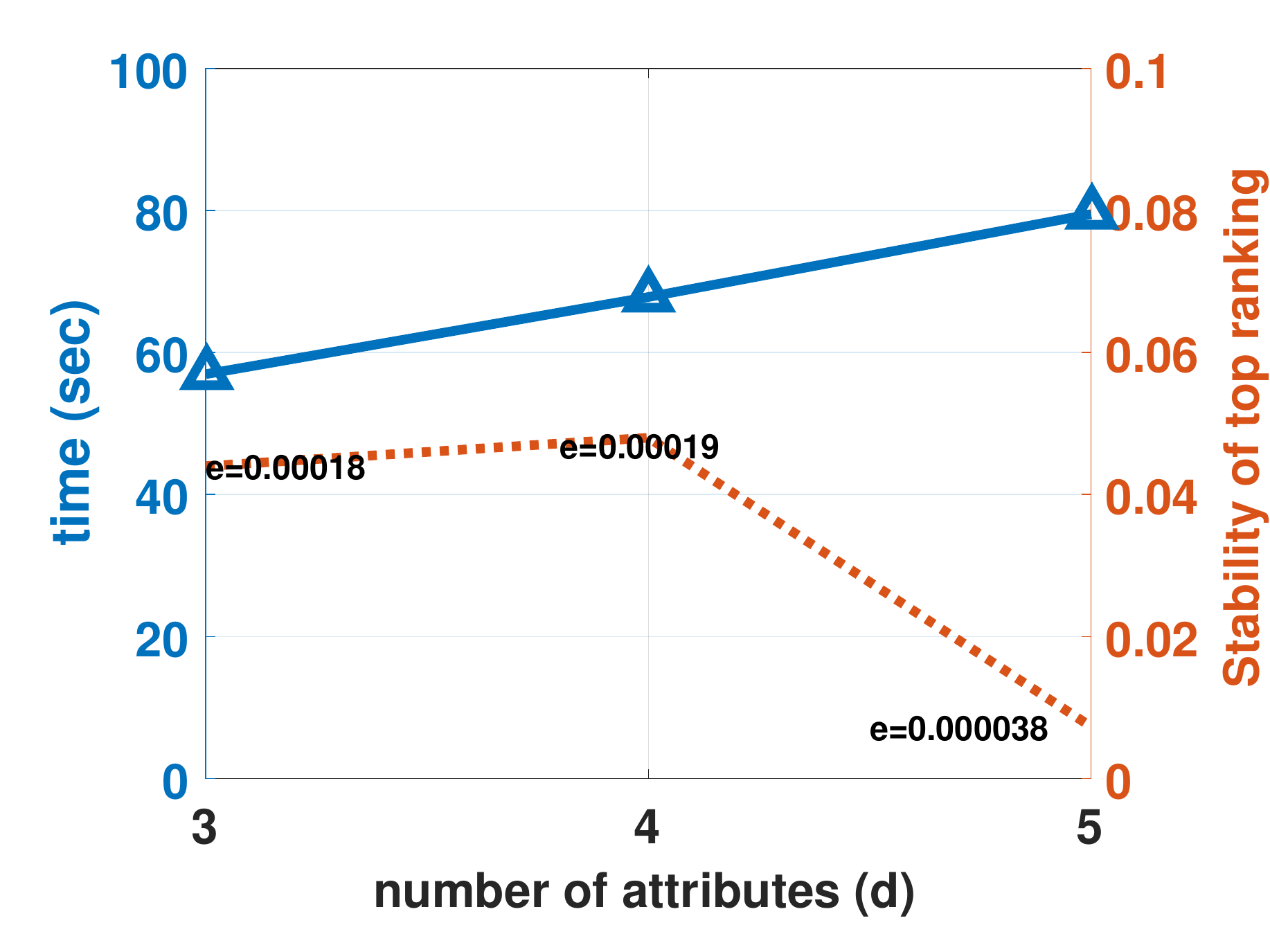}
        \vspace{-8mm}\caption{\getnextr: stable top-$k$ items, Impact of number of attributes ($d$)}
        \label{fig:BNRanddtTrue}
    \end{minipage}
    \hspace{1mm}
    \begin{minipage}[t]{0.23\linewidth}
        \centering
        \includegraphics[scale=0.24]{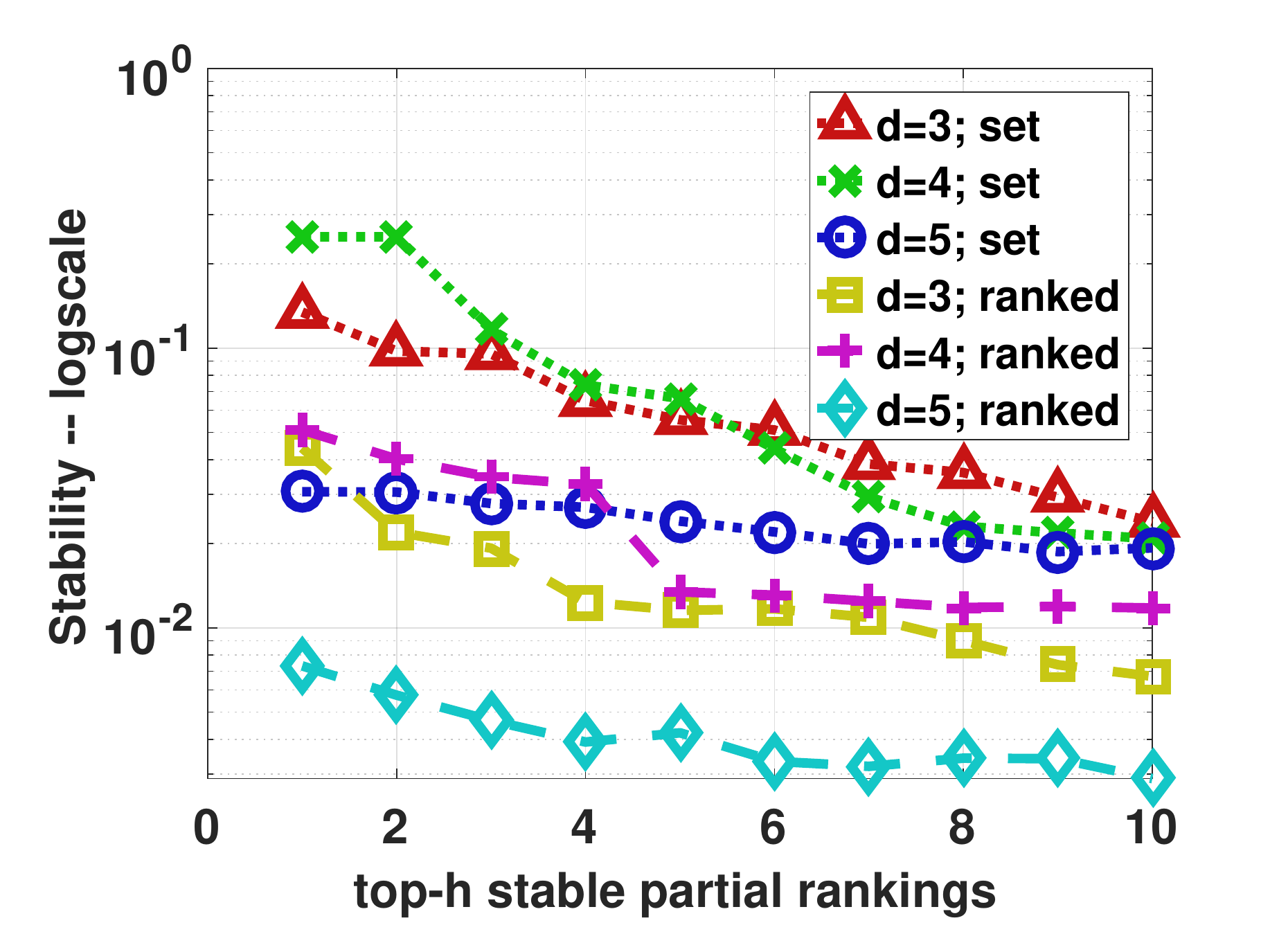}
        \vspace{-8mm}\caption{\getnextr: stable top-$k$ items, Impact of number of attributes ($d$)}
        \label{fig:BNRandds}
    \end{minipage}
    \hspace{1mm}
    \begin{minipage}[t]{0.23\linewidth}
        \centering
        \includegraphics[scale=0.24]{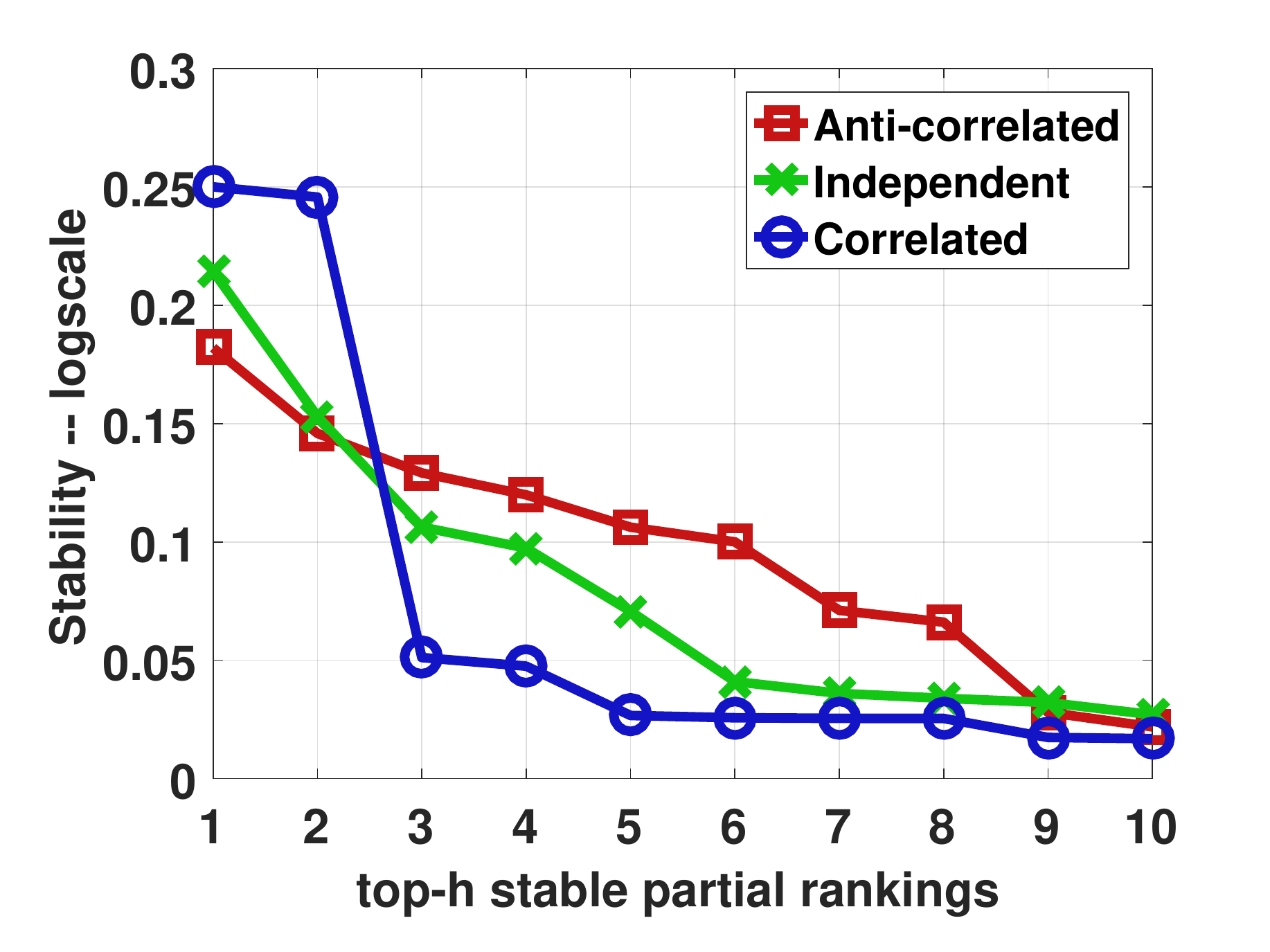}
        \vspace{-8mm}\caption{\revision Synthetic data, \getnextr: stable top-$k$ items, Impact of correlation}
        \label{fig:Cor}
    \end{minipage}
\submit{\vspace{-5mm}}
\end{figure*}

%\clearpage
%\submit{\vspace{-2mm}}
\subsection{Algorithm performance}\label{subsec:expperformance}
To evaluate the efficiency of our algorithms, we use the {\em Blue Nile} dataset, which consists of $116,300$ items over $5$ ranking attributes.  To vary the number of items, we take random samples; to vary the number of dimensions to $d=k<5$ we project the first $k$ attributes.  We equally weight the attributes as the default function. %\gm{Would be nice to have a summary sentence... ``Our findings show that our algorithms are uniformly efficient, even for.. ''}
%After validating our proposal, next we evaluate the performance of the proposed algorithms. To be able to test the performance under different settings, we chose our collection of Blue Nile (BN) catalog that includes $116,300$ diamonds over $5$ ranking attributes. In the experiments, we consider the ranking that equally weights all attribute as the default ranking.

\stitle{2D:}
First we study the impact of $n$, the database size, on the efficiency of \svtd to compute the stability of the default ranking (i.e. $w=\langle 1,1 \rangle$. We vary $n$ from $100$ to $100,000$, measuring both time and the stability of the default ranking (Figure~\ref{fig:bn2dsv}).  As stated in \S~\ref{sec:2d} computing the stability of a ranking in 2D is in $O(n)$. 
We find that the running time increased linearly and was only $0.12$ seconds for the largest data set.  We observe that the stability quickly drops from the order of $10^{-2}$ for $n=100$ to less than $10^{-6}$ for $n=100$K. 
This is because the number of ordering exchanges increase by $n$, leading to many small regions and low stability measures.
% This is because the larger the dataset is the more The number of ordering exchanges between items grows quickly as $n$ grows, leading to many small regions and low stability measures.

Next, we study the performance of \getnexttd under different database sizes. The first \getnext call needs to perform ray sweeping to construct the heap of ranking regions, while subsequent calls just remove the next most stable ranking from the heap. Therefore, in Figure~\ref{fig:BN2D2t}, we separate the first call from subsequent calls. As expected, as the number of items increases, the number of ordering exchanges increases and therefore, the efficiency of the operator drops. Also, subsequent \getnext calls are significantly faster than the first. Still, even for the largest setting (i.e., $n=100$K), the first call to the operator took less than $10$ seconds.

\vspace{3mm}
\stitle{MD:}
Next we study the performance of the stability verification algorithm, \sv, and the MD algorithm, \getnexttb, for producing the stable rankings. We vary the number of items ($n$), number of attributes ($d$), and width of the region of interest ($\theta$). 

First we evaluate stability verification (Figure~\ref{fig:bnmdsvn}). Choosing the default weight vector $w=\langle 1,1,1 \rangle$, while setting $d=3$, we initiate the stability oracle with a set of $1$M samples drawn from the entire function space $\mathcal{U}$ and vary the number of items from $100$ to $10$K.
The stability verification algorithm in MD needs to iterate over the sample points, counting those falling inside the ranking region, described as a set of $O(n)$ constraints. This took less than a minute for $n=10$K.
On the other hand, the stability of the default ranking immediately drops to near zero, even for $100$ items. Compared to 2D, this is due to the increase in the complexity of the function space. 
% Not clear:
%In 2D, the exchanges happen in a single function, whereas in MD, those are the hyperplanes that can create a complex arrangement. 

Next, we evaluate the performance of the \getnexttb operator under varying $n$, $d$, and $\theta$ (the width of $\ar$). The default values are $n=100$, $d=3$ and $\theta=\pi/100$. We use a set of $100$K samples from the region of interest for the measurements.
Figures~\ref{fig:BNMDVNt},~\ref{fig:BNMDVdt}, and~\ref{fig:BNMDVtheta_t} 
show the performance of the \getnexttb algorithm for the top 10 stable rankings for
varying (i) number of items from $n=10$ to $n=10$K,
(ii) number of attributes from $d=3$ to $d=5$, and (iii) width of the region from $\theta=\pi/10$ to $\pi/100$, respectively.

Overall, the running time decreases for subsequent calls. This is because the algorithm initially builds part of the arrangement before it finds the most stable ranking. 
%As the arrangement gets more complete, less time is required for finishing it for the next stable region. 
% cut for space:
%However, this is not a certain rule as many times the algorithm may spend a large time on building the regions that end up not being the output, which reduces its performance. This is confirmed in the zig-zags in Figures~\ref{fig:BNMDVNt},~\ref{fig:BNMDVdt}, and~\ref{fig:BNMDVtheta_t}. 
As shown in Figure~\ref{fig:BNMDVNt}, every \getnext call took up to several thousand seconds for the large setting of $10$K items. That is because of the complexity of the arrangement of $O(n^2)$ ordering exchanges which makes even the focus on the most stable region inefficient. In such complex situations, all the regions are very small and unstable, as too many ordering exchanges pass through a narrow region. 
Nevertheless, in a large setting, it is more reasonable to consider the top-$k$ items rather than the complete list. Our proposal for such settings is the randomized operator.

The next observation is that the running times are similar for different values of $d$  and $\theta$.  While the complexity of the space changes for the $O(n^2)$ ordering exchanges, the search is still done using a fixed set of samples and, using the Partition algorithm, only the subset of points falling into a region are used for constructing the arrangement in it. As a result, the lines in Figures~\ref{fig:BNMDVdt} and~\ref{fig:BNMDVtheta_t} show similar behaviors for different settings.

\stitle{Randomized algorithm:}
As the complexity of arrangement increases, \getnexttb becomes less efficient. On the other hand, when the number of items is large, users may be more interested in top-$k$ items: that is they may focus on the top of the ranked list. In \S~\ref{sec:randomized}, we proposed a Monte-Carlo-based randomized algorithm to handle these cases.
As the last set of experiments, we evaluate the performance of the randomized algorithm under different settings. We look at two models of top-$k$ items, (i) ranked top-$k$ items and (ii) top-$k$ sets. In (i) the user is interested in the orderings among the top-$k$ items, whereas in (ii) the user's interest is in the top-$k$ sets in the ranking lists. We consider a budget of $5,000$ samples (from the region of interest) for the first \getnextr call and $1,000$ for subsequent calls. The default values are number of items $n=10,000$, number of attributes $d=3$, the width of the region of interest $\theta=\pi/50$, and $k=10$.

Figures~\ref{fig:BNRandn} and~\ref{fig:BNRanddtTrue} show the running time of the first \getnextr call and the stability of the most stable ranking for varying the number of items from $1$K to $100$K, and the number of attributes from $3$ to $5$, while considering the  ranked top-$k$ items (the running times are similar for top-$k$ sets). The plots verify the scalability of the randomized algorithm for large settings, as it took a few minutes for $100$K items while the running times for $d=3$, $4$, and $5$ are similar.
Looking at right y-axis in Figure~\ref{fig:BNRandn}, despite the increase in the number of items from $1$K to $100$K, the stability of the most stable ranked top-$k$ did not noticeably decrease. This confirms the feasibility of considering the top-$k$ items for the large settings.

{\revision
Also, to evaluate the scalablity of our proposal for a very large setting, we use the DoT dataset and set the budget to 5K samples for the first \getnextr call and 1K for subsequent calls.
Similar to the previous experiment, we set $d$ to 3, the width of the region of interest to $\theta=\pi/50$, and consider top-$k$ sets for $k=10$, while varying the number of items up to one million.
Figure~\ref{fig:DoT} shows the performance of the algorithm for each setting.
As expected the run-time linearly increases with the number of items, while it takes on the order of an hour for the largest setting.
Note that the number of samples plays an important role in the performance of the algorithm: the higher the sampling budget, the more accurate the results, and the run-time is also higher.
This can be confirmed by comparing the lines for the first call (5000 samples) versus the consequent calls (1000 samples) of the primitive.
}

Figures~\ref{fig:BNRandns} and~\ref{fig:BNRandds} show the 
stability of the top-10 stable rankings for both ranked top-$k$ items and the top-$k$ sets. In both figures, the top-$k$ sets are more stable than the top-$k$ rankings. The reason is that the top-$k$ sets do not consider the ordering between the items, and thus the variety of possible outcomes is reduced compared to top-$k$ rankings.  An observation in Figure~\ref{fig:BNRandns} is the similarity of the stability distributions for different numbers of items, which, again, confirms the feasibility of considering the top-$k$ items for large settings.
In Figure~\ref{fig:BNRandds}, as expected, the number of attributes have a negative correlation with the stability of the top-$k$ items.

{\revision
\stitle{The effect of attribute correlation}
%\gm{can we get away with just the grey below and delete the orange?}
Finally, we study the effect of attribute correlation on the stability of the rankings. To do so, we use the synthetic datasets (independent, correlated, anti-correlated), each containing 10K items and $d=3$ attributes. Using a budget of 5000 samples for evaluation, we set the width of the region of interest to $\theta=\pi/50$, and $k$ to 10.  Figure~\ref{fig:Cor} shows the stability of the most stable top-$k$ sets.
We find that strong attribute correlation leads to a greater skew in the distribution of stable regions: the most stable regions have higher stability.  This is illustrated in Figure~\ref{fig:Cor} where we see that the correlated dataset results in the greatest maximum stability but also has the steepest slope as we descend from the most-stable to the 10th-most-stable top-$k$ set.  Accordingly, the independent dataset has a slightly lower stability most-stable region with a reduced slope, and the anti-correlated dataset displays the least skew in the stabilities.  This is expected since, in a dataset with highly correlated attributes, we are more likely to find items in dominance relationships with one another (i.e. the attributes of item X are greater than those of Y in all, or nearly all, dimensions). In that case, those items are almost always ranked in one way, reducing the number of feasible rankings, and resulting in a large number of relatively unstable rankings and a few highly stable rankings.
}

\submit{\vspace{-4mm}}
\section{Related Work}\label{sec:related}

{\revision Given a dataset with multiple attributes, ordering the items and choosing a subset to support decision making is challenging.  This has motivated a rich body of work on ranking~\cite{geerts2004relational,chaudhuri2009keyword,agrawal2006context,asudeh2016query,agarwal2000efficient}, top-$k$~\cite{fagin2003,ihab}, and skyline queries~\cite{skylineoperator,nanongkai2010,asudeh2017,DBLP:conf/icde/StoyanovichMR10}.  Broadly speaking, ranking and top-$k$ are employed when a user's preference in the form of a scoring function is available, while skyline queries are used when only the scoring attributes are known, but the scoring function is left unspecified.   To the best of our knowledge, no existing work considers a {\em range of acceptable scoring functions}, and discovers {\em stable rankings} within that range. In our work, the region of interest can be as narrow as a single scoring function, or as wide as the entire space of scoring functions.

The work on ranking and top-$k$ includes managing datasets with uncertainty and noise with respect to item existence or their attribute values~\cite{chaudhuri2004probabilistic, li2009unified,li2010ranking}, and using human computation to fill in missing information~\cite{pareto}.  While the work on probabilistic rankings considers uncertainty in the data, in our work we focus on uncertainty in the scoring function that reflects a user's preferences.}
\eat{
%
%Ranking has also been considered in spatial databases~\cite{hjaltason1995ranking} and in Web databases~\cite{asudeh2016query,qr2}.
%\julia{Is the Web databases work relevant?  I moved spatial databases lower, want to remove this sentence, OK?}
%}
%
%The work for finding outstanding items of a database with multiple attributes has two main directions: (i) when there exists a specific ranking function reflecting the user's preference, and (ii) in the absence of such preferences. Our proposal stands in the middle of these two as it is not limited to a specific function for reflecting user's preference, yet the preference is specified as a region of interest. The span of the region of interest can be as narrow as a single function or as wide as the whole function space.
%
%The first direction includes ranking and its varieties~\cite{geerts2004relational, li2005ranksql, chaudhuri2009keyword, agrawal2006context,agarwal2000efficient}.
%Ranking in spatial databases~\cite{hjaltason1995ranking} and web databases~\cite{asudeh2016query,qr2} are a few of such varieties.
%
}
There has been extensive effort on efficient processing of top-$k$ queries~\cite{ihab}: threshold-based algorithms~\cite{fagin2003} consider parsing presorted lists along each attribute, view-based approaches\cite{PREFER, das2006views} utilize presorted lists that are built on various angles of the function space, and indexing-based methods~\cite{chang2000onion} create layers of extreme points for efficient processing of queries. Ranking has also been considered in spatial databases~\cite{hjaltason1995ranking}.
%Ranking on probabilistic databases and noisy environments has also been the focus of works such as~\cite{li2010ranking, chaudhuri2004probabilistic, li2009unified}. The assumption in the probabilistic databases is uncertainty on the existence of items or in the attribute values. While probabilistic databases consider the uncertainty on the data we consider it on the function reflecting the user's preference.
\techrep{Recent work also considers ranking under fairness and diversity constraints~\cite{fairranking,CelisSV17,DBLP:conf/edbt/StoyanovichYJ18,ZehlikeB0HMB17}. For instance, given a weight vector for ranking,~\cite{fairranking} looks for a similar vector that provides fairness in its output.}

In the absence of a scoring function, the effort is on finding the set of potentially high-scoring representatives such as the  skyline~\cite{farhad, skylineoperator, asudeh2016discovering}, also known as the pareto-optimal set~\cite{pareto} --- the set of non-dominated items. %Since the size of the skyine can be large~\cite{asudeh2017}, works such as~\cite{asudeh2017,rrr,nanongkai2010,chester2014} look for small subsets of the skyline that minimize a measure of regret for using these sets for finding the outstanding items. 
{\revision Since the number of skyline points can be large~\cite{asudeh2017}, works such as~\cite{asudeh2017,rrr,nanongkai2010,chester2014,lin2007selecting,su2010top} look for smaller representative subsets. For example, \cite{lin2007selecting} finds a subset of $k$ skyline points that dominate the maximum number of points, while \cite{su2010top} picks the top-$k$ combinatorial skyline based on an importance ordering of the attributes. Also, extensive recent work~\cite{nanongkai2010,chester2014} aims to find a small subset of the skyline that minimizes some notion of regret.
A key difference between the stable top-$k$ set and these proposals is that a top-$k$ set is not necessarily a subset of the skyline.  %For example, the goal of the work on $k$-regret~\cite{chester2014} is to return $k$ points from the skyline that minimize, over all scoring functions, the score difference between the $k$-th selected point and the $k$-th best skyline point.  The key difference between our proposed methods and skyline-based methods is that we do not necessarily
}

In this paper, we used notions such as half-space, duality, and arrangement from combinatorial geometry that are explained in detail in~\cite{edelsbrunner,de2008computational}. Arrangement of hyperplanes, its complexity, construction, and applications are studied in \cite{orlik2013arrangements,grunbaum2003arrangements, schechtman1991arrangements, agarwal2000arrangements, edelsbrunner}.
{\revision
Geometric aspects of top-$k$ queries are presented in a recent tutorial~\cite{mouratidis2017geometric}.
}

%In the absence of user preferences, the effort in (ii) is on finding representatives such as skyline, or a subset of it. Majority of work in this direction is on efficient processing of skyline queries ~\cite{farhad, skylineoperator, asudeh2016discovering, pareto}. Since the size of such representatives can be large~\cite{asudeh2017}, works such as~\cite{asudeh2017,rrr,nanongkai2010,chester2014,lin2007selecting,su2010top} look for small subsets of data that minimize a measure of regret for using these sets for finding the outstanding items. For example, \cite{lin2007selecting} finds a set of $k$ skyline points that dominate the maximum number of items in the dataset and \cite{su2010top} defines the skyline notion on the combinations of the items and picks the top-$k$ combinatorial skyline based on a given ordering of the importance of the attributes. Among the set of work in this direction, there has been an extensive recent attention on regret-minimizing representatives~\cite{nanongkai2010}. For a given linear function, the regret of a subset, compared to the complete skyline is the difference between the score of the top item in the subset, compared to the one for actual top item. The notion of regret has been generalized by \cite{chester2014} to $k$-regret, by computing the regret of the top item of the subset versus the actual $k$-th top item.
%\submit{\vspace{-3mm}}
\section{Final Remarks}\label{sec:conclusion}
In this paper, we studied the problem of obtaining stable rankings for databases with multiple attributes when the rankings are produced through a goodness score for each item as a weighted sum of its attribute values.
A stable ranking is more meaningful than one susceptible to small changes in scoring weights, and hence engenders greater trust. 
We developed a framework that gives consumers the facility to assess the stability of a ranking and  enables producers to discover stable rankings.
We devised an unbiased function sampler that enables Monte-Carlo methods. We designed a randomized algorithm for the problem that works both for the complete ranking of items, as well as the top-$k$ partial rankings.
The experiments on three real datasets demonstrated the validity of our proposal.
{\revision
Our current definition of stability considers two rankings to be different if they differ in one pair of items.
An alternative is to allow minor changes in the ranking.
Similarly, we note that a weight vector is a single point in a stable region. It would be nice, for some applications, to characterize the boundaries of the stable region. We will consider these in future work.
}

% \section{Acknowledgments}
% \julia{I suggest to say: ``This work was supported in part by NSF Grants No. 1741022, 1741254 and 1741047.''  We can just make this a title note, rather than a separate section, to save space and make this acknowledgement more prominent.}
% %The work of Abolfazl Asudeh and H. V. Jagadish was supported by XXXX, Gerome Miklau was supported by NSF Grant No. 1741254, and Julia Stoyanovich's research was supported in part by NSF Grant No. 1741047.
\submit{\vfill\null}
\bibliographystyle{unsrt}
\bibliography{ref}
\techrep{
\appendix
\section{Coordinate system rotation}~\label{app:rotation}
In order to generate random functions in $\ar$, Algorithm~\ref{alg:sampui} models the region of interest as a $d$-spherical cap around the $d$-th axis.
Therefore, after picking a random vector $w$, it needs to {\em rotate} the space such that the $d$-th axis falls on the input vector $\rho$. This moves the drawn sample to the region of interest.
Such rotation is done via a so-called ``transformation matrix''~\cite{baker2012matrix}. 
Having such a $d\times d$ rotation matrix $M$, the result of rotation on a vector $w$ is a vector $w'$, generated as $w' = Mw$.
For example in $\mathbb{R}^2$, the following matrix rotates the coordinate system counterclockwise to an angle of $\theta$:
 \[
   M=
  \left[ {\begin{array}{cc}
   \cos \theta & -\sin\theta \\
   \sin\theta & \cos\theta \\
  \end{array} } \right]
\]

We use this matrix for deriving the rotation matrix we are looking for.
The idea is that instead of applying the the rotation at once, we can do the rotation on axes separately. For example, for $\mathbb{R}^3$, we first can fix the z-axis and do the rotation on the x-y plane and then fix the y-axis and do the rotation on the x-z plane.

The $d$ by $d$ matrix $M_i$, specified in Equation~\ref{eq:mi}, rotates the coordinate system on the $x_1$-$x_{i+1}$ plane counterclockwise to an angle of $\rho_i$. All the values in $M$ except the diameter, $M[1,i+1]$, and $M[i+1,1]$ are zero. Also, all the values on the diameter, except $M[1,1$ and $M[i+1,i+1]$ are one.

\begin{align}\label{eq:mi}
M_i = 
\begin{blockarray}{ccccccc}
1 & 2& \cdots & i+1&\cdots& d \\
\begin{block}{(cccccc)c}
  \cos\rho_i & 0 & \cdots & -\sin\rho_i &\cdots & 0& 1 \\
  0 & 1 & \cdots & 0 &\cdots & 0 & 2 \\
  \vdots & \vdots & \ddots & \vdots &\vdots & \vdots & \vdots \\
  \sin\rho_i & 0 & \cdots & \cos\rho_i &\cdots & 0 & i+1 \\
  \vdots & \vdots & \vdots & \vdots &\ddots & \cdots & \vdots \\
  0 & 0 & \cdots & 0 &\cdots & 1 & d \\
\end{block}
\end{blockarray}
 \end{align}

The remaining point is that in order to rotate the $d$-th axis to $\rho$, on all $x_1$-$x_{i+1}$ planes except the last one the rotations are counterclockwise, while for the $x_1$-$x_d$ plane the rotation is clockwise. We change $\rho_{d-1}$ to $(\pi/2 - \rho_{d-1})$ to make the last rotation also counterclockwise.
Algorithm~\ref{alg:rotate} shows the pseudocode of the function {\it Rotate}, used in Algorithm~\ref{alg:sampui}. Algorithm~\ref{alg:rotate} is in $O(d^3)$, which is negligible for small values of $d$.

\vspace{3mm}
\begin{algorithm}[!h]
\caption{{\bf Rotate} \\
		 {\bf Input:} vector $w$ and ray $\rho$ (in form of $d-1$ angles)\\
		 {\bf Output:} vector $w'$
		}
\begin{algorithmic}[1]
\label{alg:rotate}
	\STATE $w' = w$
    \STATE $\rho_{d-1} = \pi/2 - \rho_{d-1}$
    \FOR{$i=d-1$ down to $1$}
    	\STATE $M_i = $ Equation~\ref{eq:mi}
    	\STATE $w' = M_i\times w'$
    \ENDFOR
    \STATE {\bf return} $w'$
\end{algorithmic}
\end{algorithm}
}
\end{document}